\documentclass[aps,pra,amsmath,notitlepage,nofootinbib,twocolumn,superscriptaddress,noeprint]{revtex4-2}
\usepackage{array}
\usepackage{ragged2e}
\newcolumntype{M}[1]{>{\Centering\arraybackslash}p{#1}}
\usepackage{amsmath,amsfonts,amssymb,amsthm}
\usepackage{mathtools}
\usepackage[pdftex]{graphicx}
\usepackage[usenames, dvipsnames, svgnames, table]{xcolor}
\usepackage[colorlinks=true, citecolor=Blue, linkcolor=BrickRed, urlcolor=Brown]{hyperref}
\usepackage{txfonts}
\usepackage{mathrsfs}
\usepackage{bm}
\usepackage{multirow}
\usepackage{braket}
\usepackage{import}
\usepackage[ruled, noline, noend]{algorithm2e}

\usepackage{comment}

\newtheorem{theorem}{Theorem}
\newtheorem{lemma}{Lemma}
\newtheorem*{lemmaRestate}{Lemma}
\newtheorem*{theoremRestate}{Theorem}

\newtheorem{Note}{Note}

\newtheorem{corollary}{Corollary}
\theoremstyle{definition}
\newtheorem{definition}{Definition}
\newcommand{\be}{\begin{equation}}
\newcommand{\ee}{\end{equation}}
\newcommand{\ben}{\begin{eqnarray}}
\newcommand{\een}{\end{eqnarray}}
\newcommand{\bes}{\begin{subequations}}
\newcommand{\ees}{\end{subequations}}
\newcommand{\bF}{\begin{figure}}
\newcommand{\eF}{\end{figure}}

\newcommand{\orcid}[1]{\href{https://orcid.org/#1}{\includegraphics[height = 2ex]{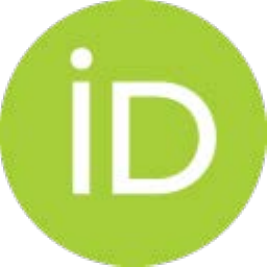}}}

\begin{document}

\title{Quantum Accreditation with Non-Clifford Two-qubit Gates}

\author{Andrew Jackson \orcid{0000-0002-5981-1604}}
\email{Andrew@ukjackson.com}
\affiliation{
School of Informatics, University of Edinburgh, Edinburgh, EH8 9AB, United Kingdom}
\affiliation{Department of Physics, University of Warwick, Coventry CV4 7AL, United Kingdom}
  
\author{Theodoros Kapourniotis \orcid{0000-0002-6885-5916}}
\thanks{The majority of this work was completed during the author’s affiliation with the Department of Physics, University of Warwick}
\email{Theodoros.Kapourniotis@stfc.ac.uk}
\affiliation{National Quantum Computing Centre, Rutherford Appleton Laboratory,
Harwell Campus,
Didcot,
OX11 0QX, United Kingdom}

\author{Animesh Datta \orcid{0000-0003-4021-4655}}
\email{Animesh.Datta@warwick.ac.uk}
\affiliation{Department of Physics, University of Warwick, Coventry CV4 7AL, United Kingdom}

%%%%%% Abstract %%%%%%

\begin{abstract}
	We develop a family of quantum accreditation protocols for quantum circuits with non-Clifford two-qubit gates.
	The latter includes families of gates such as the fSim and XY families of gates, native to existing hardwares.
    We provide practical and scalable protocols that upper-bound the total variation distance between the probability distributions of error-free and erroneous quantum computations.
    We also establish the robustness of our protocols to small perturbations and generalize Pauli twirling to non-Pauli single-qubit bases, which may be of independent interest.
\end{abstract}

\date{\today}
\maketitle

%%%%%% MAIN TEXT %%%%%%
\section{Introduction}
\label{IntroSec}
The respective limitations of classical and quantum computers are captured by distinct hierarchies of computational complexity. Thus, some problems that are intractable on classical computers are expected to be tractable on quantum computers~\cite{Harrow_2017}, most notably in cryptography~\cite{365700}
and simulating quantum systems in physics, chemistry, and material science~\cite{doi:10.1126/science.273.5278.1073, doi:10.1126/science.1113479, Bauer_2020, Chowdhury_2021, Jackson_2023}.
However, before they can achieve their full potential, quantum computers must pass through the NISQ (Noisy Intermediate-Scale Quantum) era~\cite{Preskill_2018}, when they will be small and subject to un-corrected~\cite{Devitt_2013} potentially computation-destroying noise. Even fault-tolerant quantum computers will not be noise- and error-free. Hence any fruitful utility of quantum computers relies on quantifying the impact of noise on quantum computers using real quantum hardware. This typically relies on measuring, managing, or mitigating the noise~\cite{Kandala2019, Ritter_2019, Ollitrault_2020}.  

One possible solution to this may be to characterize the hardware itself and the noise it experiences~\cite{lall2025reviewcollectionmetricsbenchmarks}.
Such characterizations are typically grouped in a family of protocols known as benchmarking~\cite{PRXQuantum.6.030202}. 
Benchmarking obtains a measure of how well a given hardware functions by running a set of circuits and inferring a performance metric \emph{for the hardware} from their outputs.
However, examining the performance of \emph{only} the hardware -- in general -- leaves open the question of determining if the \emph{specific} computation we are interested in is even close to correctly implemented unanswered.

Quantum accreditation answers to the above question. It is an experimental procedure to obtain an upper bound on the difference between the ideal (i.e., without error) and actual (i.e., with error possibly occurring) distributions generated by the measurements in a circuit execution. The circuits executed as part of an accreditation protocol always acts on the same number of qubits as the input circuit to be accredited, and has the same the depth. 

Quantum accreditation~\cite{Ferracin_2019} combines the best of quantum cryptographic protocols for verification~\cite{Gheorghiu2018} with the reality of quantum hardware. The former included concepts such as the quantum one-time-pad, and target and trap circuits. The latter is that single-qubit gates are typically the best component in quantum hardware. 
It is thus the first and only-known NISQ-practical verification protocol. It has been experimentally validated~\cite{Ferracin_2021, Liu2026}, showing that its assumptions held in practice.
Accreditation can be used to mitigate the error in computations~\cite{Mezher_2022}.
Accreditation was extended to work in tandem with error correction while introducing twirling for non-Clifford single-qubit gates~\cite{mills2025logicalaccreditationframeworkefficient}.
Finally, on the digital front, accreditation was returned to the adversarial model, which is very common in cryptographic verification protocols~\cite{jackson2025accreditationlimitedadversarialnoise,Ferracin_2019}.
Finally, quantum accreditation has also enabled the first verification of analogue quantum simulations~\cite{jackson2023accreditation, jackson2025improvedaccreditationanaloguequantum}.

Other protocols such as mirror circuit fidelity estimation~\cite{Proctor2022cep} claim to quantify the correctness of specific computations on quantum hardware. However, the fidelity of the circuit implementation does not quantify the correctness of the 
results obtained as does the variation distance used in quantum accreditation.

All existing accreditation protocols for digital/gate-based quantum computing require that all two-qubit gates be Clifford~\cite{Gottesman_1998}. Incidentally, mirror circuit fidelity estimation also requires the implementation of Clifford self-inverse two-qubit gates. Experimentally, of the numerous gate sets used in hardware~\cite{devereux2025quantumalgorithmssolvingdriftdiffusion}, most have two-qubit gates that are not Clifford. 
Instances include the fSim gate~\cite{Arute2019, PhysRevA.110.022608, Tsoukalas:2025ora, m5z3-jgk8, Cirq_Documentation_2025}, those based on specific parameter values like the $\sqrt{\text{iSWAP}}$ gate~\cite{Rasmussen_2020, Picard2024, m5z3-jgk8, q9sd-rfp6, 68m1-mjzy} or the Sycamore gate~\cite{Cirq_Documentation_2025a}, and others use controlled phase gates such as the Jaksch Gate~\cite{Jaksch_2000}.
Unfortunately, recompiling a quantum circuit using a gateset with Clifford two-qubit gates can be computationally expensive and increase the circuit depth factors of up to four. In the NISQ era, this can easily make computation inexecutable.

Twirling and accrediting circuits with two-qubit non-Clifford gates would thus be welcome for establishing the correctness and utility of NISQ computers. Additionally, the previously mentioned extensions of accreditation to the partial and fully fault-tolerant quantum computation, as well as the benchmarking of gatesets containing non-Clifford two-qubit gates~\cite{PhysRevA.92.060302, Cross2016}, may also benefit from twirling of and accrediting of circuits with two-qubit non-Clifford gates. 

Thus motivated, this paper develops quantum accreditation for two families of non-Clifford two-qubit gates: first, $\hat{\tau}$-decomposible gates (Def.~\ref{tauDecompDef}) and then XY-decomposible gates (Def.~\ref{def:XYinteractGate}). In each case twirling error occurring in such two-qubit gates is examined, and then the use of that family of two-qubit gate in accreditation -- similar to that in Ref.~\cite{Ferracin_2019} -- is demonstrated, using the immediately preceding twirling methods. The key results are presented in Table~\ref{TableOfResults}.
XY-decomposible gates are widely discussed and used~\cite{PhysRevA.67.032301, Abrams2020}. Although we could locate $\hat{\tau}$-decomposible gates being native to hardware, we note its potential use in tandem with probabalistic error correction~\cite{Chen:2025lgg}.

We begin by defining an expanded family of quantum accreditation protocols (QAPs). 
We define QAP as an algorithm applied on a triplet -- of 
a target computation, a trap computation, and an error model. Any QAP then provides an upper bound on the total variation distance between an ideal computation and its noisy, real-world implementation (Theorem~\ref{definitionWorksTheorem}).
In the course of presenting our results, we also generalize twirling CPTP error to stochastic error (in Appendix~\ref{TwirlingTauDeets}), following the approach of Ref.~\cite{Cai_2019}, so we can twirl error in any basis of a quantum device's Hilbert space or some subspaces thereof. 
We refer to this as generalized twirling and it comes at the cost -- in some cases -- of restricting the CPTP error that can be twirled.
\begin{table}
\centering
\includegraphics[width=\columnwidth]{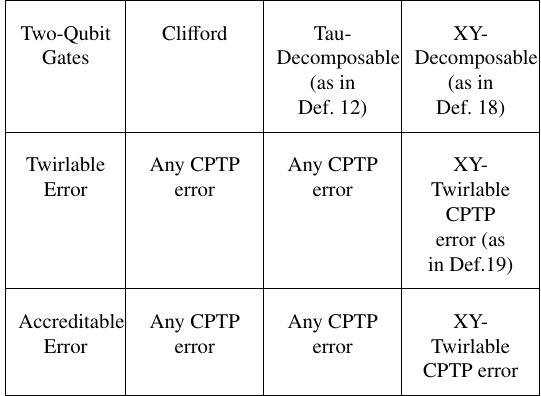}
\caption{Summary of our results on twirling and accrediting. All results assume the error model in Sec.~\ref{Sec:ErrorModel}. 
All CPTP maps act on both the system and its environment. By twirlable, we mean the error for which we present a method of effectively reducing it to stochastic error from a known finite set.
By accreditable, we mean error for which we present a protocol allowing for circuits experiencing that kind of error to be accredited, as in Def.~\ref{accreditationFormalDEf}. Formal formal definitions are given in Appendix~\ref{appendix:defsOFTwirlableAndAccreditable}. With an additional assumption (Sec.~\ref{subsec:ImprovedXYTwirling}), twirlable and accreditable errors of XY-Decomposable gates can both be upgraded to any CPTP error. }
\label{TableOfResults}
\end{table}

Additionally, we -- in Sec.~\ref{sec:WeaklyPerturb} -- present an enhancement and simplification of the argument for the robustness of accreditation protocols, first presented in Ref.~\cite{Ferracin_2021}. Now, any assumption a QAP depends on can be violated provided the resulting CPTP maps remain close (in terms of any sub-multiplicative norm bounded by one) to CPTP maps that do not violate the same assumption.
This is more formally stated in Corollary~\ref{neglectingDependenceCorrolary}.

\section{Formal Definition of Accreditation}
\label{FormDefSec}
\subsection{Basic Notation and Preliminary Definitions}
Let,
\begin{enumerate}
\item $\textit{Circ}$ be the set of all circuits.\\
\item $\big \{ \cdot \big \}$ denote a probability distribution such that its argument is always the value of any random variable defined by this distribution, i.e., a sample from the distribution $\{ \vert 0 \rangle \}$ will always return $\vert 0 \rangle$\\
\item $\mathrm{P}_{\mathrm{oly}}$ be the set of all polynomials\\
\item  $\big \vert \cdot \big \vert$ denote the number of qubits in the argument circuit\\
\item $\big \vert \big \vert \cdot \big \vert \big \vert$  denote the number of gates in the argument circuit
\end{enumerate}
To formally define accreditation, we must first formally define the concepts that accreditation addresses.
\begin{definition}
    An \underline{execution} of a given circuit is any -- potentially noisy, as defined in Def.~\ref{def:NoiseandError} -- physical process intended to obtain a sample from the probability distribution generated by the measurements in that circuit. For any circuit, $\mathcal{C}$, any execution of it will be denoted as $\Tilde{\mathcal{C}}$; or with indices if there are multiple distinct executions (which may vary in implementation -- and hence error -- but share the same ideal\footnote{Meaning that no error occurs.} outputs).
\end{definition}

\begin{definition}
    \label{def:NoiseandError}
    \underline{Noise} is any process that applies CPTP maps (that we refer to as \underline{error}) alongside the intended operations during the execution of a circuit. 
    
    For example, if $\hat{A}$ is a gate within a circuit, the noise may add an Pauli X gate immediately after it, as shown below:
    \begin{align}
        \hat{A} \longrightarrow \hat{X} \hat{A}.
    \end{align}
    The noise adds CPTP maps after every gate in the circuit and state preparation, but immediately before measurement.
\end{definition}

Technically an execution must be defined relative to an error model that defines the noise -- and hence error -- occurring in that execution. Herein, error models are kept general and abstract until Sec.~\ref{Sec:ErrorModel} and thereafter all executions are assumed to be afflicted by noise conforming to the error model presented in Sec.~\ref{Sec:ErrorModel} to describe the noise they experience.
Due to noise changing the distributions circuit executions sample from, it is important to have some measure of how much two given distributions differ. The metric we will use to quantify this is the total variation distance, as defined in Def.~\ref{def:total-variation}, and the key application of it for this paper is in the ideal-actual variation distance, as defined in Def.~\ref{def:ideal-Actual}.
\begin{definition}
\label{def:total-variation}
For any two probability measures, $P$ and $\Tilde{P}$, over a sample space, $\Omega$, the \underline{total variation distance} between $P$ and $\Tilde{P}$ is:
    \be
\label{eq:tvd}
      \mathrm{VD} \left( P, \Tilde{P} \right)
      \equiv
        \dfrac{1}{2} \sum_{s \in \Omega} \bigg \vert P \left( s \right) - \Tilde{P}\left( s \right) \bigg \vert.
       \ee
\end{definition}

\begin{definition}
    \label{def:IdealDistribution}
    For any circuit, $\mathcal{C}$, define the distribution that an error-free execution of it samples from by \underline{$\mathcal{O}_I \left [ \mathcal{C} \right ]$.}
\end{definition}

\begin{definition}
\label{def:ideal-Actual}
    For any circuit, $\mathcal{C}$, and an execution of it, $\Tilde{\mathcal{C}}$, the \underline{ideal-actual variation distance} of that execution (denoted as $\nu \left [ \Tilde{\mathcal{C}} \right ]$) is the total variation distance between the probability distribution the execution would sample from if there were no error (i.e., the ideal case), $\mathcal{O}_I \left [ \mathcal{C} \right ]$, and the distribution it actually samples from (which may be different, due to noise).
\end{definition}

We now switch focus slightly to consider various sets of operators/gates that we will use. 
This begins with Def.~\ref{def:PauliGates}.
\begin{definition}
    \label{def:PauliGates}
    \underline{$\mathbb{P}_1$} $\subseteq \mathbb{U}(2)$ is the set of all single-qubit Pauli gates and the identity. \underline{$\mathbb{P}^{\otimes N}_1$} $\subseteq \mathbb{U}(2^N)$ denotes the set of all $N$-fold strings of -- not necessarily identical -- elements from $\mathbb{P}_1$.
\end{definition}
Another set of gates that are very useful are the Clifford gates. Clifford gates were originally defined differently, in Ref.~\cite{Gottesman_1998}, but, for our purposes, the following is more convenient (and is equivalent to the original definition):
\begin{definition}
    A two-qubit gate, $\mathcal{G}$, is \underline{Clifford} if and only if $\forall \hat{p} \in \mathbb{P}_1^{\otimes 2}$, $\hat{\mathcal{G}}^{\dagger} \hat{p} \hat{\mathcal{G}} \in \mathbb{P}_1^{\otimes 2}$, up to an overall phase.
\end{definition}
To ease discussing the relationship between the Pauli gates and Clifford gates, we formalize it in Def.~\ref{def:pushing}.
\begin{definition}
    \label{def:pushing}
    A gate, $\hat{g}$, is said to be \underline{pushed} through another gate, $\hat{G}$, in a circuit if an occurrence of $\hat{G} \hat{g}$ in that circuit is replaced with $\hat{g}' \hat{G}$, where $\hat{G} \hat{g}$ and $\hat{g}' \hat{G}$ are equivalent and $\hat{g}'$ is another gate known as \underline{the result of pushing $\hat{g}$ through $\hat{G}$}, i.e.,
    \begin{align}
        g' = Gg G^{\dagger} \Rightarrow \hat{g}' \hat{G} = \hat{G} \hat{g}.
    \end{align}
    Note that this generalizes to gates being pushed through each other in either direction.
\end{definition}
The final definition, Def.~\ref{def:minimumLooseness}, required before presenting our formal QAPs is more general than the previous ones. It is a consequence of using sampling -- which inevitably comes with some, bounded, sampling error -- to estimate an upper bound on some quantity.
\begin{definition}
\label{def:minimumLooseness}
    If $x \in \mathbb{R}$ is the true value of some unknown quantity to be upper bounded and $\Tilde{x}$ is the minimum value that a given stochastic process, guaranteed to upper bound $x$, returns with non-zero probability. Then the \underline{minimum looseness} of that process when bounding $x$, $\theta_0$, is defined by:
    \begin{align}
      \big \vert  x - \Tilde{x} \big \vert > \theta_0.
    \end{align}
    This value limits how tight the bound on $x$, obtained by a specific stochastic process, can be with non-zero probability. 
  \end{definition}

\subsection{Quantum accreditation protocol (QAP)}
\label{FormalDefSubSec}

The aim of any QAP is to provide a bound on the total variation distance between the probability distribution the execution of a circuit provides samples from and the distribution it is intended to provide samples from. This can alternatively be expressed as bounding the total variation distance between the ideal and erroneous (averaged over all errors that occur) probability distributions.

We first note that any QAP is defined and proven to function correctly relative to a specific error model.
 We formally define QAPs in Def.~\ref{accreditationFormalDEf}
via a series of conditions on a set of circuits and the algorithms that generate them. This will hopefully enable -- in the future -- more general notions of accreditation. In Theorem~\ref{definitionWorksTheorem}, we show that these conditions imply that a standard protocol (Protocol~\ref{StandardAccAlg}), if given algorithms the conditions -- in Def.~\ref{accreditationFormalDEf} -- apply to, functions as a QAP.

\begin{definition}
\label{accreditationFormalDEf} 
\begin{widetext}
A \textit{quantum accreditation protocol (QAP)} for a triplet $\{\mathcal{E}, P_{\mathrm{targ}}, P_{\mathrm{trap}}\}$ consisting of an
 error model $\mathcal{E}$ and two polynomial time classical algorithms $P_{\mathrm{targ}}$ and $P_{\mathrm{trap}}$, is an algorithm for using them (given as Algorithm~\ref{StandardAccAlg}), such that: 
\begin{enumerate}
\item[ $\underline{\mathbf{1}}:$]
$\forall \mathcal{C} \in \textit{Circ}$, $\exists \mathbb{C}_{\mathrm{targ}} \subseteq \textit{Circ}$, such that: $\forall \mathcal{C}_{\mathrm{targ}} \in \mathbb{C}_{\mathrm{targ}}$,
\begin{enumerate}
\item[  $\mathbf{a)}$]
$P_{\mathrm{targ}} \left( \mathcal{C} \right)$ samples from $ \mathbb{C}_{\mathrm{targ}}$ according to distribution $\mu_{\text{targ}}$
\item[ $\mathbf{b)}$]
$\exists f \in \mathrm{P}_{\mathrm{oly}}$, such that $\left \vert \mathcal{C}_{\mathrm{targ}} \right \vert \leq f \left ( \left\vert \mathcal{C} \right \vert \right)$
\item[  $\mathbf{c)}$]
$\exists f_t \in \mathrm{P}_{\mathrm{oly}}$, such that $\big \vert \big \vert \mathcal{C}_{\mathrm{targ}} \big \vert \big \vert \leq f_t \left ( \left \vert \left\vert \mathcal{C} \right \vert \right \vert \right)$
\item[ $\mathbf{d)}$]
$\mathcal{O}_I \left [ \mathcal{C} \right] = \mathcal{O}_I \left[ \mathcal{C}_{\mathrm{targ}} \right]$
\item[$\mathbf{e)}$] Executing many circuits -- experiencing error conforming to $\mathcal{E}$ -- sampled from $\mathbb{C}_{\text{targ}}$ according to distribution $\mu_{\text{targ}}$ is equivalent, in terms of probability distribution of the measurement outcomes, to executing a circuit in $\mathbb{C}_{\text{targ}}$ with all error being stochastic error from a known set.
\end{enumerate}
% Condition 2
\item[ $\underline{\mathbf{2}}:$]
$\forall \mathcal{C} \in \textit{Circ}$, $\exists \mathbb{C}_{\mathrm{trap}} \subseteq \textit{Circ}$, such that:
\begin{enumerate}
\item[ $\mathbf{a)}$]
$P_{\mathrm{trap}} \left( \mathcal{C} \right)$ samples $\mathbb{C}_{\mathrm{trap}}$ according to distribution, $\mu_{\text{trap}}$
\item[ $\mathbf{b)}$]
$\exists f^{\prime} \in \mathrm{P}_{\mathrm{oly}}$, such that, $\forall \mathcal{C}_{\mathrm{trap}} \in \mathbb{C}_{\mathrm{trap}}$, $\left \vert \mathcal{C}_{\mathrm{trap}} \right \vert \leq f^{\prime} \left ( \left \vert \mathcal{C} \right \vert \right)$
\item[ $\mathbf{c)}$]
$\exists f^{\prime}_t \in \mathrm{P}_{\mathrm{oly}}$, such that $\left \vert \left \vert \mathcal{C}_{\mathrm{trap}} \right \vert \right \vert \leq f^{\prime}_t \left ( \left \vert \left \vert \mathcal{C} \right \vert \right \vert \right)$
\item[ $\mathbf{d)}$]
$\mathcal{O}_I \left[ \mathcal{C}_{\mathrm{trap}} \right] = \left \{ m \right\}$ is efficiently computable  classically, where $m$ is some output of the measurements in the traps
\item[ $\mathbf{e)}$] Executing many circuits -- each experiencing error conforming to $\mathcal{E}$ -- sampled from $\mathbb{C}_{\text{trap}}$ according to distribution $\mu_{\text{trap}}$ is equivalent, in terms of probability distribution of the measurement outcomes, to executing a circuit in $\mathbb{C}_{\text{trap}}$ with all error being stochastic error from a known set
\item[ $\mathbf{f)}$] Any error occurring in a trap chosen from $\mathbb{C}_{\mathrm{trap}}$ according 
 to distribution $\mu_{\text{trap}}$ causes the trap to not return $m$, with probability at least $k \in (0,1]$ (where $k$ is known)
\end{enumerate}

\item[$\underline{\mathbf{3}}:$] Trap and target circuits are constructed such that they only differ in aspects that the error model, $\mathcal{E}$, asserts do not affect any errors in an execution. So the error is identical in all elements of $\mathbb{C}_{\text{targ}} \cup \mathbb{C}_{\text{trap}}$
\end{enumerate}
\end{widetext}

Key parts of Def.~\ref{accreditationFormalDEf} are conditions $\mathbf{1e}$ and $\mathbf{2e}$. These mean errors in the trap and target circuit executions can be considered stochastic and allow the concept in Def.~\ref{def:probOfError} be be meaningful for these circuit executions.
Note that $k$ and $m$ are features of $P_{\mathrm{targ}}$ and $P_{\mathrm{trap}}$, while $\theta \in [0,1]$ and $\alpha \in [0,1]$ are inputs to Algorithm~\ref{StandardAccAlg}.

\begin{figure}
    \centering
\begin{algorithm}[H]
%\SetAlgoLined
$\mathbf{Inputs:}$ \\
$\bullet$ A circuit, $\mathcal{C}$, meeting any required conditions of the traps and targets.\\
$\bullet$ The relevant $P_{\mathrm{targ}}$, $P_{\mathrm{trap}}$, $m$, and $k$ from Def.~\ref{accreditationFormalDEf}.\\
$\bullet$ The minimum looseness, $\theta / k$, of the ideal-actual variation distance to return (where $k$ is as in $\textbf{1f}$, above, and is known).\\
$\bullet$ A required confidence in the above minimum looseness, $\alpha$.\\
 %\hline 
 \begin{enumerate} 

    \item Calculate $N_l = \left \lceil \dfrac{2}{\theta^2} \ln{\left( \dfrac{2}{1 - \alpha} \right)} \right \rceil + 1$

    \item Use $P_{\mathrm{trap}}$ on input $\mathcal{C}$ to generate a set of traps, $N_l$ = $\left \{ \mathcal{C}_j \right\}_{j = 1}^{N_l} \subseteq \mathbb{C}_{\mathrm{trap}}$.

    \item Use $P_{\mathrm{targ}}$ on input $\mathcal{C}$ to generate a target circuit, $\mathcal{C}^{\prime} \in \mathbb{C}_{\mathrm{targ}}$

    \item Execute every circuit in $\left \{ \mathcal{C}_j \right \}_{j = 1}^{N_l} \cup \left \{ \mathcal{C}^{\prime} \right \}$ in a random order and record the results.

    \item $\mathrm{TargResult}$ = output of executing $\mathcal{C}^{\prime}$.
    \item $\mathrm{TrapResult}$ = fraction of $\left \{ \Tilde{\mathcal{C}}_j \right \}_{j = 1}^{N_l}$ that does not output $m$.
\end{enumerate}
%\hline
$\mathbf{Return}:$ $\mathrm{TargResult}$, $\left( \mathrm{TrapResult} + \theta \right)/k.$
\caption{Standard Accreditation Algorithm
 \label{StandardAccAlg}}
\end{algorithm}
\end{figure}
\end{definition}

\begin{definition}
    \label{def:probOfError}
    For any execution of any circuit afflicted by stochastic error, the \underline{probability of error} is the probability that \emph{any} of the stochastic channels modeling the error apply an erroneous gate.
\end{definition}

\subsubsection{Proof of Correctness of Algorithm~\ref{StandardAccAlg}}

We now prove that Def.~\ref{accreditationFormalDEf} and Algorithm~\ref{StandardAccAlg} achieve the goals of accreditation. 
\begin{theorem}
\label{definitionWorksTheorem}
    Given algorithms $P_{\mathrm{targ}}$ and $P_{\mathrm{trap}}$ that generate traps and targets, respectively, as defined in Def.~\ref{accreditationFormalDEf}, and a circuit, $\mathcal{C}$; assuming all circuit executions experience error conforming to the error model that the generated traps and target circuits are designed to function under; Algorithm~\ref{StandardAccAlg} provides a sample from a probability distribution and a bound on the total variation distance between that distribution and the distribution generated by the measurements in $\mathcal{C}$ if no error occurs in its execution. 
\end{theorem}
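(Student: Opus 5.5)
We will follow the standard accreditation argument of Ref.~\cite{Ferracin_2019}, organised around the conditions of Def.~\ref{accreditationFormalDEf}. The first output of Algorithm~\ref{StandardAccAlg}, $\mathrm{TargResult}$, is a sample from the execution of a target $\mathcal{C}'$ drawn from $\mathbb{C}_{\mathrm{targ}}$ according to $\mu_{\mathrm{targ}}$. Conditions $\mathbf{1b}$ and $\mathbf{1c}$ guarantee this circuit is of polynomial size, so the algorithm is efficient. Condition $\mathbf{1d}$ guarantees that its ideal distribution equals $\mathcal{O}_I[\mathcal{C}]$. It therefore suffices to bound the total variation distance between the distribution actually sampled, averaged over $\mu_{\mathrm{targ}}$, and $\mathcal{O}_I[\mathcal{C}]$.

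\textbf{Step 1 (target side).} By $\mathbf{1e}$, the averaged noisy target behaves as a circuit in $\mathbb{C}_{\mathrm{targ}}$ with purely stochastic error from a known set. Let $p$ be its probability of error (Def.~\ref{def:probOfError}). The output distribution can then be written as the mixture
\begin{align}
\tilde{P} = (1-p)\,\mathcal{O}_I[\mathcal{C}] + p\, Q
\end{align}
for some distribution $Q$. Applying Def.~\ref{def:total-variation} directly gives
\begin{align}
\mathrm{VD}(\mathcal{O}_I[\mathcal{C}],\tilde{P}) = \tfrac{p}{2}\sum_s \big\vert \mathcal{O}_I[\mathcal{C}](s)-Q(s)\big\vert \le p .
\end{align}

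\textbf{Step 2 (trap side).} By condition $\mathbf{3}$, the error is identical across $\mathbb{C}_{\mathrm{targ}}\cup\mathbb{C}_{\mathrm{trap}}$. Together with $\mathbf{2e}$, this means the averaged traps also experience stochastic error with the same probability of error $p$. By $\mathbf{2d}$, an error-free trap outputs $m$. By $\mathbf{2f}$, whenever an error occurs the trap fails to output $m$ with probability at least $k$. Hence the probability $p_{\mathrm{fail}}$ that a single trap does not return $m$ satisfies $p_{\mathrm{fail}} \ge kp$, which gives $p\le p_{\mathrm{fail}}/k$.

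\textbf{Step 3 (estimation).} The $N_l$ traps are executed in random order, and the error model makes their outcomes i.i.d.\ Bernoulli($p_{\mathrm{fail}}$) variables. Hoeffding's inequality then gives
\begin{align}
\Pr\left[\mathrm{TrapResult} < p_{\mathrm{fail}} - \theta\right] \le e^{-2N_l\theta^2}.
\end{align}
The choice of $N_l$ in step 1 of the algorithm makes this at most $1-\alpha$. So with confidence at least $\alpha$, we have $p_{\mathrm{fail}}\le \mathrm{TrapResult}+\theta$. Combining this with Steps 1 and 2, the returned quantity $(\mathrm{TrapResult}+\theta)/k$ upper-bounds $\nu[\tilde{\mathcal{C}}']$ with confidence $\alpha$.

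The main obstacle is Step 2, together with the independence needed in Step 3. We must justify that ``the same error'' in condition $\mathbf{3}$ really transfers the \emph{same} probability of error $p$ from the twirled targets to the twirled traps. We must also justify that $p$ is well defined after averaging, rather than depending on the particular circuit sampled. I would handle this by reading $\mathbf{1e}$, $\mathbf{2e}$ and $\mathbf{3}$ jointly: they assert that both averaged ensembles reduce to the same stochastic channels, so $p$ is a property of the error model alone. Independence across executions would be taken from the error model; the random execution order prevents the target from being treated differently from the traps.
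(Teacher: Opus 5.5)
Your proposal is correct and follows essentially the same route as the paper. Conditions $\mathbf{1a}$--$\mathbf{1d}$ handle the target sample, $\mathbf{2d}$/$\mathbf{2f}$ plus Hoeffding's inequality estimate the trap failure rate, condition $\mathbf{3}$ (read together with $\mathbf{1e}$ and $\mathbf{2e}$) transfers the probability of error from traps to target, and the bound $\nu \le p$ finishes the argument. The differences are only presentational: you prove Lemma~\ref{stochAndVDLemma} inline via the mixture $\tilde{P}=(1-p)\,\mathcal{O}_I[\mathcal{C}]+p\,Q$ where the paper cites Ref.~\cite{Ferracin_2021}, and you make explicit the one-sided Hoeffding step with the check on $N_l$, the inequality $p_{\mathrm{fail}}\ge kp$, and the independence of trap runs, all of which the paper leaves implicit.
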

\begin{proof}
    For a given circuit, $\mathcal{C}$, Algorithm~\ref{StandardAccAlg} efficiently obtains a sample from the distribution generated by the measurements of an implementation of a circuit in $\mathbb{C}_{\mathrm{targ}}$ which is: efficiently obtainable (using $P_{\mathrm{targ}}$), efficiently executable, and generates the same distribution with its measurements -- when no error occurs -- as $\mathcal{C}$; due to Conditions $\mathbf{1a-c}$.
    
    Implementing $\Tilde{\mathcal{C}}_{\mathrm{targ}}$, when $\mathcal{C}_{\mathrm{targ}} \in \mathbb{C}_{\mathrm{targ}}$, is equivalent, in the ideal case, to implementing $\Tilde{\mathcal{C}}$, due to Condition $\mathbf{1d}$. Hence the returned samples, in the ideal case, are from the intended distribution.
    
We now prove that the bound Algorithm~\ref{StandardAccAlg} returns is truly a bound on the ideal-actual variation distance of the target.
    
    Conditions $\mathbf{2a-c}$ imply trap circuits are BQP-computations if the input circuit is a BQP-computation and can be efficiently obtained via $P_{\mathrm{targ}}$. Condition $\mathbf{2d}$ means if no error occurs in a particular trap, we know -- with certainty. Likewise, Condition $\mathbf{2f}$ means, if error does occur, the trap gives an incorrect output with probability at least $k$.

    Therefore, the probability of traps returning incorrect outcomes can be estimated with minimum looseness $\theta$, (as in Def.~\ref{def:minimumLooseness}) with confidence $\alpha \in [0,1]$ using $ N_l = \bigg \lceil \dfrac{2}{\theta^2} \ln{\left( \dfrac{2}{1 - \alpha} \right)} \bigg \rceil + 1$ traps, due to Hoeffding's inequality~\cite{doi:10.1080/01621459.1963.10500830}.

    Due to Condition $\mathbf{2f}$ all error in a trap is effectively stochastic error. Then, as Condition $\mathbf{1e}$ keeps the target and trap circuits experiencing identical error despite this and Condition $\mathbf{3}$ implies each trap execution has the same probability of error; the probability of error which is meaningful due to $\mathbf{2e}$) occurring in a trap can be upper bounded -- with confidence $\alpha$ -- as
    \begin{align}
         \mathbb{P} \left( \text{Error in trap}  \right) \leq \frac{1}{k} \left( \mathbb{P} \left(\mathrm{TrapResult}  \right)  + \theta \right),
    \end{align}
    where $\mathrm{TrapResult}$ = fraction of $\{ \Tilde{\mathcal{C}}_j \}_{j = 1}^{N_l}$ that does not output $m$ As in Algorithm~\ref{StandardAccAlg}.
    
    The penultimate step in this proof follows from Condition $\mathbf{3}$. As all error in the target and traps is identical, the probability of error in the traps is the probability of error in the target. Therefore:
    \begin{align}
        \label{probTargBound}
        \mathbb{P} \left( \text{Error in target} \right)
        \leq
        \frac{1}{k} \left( \mathbb{P} \left(\mathrm{TrapResult}  \right)   + \theta \right),
    \end{align}
    with confidence $\alpha \in [0,1]$. Similarly, Eqn.~\eqref{probTargBound2} is meaningful due to $\mathbf{1e}$. Finally, Lemma~\ref{stochAndVDLemma} implies that, using Eqn.~\eqref{probTargBound},
    \begin{align}
        \label{probTargBound2}
        \nu \left [ \Tilde{\mathcal{C}}_{\mathrm{targ}} \right ] 
         \leq
        \frac{1}{k} \left( \mathbb{P} \left(\mathrm{TrapResult}  \right)   + \theta \right),
    \end{align}
    again, with confidence $\alpha$.
\end{proof}
The derivation of Eqn.~\eqref{probTargBound2} uses Lemma~\ref{stochAndVDLemma}, the proof of which is in Ref.~\cite[Appendix~1]{Ferracin_2021}.
\begin{lemma}
    \label{stochAndVDLemma}
    For any execution of a circuit, if all error occurring is stochastic error, then the ideal-actual variation distance of the execution of the circuit is upper bounded by the probability of error occurring during the execution (regardless of if it is detected).
\end{lemma}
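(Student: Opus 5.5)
The plan is a coupling argument: decompose the noisy output distribution according to whether any stochastic error fires, and compare only the branch where something went wrong with the ideal distribution.

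\emph{Setup.} Since all error is stochastic, each error location $i$ applies a channel of the form $\rho \mapsto (1-p_i)\rho + \sum_j p_{i,j} E_{i,j}\rho E_{i,j}^{\dagger}$. The $E_{i,j}$ are the erroneous gates, and $p_i=\sum_j p_{i,j}$. Expanding every such channel in the circuit, the execution becomes a convex mixture over error configurations $\vec{e}$. Each configuration occurs with probability $q(\vec{e})$ and yields a pure (error-specific) circuit whose measurements produce some distribution $P_{\vec{e}}$. The noisy output distribution is therefore
\begin{align}
\tilde{P} = \sum_{\vec{e}} q(\vec{e}) P_{\vec{e}} = (1-p_{\mathrm{err}}) P_I + p_{\mathrm{err}} Q .
\end{align}
Here the all-trivial configuration reproduces the ideal distribution $P_I=\mathcal{O}_I[\mathcal{C}]$, and its weight $1-p_{\mathrm{err}}=\prod_i(1-p_i)$ is exactly one minus the probability of error of Def.~\ref{def:probOfError}. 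The term $Q$ is the normalized mixture of the $P_{\vec{e}}$ over all non-trivial configurations, which is itself a probability distribution.

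\emph{Bound.} Substituting the decomposition into Def.~\ref{def:total-variation} gives
\begin{align}
\nu[\tilde{\mathcal{C}}] = \tfrac{1}{2}\sum_s \left\vert P_I(s)-\tilde{P}(s)\right\vert = p_{\mathrm{err}}\,\tfrac{1}{2}\sum_s\left\vert P_I(s)-Q(s)\right\vert \le p_{\mathrm{err}} .
\end{align}
The final inequality holds because the total variation distance between any two distributions is at most $1$. This argument never asks whether the error is detected; it only uses the weight of the error-free branch, which matches the parenthetical in Lemma~\ref{stochAndVDLemma}.

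\emph{Main obstacle.} The delicate step is justifying the convex decomposition when errors act jointly on system and environment, or are correlated across locations, rather than independently per gate. I would handle this by modelling the stochastic error as a single classical random variable over configurations with a joint distribution $q$. Then $1-p_{\mathrm{err}}$ is the probability of the trivial configuration, and the decomposition holds by linearity of the circuit in the applied Kraus/unitary operators. The environment is traced out in each branch, and the trivial branch involves no system–environment interaction beyond the ideal one. With that modelling in place, the remaining steps are immediate.
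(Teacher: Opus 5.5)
Your proof is correct and takes the same approach as the proof the paper defers to in Ref.~\cite[Appendix~1]{Ferracin_2021}: write the averaged noisy output as $(1-p_{\mathrm{err}})P_I + p_{\mathrm{err}}Q$, so the variation distance equals $p_{\mathrm{err}}$ times a variation distance that is at most one. Your handling of system--environment and cross-location correlations is also sound: you take a joint distribution $q$ over error configurations after tracing out the environment, with the trivial configuration reproducing exactly the ideal system evolution, and this is the right way to make the decomposition rigorous under $\mathbf{N1}$.
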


\section{Error model}
\label{Sec:ErrorModel}
Unless and until stated otherwise, we will use a consistent error model throughout this paper. This model, also used in Ref.~\cite{Ferracin_2021} is as follows:
\begin{enumerate} 
\item[$\mathbf{N1}$:] Noise in state preparation, two-qubit gates, and measurement is a Completely Positive
Trace Preserving (CPTP) map acting on the system and its environment, acting immediately before or after the ideal implementation of the state preparation, gate, or measurement the error occurs in.
\item[$\mathbf{N2}$:] Noise in any gate is independent of which single-qubit gates are in the circuit\footnote{This assumption is motivated by the empirical reality of existing quantum hardware. The noise in any part of the execution of a circuit is independent of which single qubit gates are executed at any point in the circuit (e.g., if the gate in a given location is a Pauli $\hat{X}$ or Hadamard gate). The noise (at any point of the circuit) may still depend on \emph{if} there is a single-qubit gate and which qubits it acts on, but not \emph{which} single-qubit gate it is.}. The error can still depend on the location, within the circuit, of single-qubit gates.
\end{enumerate}

Despite its brevity, we present this as an independent section for ease of future reference and clarity.
It is also a stand against the frequent practice in the literature of not displaying the assumptions prominently.

\section{Twirling and Accrediting tau-Decomposible Gates}
\label{TauDecSec}
 We now move away from our generalised QAP and towards specifics: namely, we aim to accredit the executions of circuits that feature non-Clifford two-qubit gates. We start by examining how we can twirl error occurring in these gates to stochastic error and then move on to accrediting the execution of circuits that contain these gates.
\subsection{tau-Decomposible Gate Definition and Properties}
The first set of non-Clifford two-qubit gates we consider is the set of $\hat{\tau}$-decomposible gates, as defined below.
\begin{definition}
    \label{tauDecompDef}
    A two-qubit gate, $\hat{\mathcal{G}} \in \mathbb{U}(4)$, is \underline{$\hat{\tau}$-decomposible} if and only if it can be expressed as:
    \begin{align}
    \label{GDef}
    \hat{\mathcal{G}} = \left( \hat{\tau}_1 \otimes \hat{\tau}_2 \right)^{\dagger} \circ \hat{\mathcal{M}} \circ \left( \hat{\tau}_1 \otimes \hat{\tau}_2 \right),
\end{align}
where $\hat{\mathcal{M}}$ is a two-qubit Clifford gate that leaves $\vert 0 \rangle^{\otimes 2}$ and $\hat{H}^{\otimes 2} \vert 0 \rangle^{\otimes 2}$ unchanged -- up to an overall phase -- and $\hat{\tau}_1$, $\hat{\tau}_2$ $\in \mathbb{U} (2)$. Additionally, $\circ$ represents the composition of operators but is often omitted herein for conciseness.
\end{definition}
We call Eqn.~\eqref{GDef} the $\hat{\tau}$-decomposition of $\mathcal{G}$. It does not necessarily exist for all two-qubit gates. Def.~\ref{tauDecompDef} additionally inspires a further definition.
\begin{definition}
    \label{PauliVesicleSetDef}
    $\forall \hat{\tau} \in \mathbb{U}(2)$, the corresponding \underline{$\hat{\tau}$-Pauli vesicle set}, $\Gamma_{\hat{\tau}} \in \mathbb{U}(2)$, is defined as:
    \begin{align}
        \Gamma_{\hat{\tau}}
        &=
        \big \{ \hat{\tau}^{\dagger} \circ \hat{P}_j \circ \hat{\tau} \text{ } \vert \text{ } \hat{P}_j \in \mathbb{P}_1 \big \}.
    \end{align}
\end{definition}
We can additionally consider a tensor product of two $\hat{\tau}$-Pauli vesicle sets, where an element from one $\hat{\tau}$-Pauli vesicle set acts on one qubit and an element from another acts on a second qubit. It is of further use if we specify that the two $\hat{\tau}$-Pauli vesicle sets are those that correspond to the single-qubit gates, $\hat{\tau}_1$ and $\hat{\tau}_2$, in the $\hat{\tau}$-decomposition of a specific two-qubit gate. 
For this purpose, we next define $\Gamma_{\hat{\mathcal{G}}}$.
\begin{definition}
    \label{def:GammaG}
    Define \underline{$\Gamma_{\hat{\mathcal{G}}}$} $\subseteq \mathbb{U}(2)$, using the same $\hat{\tau}_1$ and $\hat{\tau}_2$ from the decomposition in Eqn.~\eqref{GDef}, as:
    \begin{align}
        \Gamma_{\hat{\mathcal{G}}}
        &=
        \Gamma_{\hat{\tau}_1} \otimes \Gamma_{\hat{\tau}_2},
    \end{align}
    where, for any sets of operators, $A$ and $B$, $A \otimes B = \{ \hat{a} \otimes \hat{b} \text{ } \vert \text{ } \hat{a} \in A, \hat{b} \in B \}$.
\end{definition}
A key property of a $\hat{\tau}$-decomposible gate, $\hat{\mathcal{G}} \in \mathbb{U}(2)$, and the corresponding $\Gamma_{\hat{\mathcal{G}}}$, is that pushing elements of $\Gamma_{\hat{\mathcal{G}}}$ through $\hat{\mathcal{G}}$ maps those elements of $\Gamma_{\hat{\mathcal{G}}}$ to elements of $\Gamma_{\hat{\mathcal{G}}}$.
This property is proven in Lemma~\ref{GgateCommuteLemma} which first requires Def.~\ref{def:postCommutePaulis}. 

Viewing Def.~\ref{def:postCommutePaulis} informally, $\Bar{P}^{(1)}_{j, \hat{\mathcal{M}}}$ can be viewed as the result of pushing (as in Def.~\ref{def:pushing}) $\hat{P}_j \otimes \hat{I}$ through $\hat{\mathcal{M}}$, and $\Bar{P}^{(2)}_{j, \hat{\mathcal{M}}}$ is the result of pushing $\hat{I} \otimes \hat{P}_j$ through $\hat{\mathcal{M}}$.
We now state Lemma~\ref{GgateCommuteLemma} (which is proven in Appendix~\ref{appendixProofofLemmaGgateCommuteLemma}), as promised.

\begin{figure*}[t]
    \includegraphics[]{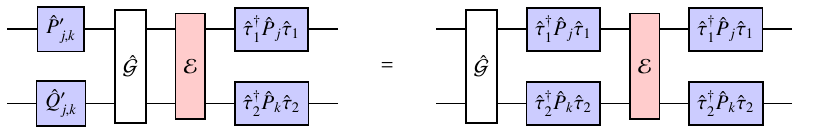}
\caption{Depiction of how we use $\hat{\mathcal{G}}$ normalizing $\Gamma_{\hat{\mathcal{G}}}$ to twirl CPTP error $\mathcal{E}$ as in Eqn.~\eqref{Eqn:GoverningForPQ'}). Note that we use red gates to denote CPTP maps and blue gates to denote gates that are chosen randomly from a specific set. In this case, that set is $\Gamma_{\hat{\mathcal{G}}}$ (drawn uniformly) for the gates on the right-hand of their respective sub-circuits, with the $P'_{j,k}$ and $Q'_{j,k}$ gates then selected to match them as in Eqn.~\eqref{Eqn:GoverningForPQ'}. We abuse notation to display the CPTP maps as gates (depicting error) in a circuit. As mitigation, we highlight the CPTP maps in red, denoting that they are not unitaries.}
\label{surroundError}
\end{figure*}

\begin{lemma}
\label{GgateCommuteLemma}
For any $\left(  \hat{\tau}_1^{\dagger} \hat{P}_j \hat{\tau}_1 \otimes \hat{\tau}_2^{\dagger} \hat{P}_k \hat{\tau}_2 \right) \in \Gamma_{\hat{\mathcal{G}}}$, there exists a $\hat{P}^{\prime}_{j,k} \in \Gamma_{\hat{\tau}_1}$ and $\hat{Q}^{\prime}_{j,k} \in \Gamma_{\hat{\tau}_2}$, such that:
\begin{align}
     \left(  \hat{\tau}_1^{\dagger} \hat{P}_j \hat{\tau}_1 \otimes \hat{\tau}_2^{\dagger} \hat{P}_k \hat{\tau}_2 \right) \circ \hat{\mathcal{G}}
     &=
     \label{PprimeQprimeMainProperty}
    \hat{\mathcal{G}} \circ \left( \hat{P}^{\prime}_{j,k} \otimes \hat{Q}^{\prime}_{j,k} \right).
\end{align}
\end{lemma}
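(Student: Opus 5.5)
The plan is to substitute the $\hat{\tau}$-decomposition of Eqn.~\eqref{GDef} into the left-hand side of Eqn.~\eqref{PprimeQprimeMainProperty}. The conjugating $\hat{\tau}$'s then cancel, which reduces the statement to the Clifford property of $\hat{\mathcal{M}}$. Writing $\hat{T} = \hat{\tau}_1 \otimes \hat{\tau}_2$, the element of $\Gamma_{\hat{\mathcal{G}}}$ on the left is exactly $\hat{T}^{\dagger} (\hat{P}_j \otimes \hat{P}_k) \hat{T}$. Hence
\begin{align}
\left( \hat{\tau}_1^{\dagger} \hat{P}_j \hat{\tau}_1 \otimes \hat{\tau}_2^{\dagger} \hat{P}_k \hat{\tau}_2 \right) \hat{\mathcal{G}}
= \hat{T}^{\dagger} (\hat{P}_j \otimes \hat{P}_k) \hat{T} \hat{T}^{\dagger} \hat{\mathcal{M}} \hat{T}
= \hat{T}^{\dagger} (\hat{P}_j \otimes \hat{P}_k) \hat{\mathcal{M}} \hat{T},
\end{align}
using only unitarity of $\hat{\tau}_1, \hat{\tau}_2$.

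The next step is to push $\hat{P}_j \otimes \hat{P}_k$ through $\hat{\mathcal{M}}$ in the sense of Def.~\ref{def:pushing}, but in the reverse direction, so that the Pauli ends up on the right. Since $\hat{\mathcal{M}}$ is Clifford, so is $\hat{\mathcal{M}}^{\dagger}$. The Clifford definition applied to $\hat{\mathcal{M}}^{\dagger}$ gives $\hat{\mathcal{M}}^{\dagger} (\hat{P}_j \otimes \hat{P}_k) \hat{\mathcal{M}} = e^{i\phi} \hat{A} \otimes \hat{B}$ with $\hat{A}, \hat{B} \in \mathbb{P}_1$. I would justify that $\hat{\mathcal{M}}^{\dagger}$ is Clifford by noting that conjugation by $\hat{\mathcal{M}}$ is an injective map on the finite set of Paulis modulo phase, hence a bijection, so its inverse also preserves the set. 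Equivalently, one can use the Bar-$P$ notation of Def.~\ref{def:postCommutePaulis}: write $\hat{P}_j \otimes \hat{P}_k = (\hat{P}_j \otimes \hat{I})(\hat{I} \otimes \hat{P}_k)$ and push each factor separately, multiplying the results. This gives $(\hat{P}_j \otimes \hat{P}_k) \hat{\mathcal{M}} = e^{i\phi} \hat{\mathcal{M}} (\hat{A} \otimes \hat{B})$.

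Substituting back and reinserting $\hat{T}\hat{T}^{\dagger}$ gives
\begin{align}
e^{i\phi} \hat{T}^{\dagger} \hat{\mathcal{M}} \hat{T} \hat{T}^{\dagger} (\hat{A} \otimes \hat{B}) \hat{T}
= e^{i\phi} \hat{\mathcal{G}} \left( \hat{\tau}_1^{\dagger} \hat{A} \hat{\tau}_1 \otimes \hat{\tau}_2^{\dagger} \hat{B} \hat{\tau}_2 \right),
\end{align}
so $\hat{P}'_{j,k} = \hat{\tau}_1^{\dagger} \hat{A} \hat{\tau}_1 \in \Gamma_{\hat{\tau}_1}$ and $\hat{Q}'_{j,k} = \hat{\tau}_2^{\dagger} \hat{B} \hat{\tau}_2 \in \Gamma_{\hat{\tau}_2}$ by Def.~\ref{PauliVesicleSetDef}.

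The only delicate point is the global phase $e^{i\phi}$, since the paper's Clifford definition holds only up to phase while Eqn.~\eqref{PprimeQprimeMainProperty} is stated as an equality. Because $\mathbb{P}_1$ is defined as a set of gates, I will read the lemma as an equality up to global phase; this is harmless for its intended use in twirling, where the operators act by conjugation. If strict equality is required, the phase, which lies in $\{\pm1, \pm i\}$, can be absorbed into $\hat{P}'_{j,k}$, at the cost of enlarging $\Gamma_{\hat{\tau}_1}$ by phases. I will state this explicitly rather than hide it. Notably, the extra invariance of $\vert 0\rangle^{\otimes 2}$ and $\hat{H}^{\otimes 2}\vert 0\rangle^{\otimes 2}$ under $\hat{\mathcal{M}}$ is not needed here; only the Clifford property is used.
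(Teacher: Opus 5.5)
Your proposal is correct and follows essentially the same route as the paper. Both substitute the $\hat{\tau}$-decomposition, cancel $\hat{T}\hat{T}^{\dagger}$, push $\hat{P}_j \otimes \hat{P}_k$ through $\hat{\mathcal{M}}$ using the Clifford property (the paper writes this as $\Bar{P}^{(1)}_{j,\hat{\mathcal{M}}}\Bar{P}^{(2)}_{k,\hat{\mathcal{M}}}$ and then extracts the single-qubit factors), and reinsert the $\hat{\tau}$ gates, with equality read up to global phase as the paper also does. One small remark: the paper defines $\hat{\mathcal{G}}$ as Clifford iff $\hat{\mathcal{G}}^{\dagger}\hat{p}\hat{\mathcal{G}}$ is Pauli, so the fact you need, $\hat{\mathcal{M}}^{\dagger}(\hat{P}_j\otimes\hat{P}_k)\hat{\mathcal{M}} \in \mathbb{P}_1^{\otimes 2}$, is that definition applied to $\hat{\mathcal{M}}$ itself; the detour through $\hat{\mathcal{M}}^{\dagger}$ is unnecessary but harmless, since your bijection argument covers either convention.
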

Exact formulas for $\hat{P}^{\prime}_{j,k}$ and $\hat{Q}^{\prime}_{j,k}$ can be found in  Eqn.~\eqref{PPrimeDef} and Eqn.~\eqref{QPrimeDef}, respectively.

\subsection{Twirling Error in tau-Decomposible Gates}

The above properties of easily-computable $\hat{P}^{\prime}_{j,k}, \hat{Q}^{\prime}_{j,k} \in \mathbb{U}(2)$ become useful when $\hat{\mathcal{G}} \in \mathbb{U}(4)$ is implemented incorrectly, resulting in an erroneous implementation of $\hat{\mathcal{G}}$ that we denote as $\Tilde{G}$. In accordance with the assumption $\mathbf{N1}$, we model the erroneous implementation as the correct gate, $\hat{\mathcal{G}}$, followed by an arbitrary CPTP map~\cite{Keyl_2002} that we denote $\mathcal{E}$ i.e. $\Tilde{G} = \mathcal{E} \circ \hat{\mathcal{G}}$. We note that it is an abuse of notation but it is done for conciseness and clarity: we read $\mathcal{E} \circ \hat{\mathcal{G}}$ as the unitary $\hat{\mathcal{G}}$ being applied followed by the CPTP map $\mathcal{E}$.

Surrounding the erroneously implemented gate, $\Tilde{G}$, with $\left( \hat{\tau}_1^{\dagger} \hat{P}_j \hat{\tau}_1 \otimes \hat{\tau}_2^{\dagger} \hat{P}_k \hat{\tau}_2 \right)$ and the corresponding $\left( \hat{P}^{\prime}_{j,k} \otimes \hat{Q}^{\prime}_{j,k} \right)$, as in Eqn.~\eqref{Eqn:GoverningForPQ'} is equivalent to conjugating the CPTP error of $\Tilde{G} = \mathcal{E} \circ \hat{\mathcal{G}}$ by $\left(  \hat{\tau}_1^{\dagger} \hat{P}_j \hat{\tau}_1 \otimes \hat{\tau}_2^{\dagger} \hat{P}_k \hat{\tau}_2 \right)$. This is as, using Lemma~\ref{GgateCommuteLemma}:
\begin{align}
    \label{Eqn:GoverningForPQ'}
    &\left(  \hat{\tau}_1^{\dagger} \hat{P}_j \hat{\tau}_1 \otimes \hat{\tau}_2^{\dagger} \hat{P}_k \hat{\tau}_2 \right) \circ \Tilde{G} \circ \left( \hat{P}^{\prime}_{j,k} \otimes \hat{Q}^{\prime}_{j,k} \right)\\
     &=
     \left(  \hat{\tau}_1^{\dagger} \hat{P}_j \hat{\tau}_1 \otimes \hat{\tau}_2^{\dagger} \hat{P}_k \hat{\tau}_2 \right) \circ \mathcal{E} \circ \left(  \hat{\tau}_1^{\dagger} \hat{P}_j \hat{\tau}_1 \otimes \hat{\tau}_2^{\dagger} \hat{P}_k \hat{\tau}_2 \right) \circ \hat{\mathcal{G}}. \nonumber
\end{align}
Eqn.~\eqref{Eqn:GoverningForPQ'} is useful when we attempt to twirl the error as now we can place arbitrary elements of $\Gamma_{\hat{\mathcal{G}}}$ around \emph{just} the CPTP error occurring in an erroneous implementation of $\hat{\mathcal{G}}$ (i.e., around just the error and not the gate itself, as depicted in Fig.~\ref{surroundError}).
Sec.~\ref{TauDecSec} thus far may be summarized pictorially as in Fig.~\ref{surroundError}.

Practically, the most important feature of Fig.~\ref{surroundError} is that $\hat{P}^{\prime}_{j,k}$ and $\hat{Q}^{\prime}_{j,k}$ are single-qubit gates, and are subject to the error rates typical of single-qubit gates and any assumptions made about single-qubit gates e.g., $\mathbf{N2}$ in Sec.~\ref{Sec:ErrorModel}.

By choosing values of $j$ and $k$ uniformly at random to construct $ \left( \hat{\tau}_1^{\dagger} \hat{P}_j \hat{\tau}_1 \right) \otimes \left( \hat{\tau}_2^{\dagger} \hat{P}_k \hat{\tau}_2 \right) \in \Gamma_{\hat{\mathcal{G}}}$ to surround the error in the erroneous implementation, $\Tilde{G}$, of $\hat{\mathcal{G}}$ (as in Fig.~\ref{surroundError}), that surrounded CPTP error -- in the erroneous implementation of $\hat{\mathcal{G}}$ -- is effectively reduced to stochastic error comprising of elements from $\Gamma_{\hat{\tau}_1}$ and $\Gamma_{\hat{\tau}_2}$.

As $\Gamma_{\hat{\mathcal{G}}} = \Gamma_{\hat{\tau}_1} \otimes \Gamma_{\hat{\tau}_2}$ we term this error $\Gamma_{\hat{\mathcal{G}}}$-stochastic error. This process of reducing general CPTP error to $\Gamma_{\hat{\mathcal{G}}}$-stochastic error is expressed more formally in Lemma \ref{twirlLemma}, which is proven in Appendix \ref{TwirlingTauDeets}.
\begin{lemma}
    \label{twirlLemma}
    If any CPTP map is conjugated by uniformly random elements from $\Gamma_{\hat{\mathcal{G}}}$, it is effectively reduced to $\Gamma_{\hat{\mathcal{G}}}$-stochastic error.
\end{lemma}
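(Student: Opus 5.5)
Our plan is to reduce the statement to ordinary Pauli twirling via a change of basis. Write $\hat{T} = \hat{\tau}_1 \otimes \hat{\tau}_2$, so every element of $\Gamma_{\hat{\mathcal{G}}}$ has the form $\hat{T}^{\dagger} \hat{p} \hat{T}$ with $\hat{p} \in \mathbb{P}_1^{\otimes 2}$. Choosing the element uniformly from $\Gamma_{\hat{\mathcal{G}}}$ is therefore the same as choosing $\hat{p}$ uniformly from $\mathbb{P}_1^{\otimes 2}$. For a CPTP map $\mathcal{E}$ on the two qubits and the environment, with Kraus operators $\{\hat{K}_a\}$, the twirled channel acts as
\begin{align}
\rho \mapsto \frac{1}{16} \sum_{\hat{p} \in \mathbb{P}_1^{\otimes 2}} \sum_a \hat{T}^{\dagger} \hat{p} \hat{T} \hat{K}_a \hat{T}^{\dagger} \hat{p} \hat{T} \, \rho \, \hat{T}^{\dagger} \hat{p} \hat{T} \hat{K}_a^{\dagger} \hat{T}^{\dagger} \hat{p} \hat{T}.
\end{align}
Here each Pauli acts as $\hat{p} \otimes \hat{I}_{\mathrm{env}}$; the phase of each Pauli cancels between the two sides. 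We would then define the conjugated channel $\mathcal{E}'(\cdot) = \hat{T} \mathcal{E}(\hat{T}^{\dagger} \cdot \hat{T}) \hat{T}^{\dagger}$. This is again CPTP, with Kraus operators $\hat{K}'_a = \hat{T} \hat{K}_a \hat{T}^{\dagger}$. The twirled channel then equals $\hat{T}^{\dagger} \big( \mathrm{Twirl}_{\mathbb{P}}(\mathcal{E}') \big) \hat{T}$, where $\mathrm{Twirl}_{\mathbb{P}}$ denotes the standard two-qubit Pauli twirl.

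The key step is the standard Pauli-twirl computation. Expand each $\hat{K}'_a = \sum_{\hat{q}} \hat{q} \otimes \hat{A}_{a,\hat{q}}$ in the Pauli basis on the system, with operator coefficients $\hat{A}_{a,\hat{q}}$ on the environment. Because $\hat{p}\hat{q}\hat{p} = \pm \hat{q}$, averaging over $\hat{p}$ gives
\begin{align}
\frac{1}{16}\sum_{\hat{p}} \chi_{\hat{p}}(\hat{q})\chi_{\hat{p}}(\hat{q}'),
\end{align}
where $\chi_{\hat{p}}(\hat{q}) = \pm 1$ records whether $\hat{p}$ commutes or anticommutes with $\hat{q}$. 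By orthogonality of the characters of $\mathbb{P}_1^{\otimes 2}$ modulo phases, this sum equals $\delta_{\hat{q}\hat{q}'}$. All cross terms $\hat{q} \neq \hat{q}'$ therefore vanish, and the twirled map becomes
\begin{align}
\sum_{\hat{q}} (\hat{q} \otimes \cdot) \, \Phi_{\hat{q}} \, (\hat{q} \otimes \cdot).
\end{align}
Here $\Phi_{\hat{q}}(\sigma) = \sum_a \hat{A}_{a,\hat{q}} \sigma \hat{A}_{a,\hat{q}}^{\dagger}$ are completely positive maps on the environment. Trace preservation of $\mathcal{E}'$, combined with the vanishing of the cross terms, gives $\sum_{\hat{q}} \mathrm{Tr}\,\Phi_{\hat{q}}(\sigma) = \mathrm{Tr}\,\sigma$. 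Hence, conditioned on the environment, the system experiences $\hat{q}$ with probability $\mathrm{Tr}\,\Phi_{\hat{q}}(\cdot)$. In particular, if the environment is traced out or uncorrelated, this is a Pauli channel with probabilities $p_{\hat{q}}$ summing to one.

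Conjugating back by $\hat{T}$ turns each $\hat{q} = \hat{P}_j \otimes \hat{P}_k$ into $\hat{\tau}_1^{\dagger}\hat{P}_j\hat{\tau}_1 \otimes \hat{\tau}_2^{\dagger}\hat{P}_k\hat{\tau}_2 \in \Gamma_{\hat{\mathcal{G}}}$. The result is stochastic error drawn from the known finite set $\Gamma_{\hat{\mathcal{G}}}$, which is exactly $\Gamma_{\hat{\mathcal{G}}}$-stochastic error.

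The main obstacle is the environment. The conditional probabilities $\mathrm{Tr}\,\Phi_{\hat{q}}$ may depend on the environment state, and the environment may carry correlations across gates. We would handle this as in the standard treatment of Ref.~\cite{Ferracin_2021}: state the result as a mixture over system-side unitaries in $\Gamma_{\hat{\mathcal{G}}}$, with environment-dependent completely positive weights. This is the sense of ``effectively reduced'' needed for condition $\mathbf{1e}$ and for Lemma~\ref{stochAndVDLemma}. A secondary point to verify is that $\mathbf{N2}$ lets the twirling gates be drawn independently of the error, so that the average over $\hat{p}$ is legitimate.
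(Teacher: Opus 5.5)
Your proof is correct, but it takes a different route from the paper.

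\textbf{The paper's route.} The paper never conjugates the channel. It applies its abstract twirling result, Lemma~\ref{MostGeneralTwirlLemma}, with $\mathbb{S}=Q=\Lambda=\Gamma_{\hat{\tau}_1}\otimes\Gamma_{\hat{\tau}_2}$, and checks each hypothesis in turn: closure, symmetric $\pm 1$ commutation functions, trace orthogonality, self-adjointness, and the $\Lambda$ summation condition. It does this by transporting the Pauli-group structure through a commutation-preserving bijection with $\mathbb{P}_1^{\otimes 2}$, so the summation condition reduces to ordinary Pauli character sums (Corollary~\ref{equivSumCorollary}).

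\textbf{Your route.} With $\hat{T}=\hat{\tau}_1\otimes\hat{\tau}_2$, you move the whole channel into the Pauli frame via $\mathcal{E}'=\hat{T}\mathcal{E}(\hat{T}^{\dagger}\cdot\hat{T})\hat{T}^{\dagger}$. You then run the standard two-qubit Pauli twirl and conjugate back. Both arguments rest on the same fact, $\Gamma_{\hat{\mathcal{G}}}=\hat{T}^{\dagger}\mathbb{P}_1^{\otimes 2}\hat{T}$.

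\textbf{What your route buys.} It is shorter, and in two respects more complete.
\begin{enumerate}
\item You expand the Kraus operators with operator-valued environment coefficients, which genuinely handles the system--environment maps allowed by $\mathbf{N1}$. Lemma~\ref{MostGeneralTwirlLemma} instead uses scalar coefficients $\mathrm{Tr}[\hat{\gamma}\hat{\mathcal{K}}]/\Xi$, which tacitly treats the Kraus operators as acting on the system alone.
\item You derive the normalization of the resulting weights from trace preservation; the paper leaves this implicit.
\end{enumerate}
Your route also sidesteps two slips in the paper's verification. The case split for $\hat{\gamma}_1\circ\hat{\gamma}_2$ has its two cases interchanged. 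The bijection is written as $\hat{\gamma}\mapsto\hat{T}^{\dagger}\hat{\gamma}\hat{T}$, whereas $\hat{\gamma}\mapsto\hat{T}\hat{\gamma}\hat{T}^{\dagger}$ is the map that actually lands in $\mathbb{P}_1^{\otimes 2}$.

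\textbf{What the paper's route buys.} Reuse. The same lemma, with the proper subset $\Lambda=\{\hat{I}\otimes\hat{I},\hat{X}\otimes\hat{X},\hat{Y}\otimes\hat{Y},\hat{Z}\otimes\hat{Z}\}$ and $Q$ an unflippable set, gives the XY twirl of Theorem~\ref{TwoFoldTwirlTheorem} via Corollary~\ref{twirlconditionsLemma}. That partial twirl is not unitarily equivalent to the full two-qubit Pauli twirl, so your change-of-basis trick alone would not cover it.
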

This means that -- in the aggregate -- the distribution the circuit provides a sample from is transformed to one provided by the same \emph{intended} circuit but experiencing only stochastic Pauli error.
We can now consolidate our considerations thus far -- in Sec.~\ref{TauDecSec} -- into a single definition (Def.~\ref{twirlDef}) that encapsulates our generalization of Pauli twirling.
\begin{definition}{$\left( \hat{\tau}_1, \hat{\tau}_2 \right)$-twirling a two-qubit gate}\( \\ \)
    \label{twirlDef}
    An instance of a two-qubit gate, $\hat{\mathcal{G}}$, with a $\hat{\tau}$-decomposition, is \underline{$\left( \hat{\tau}_1, \hat{\tau}_2 \right)$-twirled} by uniformly choosing $\hat{\tau}_1^{\dagger} \hat{P}_j \hat{\tau}_1 \otimes \hat{\tau}_2^{\dagger} \hat{P}_k \hat{\tau}_2 \in \Gamma_{\hat{\mathcal{G}}}$ and replacing $\hat{\mathcal{G}}$ with:
    \begin{align}
          \left( \hat{\tau}_1^{\dagger} \hat{P}_j \hat{\tau}_1 \otimes \hat{\tau}_2^{\dagger} \hat{P}_k \hat{\tau}_2 \right)\circ \hat{\mathcal{G}} \circ
          \left( \hat{P}^{\prime}_{j,k} \otimes \hat{Q}^{\prime}_{j,k} \right),
    \end{align}
    where $\hat{P}^{\prime}_{j,k} $ and $\hat{Q}^{\prime}_{j,k}$ are as implied by which $\hat{\tau}_1$ and $\hat{\tau}_2$ gates are used.
\end{definition}
The net effect of $\left( \hat{\tau}_1, \hat{\tau}_2 \right)$-twirling every instance of a $\hat{\tau}$-decomposible gate, $\hat{\mathcal{G}}$, occurring within a circuit, $\mathcal{C}$, is that all error (originating in a $\hat{\mathcal{G}}$ gate) in $\mathcal{C}$ is effectively reduced to $\Gamma_{\hat{\mathcal{G}}}$-stochastic error.
$(I,I)$-twirling reduces to the conventional Pauli twirling used in Ref~[13.14] for accrediting circuits with only Clifford two-qubit gates.

Combining  Lemma~\ref{twirlLemma} with the previously established ability of $\hat{\mathcal{G}}$ to normalize its respective $\Gamma_{\hat{\mathcal{G}}}$ results in our first main result on twirling.
\begin{theorem}
    \label{TwirlingTheorem}
    Assuming $\mathbf{N1}$ and $\mathbf{N2}$, any two-qubit gate, $\hat{\mathcal{G}} \in \mathbb{U}(4)$, with a $\hat{\tau}$-decomposition can have its error effectively reduced to $\Gamma_{\hat{\mathcal{G}}}$-stochastic error with the addition of only single-qubit gates around it.
\end{theorem}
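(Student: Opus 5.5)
The plan is to assemble the theorem directly from Lemma~\ref{GgateCommuteLemma} and Lemma~\ref{twirlLemma}, using the error model only to justify that the added single-qubit gates do not alter the error being twirled. Fix an instance of $\hat{\mathcal{G}}$ in a circuit. By $\mathbf{N1}$ its erroneous implementation is $\Tilde{G} = \mathcal{E}\circ\hat{\mathcal{G}}$, where $\mathcal{E}$ is an arbitrary CPTP map on the system and environment. We $\left(\hat{\tau}_1,\hat{\tau}_2\right)$-twirl this instance as in Def.~\ref{twirlDef}. Choose $(j,k)$ uniformly, place $\hat{P}^{\prime}_{j,k}\otimes\hat{Q}^{\prime}_{j,k}$ before the gate and $\hat{\tau}_1^{\dagger}\hat{P}_j\hat{\tau}_1\otimes\hat{\tau}_2^{\dagger}\hat{P}_k\hat{\tau}_2$ after it. These are single-qubit gates only, since $\Gamma_{\hat{\tau}_1},\Gamma_{\hat{\tau}_2}\subseteq\mathbb{U}(2)$ and $\hat{P}^{\prime}_{j,k}$, $\hat{Q}^{\prime}_{j,k}$ lie in these sets by Lemma~\ref{GgateCommuteLemma}.

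The first key step is to invoke $\mathbf{N2}$. The inserted gates are single-qubit gates in fixed locations, and only \emph{which} element of $\Gamma_{\hat{\mathcal{G}}}$ is chosen varies. Hence the CPTP map $\mathcal{E}$ is the same for every choice of $(j,k)$. This independence is essential: without it, the random conjugation would not act on one fixed map, and the average below would not be a twirl. The single-qubit gates themselves are treated as error-free here, or their error is absorbed into the neighbouring two-qubit-gate error in the standard way. Since the error model only assigns non-trivial CPTP error to preparation, two-qubit gates and measurement, I would note this explicitly.

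The second step applies Eqn.~\eqref{Eqn:GoverningForPQ'}, which follows from Lemma~\ref{GgateCommuteLemma}. The twirled instance equals
\begin{align}
\left(\hat{\tau}_1^{\dagger}\hat{P}_j\hat{\tau}_1\otimes\hat{\tau}_2^{\dagger}\hat{P}_k\hat{\tau}_2\right)\circ\mathcal{E}\circ\left(\hat{\tau}_1^{\dagger}\hat{P}_j\hat{\tau}_1\otimes\hat{\tau}_2^{\dagger}\hat{P}_k\hat{\tau}_2\right)\circ\hat{\mathcal{G}}.
\end{align}
Here I use that each element of $\Gamma_{\hat{\mathcal{G}}}$ is self-inverse up to phase, because Paulis are Hermitian and unitary. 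So the same operator appearing twice really is a conjugation $U\mathcal{E}U^{\dagger}$. Averaging over uniform $(j,k)$ (this averaging gives the statistics of many executions) yields $\frac{1}{16}\sum_{U\in\Gamma_{\hat{\mathcal{G}}}}U\mathcal{E}U^{\dagger}\circ\hat{\mathcal{G}}$. Lemma~\ref{twirlLemma} then says this averaged channel is $\Gamma_{\hat{\mathcal{G}}}$-stochastic error, that is, a probabilistic mixture of elements of $\Gamma_{\hat{\tau}_1}\otimes\Gamma_{\hat{\tau}_2}$ applied after the ideal $\hat{\mathcal{G}}$.

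The main obstacle I expect is the role of the environment and of correlations across gate instances. $\mathcal{E}$ acts on system plus environment, and the twirling unitaries act only on the system. I would rely on Lemma~\ref{twirlLemma} as stated, whose proof in Appendix~\ref{TwirlingTauDeets} handles this, presumably by writing $\mathcal{E}$ in a Kraus or operator basis $\{\hat{\tau}_1^{\dagger}\hat{P}_a\hat{\tau}_1\otimes\hat{\tau}_2^{\dagger}\hat{P}_b\hat{\tau}_2\}\otimes(\text{env})$ and using orthogonality of the conjugated Pauli basis to kill the cross terms. For a circuit with many instances of $\hat{\mathcal{G}}$, I would twirl each instance with independent randomness and apply the argument instance by instance, which is justified by linearity and the independence of the choices. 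Finally, adjacent single-qubit gates can be merged, so the overhead is indeed only single-qubit gates around $\hat{\mathcal{G}}$, as claimed.
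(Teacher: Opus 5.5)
Your proposal is correct and follows the paper's route. The paper gives no separate proof; it says the theorem follows from combining Lemma~\ref{twirlLemma} with the normalization property in Lemma~\ref{GgateCommuteLemma}. You do the same: Lemma~\ref{GgateCommuteLemma}, via Eqn.~\eqref{Eqn:GoverningForPQ'}, turns the added single-qubit gates around $\Tilde{G}$ into a conjugation of $\mathcal{E}$ by a uniformly random element of $\Gamma_{\hat{\mathcal{G}}}$, $\mathbf{N2}$ keeps $\mathcal{E}$ independent of that choice, and Lemma~\ref{twirlLemma} yields $\Gamma_{\hat{\mathcal{G}}}$-stochastic error. Your remarks on the elements of $\Gamma_{\hat{\mathcal{G}}}$ being self-inverse and on exactly where $\mathbf{N2}$ is used spell out what the paper leaves implicit.
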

The solitary outstanding issue regarding $\left( \hat{\tau}_1, \hat{\tau}_2 \right)$-twirling is exactly which two-qubit gates have $\tau$-decompsitions. Most interestingly, are any of these gates not Clifford?
This question can be answered by choosing $\hat{\tau}_1 = \hat{\tau}_2 = \hat{T}$ (where $\hat{T}$ is the T-gate~\cite{devoorhoede_2022}) and $\hat{\mathcal{M}} = \textit{cNOT}$. The resulting $\hat{\mathcal{G}} = (\hat{\tau}_1 \otimes \hat{\tau}_2)^{\dagger} \circ \hat{\mathcal{M}} \circ (\hat{\tau}_1 \otimes \hat{\tau}_2)$ can be confirmed to be non-Clifford by computing $\hat{\mathcal{G}}^{\dagger} \circ \left( \hat{X} \otimes \hat{I} \right) \circ \hat{\mathcal{G}}$ and seeing it is not a two-qubit Pauli string.

\subsection{QAP for Circuits with tau-Decomposible Two-Qubit Gates}

In this section, we show that every condition of the generalized QAP in Sec.~\ref{FormalDefSubSec} can be satisfied -- assuming arbitrary CPTP error -- by trap and target circuits using only a $\hat{\tau}$-decomposible two-qubit gate. 
To that end, we develop target and trap circuits that use $\hat{\tau}$-decomposible two-qubit gates as the only two-qubit gate.

This implies that the standard algorithm (Algorithm~\ref{StandardAccAlg}) can be used to generate an accreditation protocol -- via Theorem~\ref{definitionWorksTheorem} -- in any circumstance where the error in two-qubit gates with $\hat{\tau}$-decompositions can be twirled and the assumptions of Theorem~\ref{definitionWorksTheorem} are met.

These traps use the fact that the two-qubit gates are $\hat{\tau}$-decomposible to express \emph{all} the two-qubit gates in the circuit as Clifford two-qubit gates surrounded by non-Clifford single-qubit gates. 
This allows all these non-Clifford single-qubit gates to effectively cancel -- in the trap circuits -- with the non-Clifford single-qubit gates of neighboring two-qubit gates.
This is depicted in Fig.~\ref{fig:CancellingTau}. The resulting trap circuits are then akin to the trap circuits in Ref.~\cite{Ferracin_2021}.

\begin{figure*}%[h!]
    \centering
    \includegraphics[width=\linewidth]{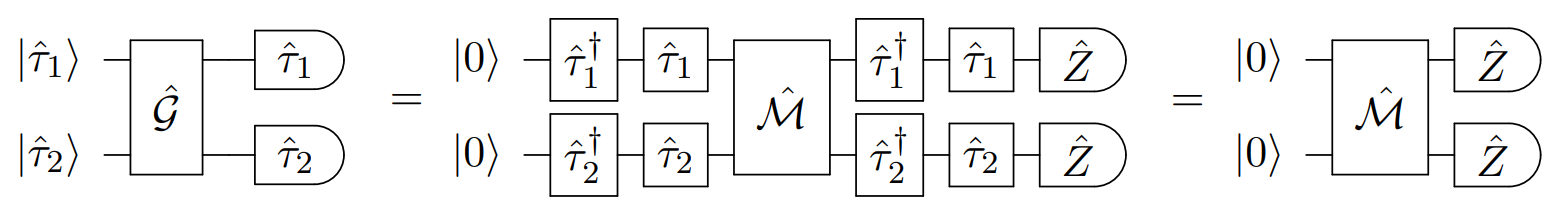}
    \caption{Demonstration of how $\hat{\tau}_1$ and $\hat{\tau}_2$ gates \emph{effectively} cancel – in the ideal case – throughout both the trap and target circuits in our new QAP, to
effectively reduce the, to the trap and target circuits in Ref.~\cite{Ferracin_2021}.}
    \label{fig:CancellingTau}
\end{figure*}

\subsubsection{Main Differences From Ref.~[17] and Introducing Delta Gates}
\label{MainDiffWithFerracin}

In the ideal -- but not likely -- case, where all the two-qubit gates are surrounded immediately by the same two-qubit gate acting on the same qubits, the trap circuits of this new protocol behave exactly as in the Protocol in Ref.~\cite{Ferracin_2021} does. In this case all the $\hat{\tau}_1$ and $\hat{\tau}_2$ gates in the trap circuits cancel with each other to leave exactly the trap circuits from the Protocol in Ref.~\cite{Ferracin_2021}, as in Fig.~\ref{surroundError}.

This cancellation then, as in the Protocol in Ref.~\cite{Ferracin_2021}, leaves a trap circuit that consists entirely of cNOT gates acting on either $\vert 0 \rangle^{\otimes N}$ or $\vert + \rangle^{\otimes N}$. In either case the cNOT gates leave the state invariant and so measurements in the same basis as the initial state give a known deterministic output, if no error occurs. If error does occur, it is effectively transformed to stochastic error and, for at least one of the possible initial states ($\vert 0 \rangle^{\otimes N}$ or $\vert + \rangle^{\otimes N}$), the expected outcome is not obtained, and so the error is detected.

Typically, we are not this fortunate. The cancellation of all the $\hat{\tau}_1$ and $\hat{\tau}_2$ gates does not occur automatically and we are in a scenario not covered by Ref.~\cite{Ferracin_2021}. Progress requires additional action.

To see exactly why, consider situations such as in Fig.~\ref{fig:whyNeedDelta}, where $\hat{\mathcal{G}}$ is any gate with a $\hat{\tau}$-decomposition. 
\begin{figure*}
    \centering 
    \includegraphics[width=\textwidth]{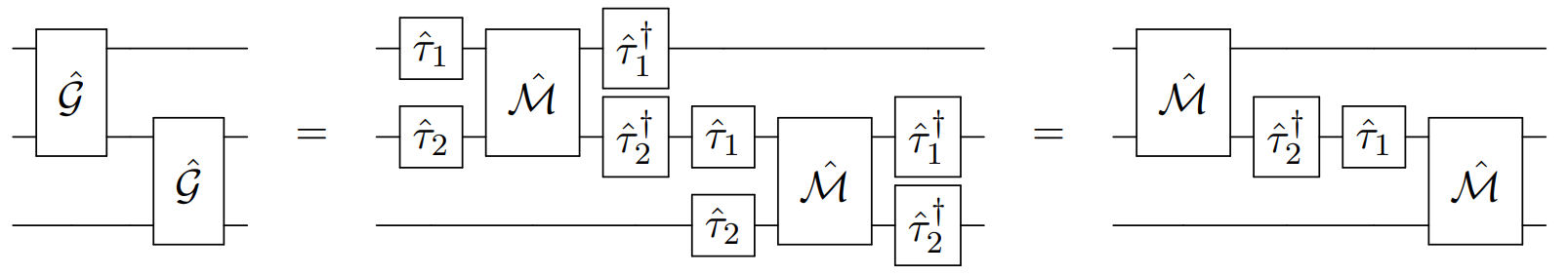}
    \caption[Example non-cancellation of $\hat{\tau}$ gates]{Example non-cancellation of $\hat{\tau}$ gates (assuming all single qubit gates not trapped in the middle qubit are free to leave the shown subcircuit and cancel). To enable these circuit to effectively reduce to the circuits from the Protocol in Ref.~\cite{Ferracin_2021}, additional intervention is required (using $\hat{\Delta}$ and/or $\hat{\Delta}^{\dagger}$ gates).} \label{fig:whyNeedDelta}
\end{figure*}
Fig.~\ref{fig:whyNeedDelta} demonstrates a situation that may arise and result in that $\tau$ gates not cancelling: a $\hat{\tau}_k$ gate meets a $\hat{\tau}^{\dagger}_j$ (where $j \not = k$). These gates do not generally cancel with each other. This situation will happen when two sequential two-qubit gates to act on a specific qubit each also act on another, different (from each other), qubit. This causes the mis-match between $\hat{\tau}_1$ and $\hat{\tau}_2$ that means they do not cancel.

To resolve these problems, we introduce a new single-qubit gate, $\hat{\Delta}$, at each site where two non-canceling $\hat{\tau}_1$ and $\hat{\tau}_2$ gates would otherwise meet. Technically a new $\hat{\Delta}$ gate must be defined for different $\hat{\tau}_1$ and $\hat{\tau}_2$ but the construction of the corresponding $\hat{\Delta}$ is simple and so, for convenience, we act as if $\hat{\tau}_1$ and $\hat{\tau}_2$ are fixed (as they are within a single use of this protocol, as the non-Clifford two-qubit gates entirely and uniquely determine them) and just define $\hat{\Delta}$ in terms of $\hat{\tau}_1$ and $\hat{\tau}_2$ in Def.~\ref{def:DeltaDef}.
\begin{definition}
\label{def:DeltaDef}
For any given $\hat{\tau}_1, \hat{\tau}_2  \in \mathbb{U}(2)$, we define the gate, $\hat{\Delta} \in \mathbb{U}(2)$, by:
\begin{align}
    \hat{\Delta} = \hat{\tau}_1^{\dagger} \hat{\tau}_2.
\end{align}
\end{definition}
An important property of these $\hat{\Delta}$ gates -- and the entire reason we have defined them -- is how they allow $\hat{\tau}_1$ and $\hat{\tau}_2$ gates to cancel with each other. This means that inserting $\hat{\Delta}$ gates (specific to the relevant $\hat{\tau}_1$ and $\hat{\tau}_2$ gates) into a circuit resolves the problem with the gates not canceling automatically, discussed earlier.  

This property of $\hat{\Delta}$ gates is formally stated and proven in Lemma~\ref{DeltaCommute}.
\begin{lemma} $\forall \hat{\tau}_1, \hat{\tau}_2 \in \mathbb{U}(2)$:
    \label{DeltaCommute}
    \begin{align}
        \hat{\tau}_1 \hat{\Delta} \hat{\tau}_2^{\dagger} = \hat{I}
        \textit{ and } \hat{\tau}_2 \hat{\Delta}^{\dagger} \hat{\tau}_1^{\dagger} = \hat{I}. \label{secondinDeltaCommute}
    \end{align}
\end{lemma}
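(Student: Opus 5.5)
The plan is a direct substitution of Def.~\ref{def:DeltaDef} into each of the two products in Eqn.~\eqref{secondinDeltaCommute}. The only facts needed are that $\hat{\tau}_1$ and $\hat{\tau}_2$ lie in $\mathbb{U}(2)$, so $\hat{\tau}_i \hat{\tau}_i^{\dagger} = \hat{\tau}_i^{\dagger} \hat{\tau}_i = \hat{I}$ for $i = 1, 2$, together with associativity of operator composition.

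For the first identity, I substitute $\hat{\Delta} = \hat{\tau}_1^{\dagger} \hat{\tau}_2$ and regroup:
\begin{align}
\hat{\tau}_1 \hat{\Delta} \hat{\tau}_2^{\dagger} = \left( \hat{\tau}_1 \hat{\tau}_1^{\dagger} \right) \left( \hat{\tau}_2 \hat{\tau}_2^{\dagger} \right) = \hat{I} \, \hat{I} = \hat{I}.
\end{align}
Here unitarity of $\hat{\tau}_1$ is used in the first bracket and unitarity of $\hat{\tau}_2$ in the second.

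For the second identity, I first compute the adjoint using the reversal rule $(\hat{A}\hat{B})^{\dagger} = \hat{B}^{\dagger} \hat{A}^{\dagger}$, which gives $\hat{\Delta}^{\dagger} = \hat{\tau}_2^{\dagger} \hat{\tau}_1$. Substituting and regrouping then yields
\begin{align}
\hat{\tau}_2 \hat{\Delta}^{\dagger} \hat{\tau}_1^{\dagger} = \left( \hat{\tau}_2 \hat{\tau}_2^{\dagger} \right) \left( \hat{\tau}_1 \hat{\tau}_1^{\dagger} \right) = \hat{I}.
\end{align}
Alternatively, the second identity follows from the first. Taking the adjoint of $\hat{\tau}_1 \hat{\Delta} \hat{\tau}_2^{\dagger} = \hat{I}$ gives $\hat{\tau}_2 \hat{\Delta}^{\dagger} \hat{\tau}_1^{\dagger} = \hat{I}^{\dagger} = \hat{I}$.

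There is no genuine obstacle in this argument. The only point requiring care is keeping the operator ordering straight when forming $\hat{\Delta}^{\dagger}$, since the two $\hat{\tau}$ gates generally do not commute. The identities hold exactly, with no phase ambiguity, for arbitrary $\hat{\tau}_1, \hat{\tau}_2 \in \mathbb{U}(2)$.
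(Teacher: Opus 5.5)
Your proposal is correct and follows the paper's proof: substitute $\hat{\Delta} = \hat{\tau}_1^{\dagger}\hat{\tau}_2$, regroup, use unitarity of $\hat{\tau}_1$ and $\hat{\tau}_2$, and get the second identity by Hermitian conjugation of the first. Your additional direct computation via $\hat{\Delta}^{\dagger} = \hat{\tau}_2^{\dagger}\hat{\tau}_1$ is an equally valid alternative to that last step.
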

\begin{proof}
        \begin{align}
       \hat{\tau}_1 \hat{\Delta} \hat{\tau}_2^{\dagger}
       =
       \hat{\tau}_1 \left( \hat{\tau}_1^{\dagger} \hat{\tau}_2 \right) \hat{\tau}_2^{\dagger}
       =
       \left( \hat{\tau}_1 \hat{\tau}_1^{\dagger} \right) \left( \hat{\tau}_2 \hat{\tau}_2^{\dagger} \right)
       =
       \hat{I}.
       \end{align}
    The second equation in Lemma~\ref{DeltaCommute} follows from the first via Hermitian conjugation.
\end{proof}
Lemma~\ref{DeltaCommute} enables us to remove the issue of the $\hat{\tau}_1$ and $\hat{\tau}_2$ gates not cancelling in Fig.~\ref{fig:whyNeedDelta}: by placing a $\hat{\Delta}$ or $\hat{\Delta}^{\dagger}$ gate where $\hat{\tau}_1$ and $\hat{\tau}_2$ gates would otherwise meet, the $\hat{\tau}_1$ and $\hat{\tau}_2$ gates instead surround the $\hat{\Delta}$ gate (or $\hat{\Delta}^{\dagger}$ gate, depending on circumstance) to create a situation as in the the left-hand-side of an equality in Eqn.~\eqref{secondinDeltaCommute}, Lemma~\ref{DeltaCommute} then implies that the $\hat{\tau}_1$, $\hat{\tau}_2$ and $\hat{\Delta}$ gates (and their Hermitian conjugates) all cancel out.
Lemma~\ref{DeltaCommute} also shows that the $\hat{\Delta}^{\dagger}$ does the same job when the order of the $\hat{\tau}_1$ and $\hat{\tau}_2$ gates is different, as in Fig.~\ref{fig:DeltaDaggerUsed}.
\begin{figure}
    \centering
    \includegraphics[width=0.5\textwidth]{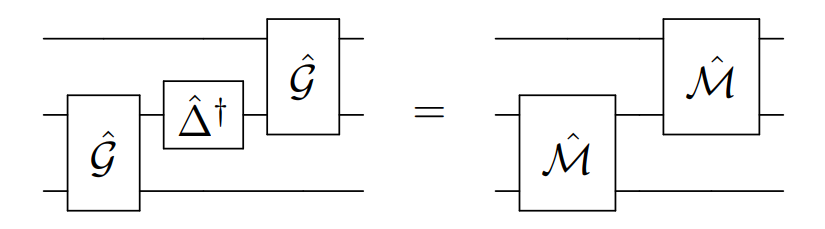}
    \caption{Example cancellation of $\hat{\tau}$ gates, using $\hat{\Delta}^{\dagger}$ and Lemma~\ref{DeltaCommute} (assuming all single qubit gates on either end of wires are free to leave the subcircuit and cancel).} \label{fig:DeltaDaggerUsed}
\end{figure}

Hence, once the $\hat{\Delta}$ gates are added, the ideal circuits behave exactly how they do in the Protocol in Ref.~\cite{Ferracin_2021} as once the $\hat{\tau}_1$, $\hat{\tau}_2$ and $\hat{\Delta}$ gates all cancel the circuits all reduce to exactly the circuits in the Protocol in Ref.~\cite{Ferracin_2021}.

However, these $\hat{\Delta}$ gates have a useful effect in the erroneous case too: allowing error to propagate exactly how it does in the Protocol in Ref.~\cite{Ferracin_2021}, once it is twirled (only a different twirl is used in the Protocol in Ref.~\cite{Ferracin_2021}).\\

Once error occurs and is twirled, it can be viewed as stochastic $\Gamma_{\hat{\tau}_1} \otimes \Gamma_{\hat{\tau}_2}$ error. This can cause problems as $\Gamma_{\hat{\tau}_1} \otimes \Gamma_{\hat{\tau}_2}$ produces stochastic error from different sets on different qubits, i.e., in general, $\Gamma_{\tau_1} \not = \Gamma_{\tau_2}$. This can cause issues as in Fig.~\ref{fig:whyNeedDeltaErrorProp}. Fortunately, this problem is similar to the one $\hat{\Delta}$ and $\hat{\Delta}^{\dagger}$ gates were created to solve and has the same cure.

In some fortunate situations the error propagation works exactly as in the Protocol in Ref.~\cite{Ferracin_2021}. For example assume a $\hat{\tau}^{\dagger}_1 \hat{Z} \hat{\tau}_1$ error occurs on the first qubit in the first $\hat{\mathcal{G}}$ gate (additionally assuming, for convenience that the $\hat{\mathcal{M}}$ corresponding to this choice of $\hat{\mathcal{G}}$ is a $c\hat{Z}$ gate). So it can be seen that the commutation between the error and any $\tau$-decomposible gate is entirely determined by the commutation between the corresponding Pauli error (in this case, $\hat{Z}$) and the relevant Clifford gate (in this case, $c\hat{Z}$).
However, in situations like as in Fig.~\ref{fig:whyNeedDeltaErrorProp}, the propagation of the stochastic error goes awry and the erroneous gates do not remain elements of $\Gamma_{\mathcal{G}}$ as they propagate.
\begin{figure*}
    \centering
        \includegraphics[width=0.75\textwidth]{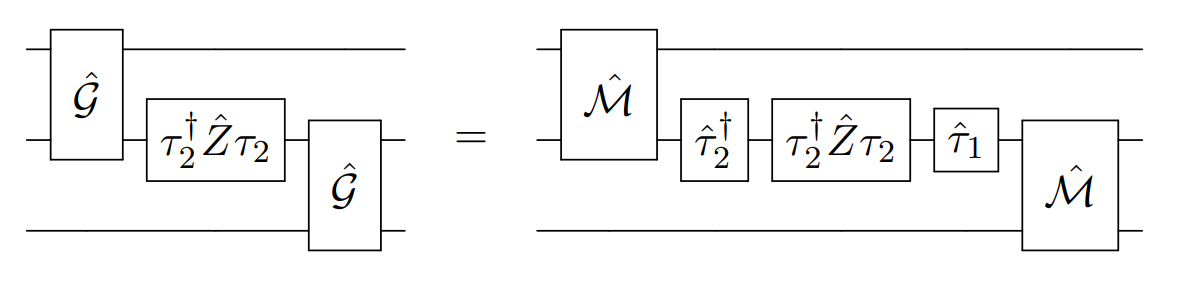}
    \caption{Example where the error fails to propagate due to a mis-match in the $\hat{\tau}$ gate and its inverses after the $\hat{Z}$ gate.} \label{fig:whyNeedDeltaErrorProp}
\end{figure*}
This problem is already resolved by exactly the same $\hat{\Delta}$ gates, in exactly the same position as were added to resolve the $\tau$ gates not cancelling, in Fig.~\ref{fig:whyNeedDelta}. This is due to Lemma~\ref{errorChangeLemma}.

\begin{lemma}
\label{errorChangeLemma}
   If $\hat{\mathcal{P}}$ is any single-qubit Pauli gate, then $\forall \hat{\tau}_1, \hat{\tau}_2 \in \mathbb{U}(2)$:
    \begin{align}
        \hat{\Delta} \hat{\tau}_2^{\dagger} \hat{\mathcal{P}} \hat{\tau}_2 
        =
        \hat{\tau}_1^{\dagger} \hat{\mathcal{P}} \hat{\tau}_1 \hat{\Delta}
        \text{ and } \hat{\Delta}^{\dagger} \hat{\tau}_1^{\dagger} \hat{\mathcal{P}} \hat{\tau}_1 
        =
        \hat{\tau}_2^{\dagger} \hat{\mathcal{P}} \hat{\tau}_2 \hat{\Delta}^{\dagger}.
    \end{align}
Thus pushing an element of $\Gamma_{\hat{\tau}_1}$ through $\hat{\Delta}$ maps it to the corresponding\footnote{Meaning the one with the same Pauli gate in the centre.} element in $\Gamma_{\hat{\tau}_2}$ and pushing an element of $\Gamma_{\hat{\tau}_2}$ through $\hat{\Delta}^{\dagger}$ maps it to the corresponding element in $\Gamma_{\hat{\tau}_1}$.
\end{lemma}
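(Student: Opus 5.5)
The plan is a direct substitution of Def.~\ref{def:DeltaDef}, $\hat{\Delta} = \hat{\tau}_1^{\dagger}\hat{\tau}_2$, followed by unitarity of $\hat{\tau}_1$ and $\hat{\tau}_2$, exactly as in the proof of Lemma~\ref{DeltaCommute}. No property of $\hat{\mathcal{P}}$ beyond its being some fixed operator should be needed; the Pauli hypothesis only matters for the interpretation in terms of $\Gamma_{\hat{\tau}_1}$ and $\Gamma_{\hat{\tau}_2}$.

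For the first identity I expand the left-hand side and use $\hat{\tau}_2\hat{\tau}_2^{\dagger}=\hat{I}$:
\begin{align}
\hat{\Delta}\hat{\tau}_2^{\dagger}\hat{\mathcal{P}}\hat{\tau}_2
=
\hat{\tau}_1^{\dagger}\hat{\tau}_2\hat{\tau}_2^{\dagger}\hat{\mathcal{P}}\hat{\tau}_2
=
\hat{\tau}_1^{\dagger}\hat{\mathcal{P}}\hat{\tau}_2 .
\end{align}
I then expand the right-hand side and use $\hat{\tau}_1\hat{\tau}_1^{\dagger}=\hat{I}$:
\begin{align}
\hat{\tau}_1^{\dagger}\hat{\mathcal{P}}\hat{\tau}_1\hat{\Delta}
=
\hat{\tau}_1^{\dagger}\hat{\mathcal{P}}\hat{\tau}_1\hat{\tau}_1^{\dagger}\hat{\tau}_2
=
\hat{\tau}_1^{\dagger}\hat{\mathcal{P}}\hat{\tau}_2 .
\end{align}
Both sides equal $\hat{\tau}_1^{\dagger}\hat{\mathcal{P}}\hat{\tau}_2$, which proves the first identity.

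For the second identity I could take the Hermitian conjugate of the first, but that gives $\hat{\tau}_2^{\dagger}\hat{\mathcal{P}}^{\dagger}\hat{\tau}_2\hat{\Delta}^{\dagger} = \hat{\Delta}^{\dagger}\hat{\tau}_1^{\dagger}\hat{\mathcal{P}}^{\dagger}\hat{\tau}_1$. This has the right shape because Paulis are Hermitian, so $\hat{\mathcal{P}}^{\dagger}=\hat{\mathcal{P}}$. To avoid relying on that, I will instead verify it directly with $\hat{\Delta}^{\dagger}=\hat{\tau}_2^{\dagger}\hat{\tau}_1$. Both sides then reduce to $\hat{\tau}_2^{\dagger}\hat{\mathcal{P}}\hat{\tau}_1$, and this route works for any $\hat{\mathcal{P}}$.

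The final sentence of the lemma is then a rereading in terms of Def.~\ref{def:pushing} and Def.~\ref{PauliVesicleSetDef}. The first identity says that an element of $\Gamma_{\hat{\tau}_2}$ with central Pauli $\hat{\mathcal{P}}$, placed on one side of $\hat{\Delta}$, becomes the element of $\Gamma_{\hat{\tau}_1}$ with the same central Pauli on the other side. The second identity says the analogous thing for $\Gamma_{\hat{\tau}_1}$ and $\Gamma_{\hat{\tau}_2}$ across $\hat{\Delta}^{\dagger}$. The statement's wording of which set goes through which gate appears slightly loose relative to the displayed equations. I will state the correspondence exactly as the equations give it and note that the two directions are related by inversion. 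There is no real obstacle here; the only care needed is keeping the order of operators and the direction of pushing straight.
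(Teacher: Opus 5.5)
Your proposal is correct and matches the paper's proof: substitute $\hat{\Delta}=\hat{\tau}_1^{\dagger}\hat{\tau}_2$ and use unitarity to bring both sides of the first identity to $\hat{\tau}_1^{\dagger}\hat{\mathcal{P}}\hat{\tau}_2$. The one difference is in the second identity: the paper gets it by Hermitian conjugation, silently using $\hat{\mathcal{P}}^{\dagger}=\hat{\mathcal{P}}$, while your direct check with $\hat{\Delta}^{\dagger}=\hat{\tau}_2^{\dagger}\hat{\tau}_1$ does not need that and holds for any $\hat{\mathcal{P}}$; your remark that the direction of pushing in the final sentence is loose is also fair.
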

\begin{proof}
        \begin{align}
        \hat{\Delta} \hat{\tau}_2^{\dagger} \hat{\mathcal{P}} \hat{\tau}_2 
        =
        \hat{\tau}_1^{\dagger} \hat{\tau}_2 \hat{\tau}_2^{\dagger} \hat{\mathcal{P}} \hat{\tau}_2
        =
        \hat{\tau}_1^{\dagger} \hat{\mathcal{P}} \hat{\tau}_1 \hat{\tau}_1^{\dagger} \hat{\tau}_2
        =
        \label{finalLineProvingFirstrrorChangeLemma}
        \hat{\tau}_1^{\dagger} \hat{\mathcal{P}} \hat{\tau}_1 \hat{\Delta}.
        \end{align}
   The second equation follows from the first via Hermitian conjugation.
\end{proof}
Hence, if Fig.~\ref{fig:whyNeedDeltaErrorProp} had a $\hat{\Delta}$ gate, as is already required (but not present in Fig.~\ref{fig:whyNeedDeltaErrorProp}) for the $\hat{\tau}$ gates to cancel in the ideal case (i.e., between the two $\hat{G}$ gates, on the middle qubit), then the error would propagate exactly as in the Protocol in Ref.~\cite{Ferracin_2021}.

As before, in the error-free case, $\hat{\Delta}^{\dagger}$ performs a similar task to $\hat{\Delta}$, when the order of $\hat{\tau}_1$ and $\hat{\tau}_2$ is reversed (i.e., when they switch places). 

In conclusion, while the error itself may be different; it propagates in exactly the same way as the corresponding Pauli error does in the Protocol in Ref.~\cite{Ferracin_2021}.

\subsubsection{Defining Trap and Target Circuits}
\label{sec:trapTarget}
As mentioned earlier, the only way the protocol of this paper differs from the Protocol in Ref.~\cite{Ferracin_2021} is in the trap and target circuits, \emph{all else is exactly the same}.
This section is where these changes are detailed and we present the trap and target circuits to use in our protocol.

We start by defining the simpler of the two, the target circuits, in Algorithm \ref{targBuildingAlg}.
This algorithm uses the notion of a $\tau$-basis, which we define in Def.~\ref{def:taubasis}.
\begin{definition}
    \label{def:taubasis}
    We define a specific \underline{$\tau$-basis}, $\beta_{\hat{\tau}}$, via a specific single-qubit gate, $\hat{\tau} \in \mathbb{U}(2)$. It is defined as: $\beta_{\hat{\tau}} = \{ \vert \hat{\tau} \rangle, \vert \lnot \hat{\tau} \rangle \}$, such that:
    \begin{align}
        \vert \hat{\tau} \rangle &= \hat{\tau} \vert 0 \rangle\\
       \text{ and } \vert \lnot \hat{\tau} \rangle &= \hat{\tau} \vert 1 \rangle.
    \end{align}
\end{definition}

\begin{figure}
    \centering
\begin{algorithm}[H]
%\SetAlgoLined
$\mathbf{Input:}$ \\
$\bullet$ A circuit, $\mathcal{C}$\\
$\bullet$ $\hat{\tau}_1$, $\hat{\tau}_2$ from the two-qubit gate's $\hat{\tau}$-decomposition
\begin{enumerate} 
     \item newTarg = $\mathcal{C}$
    
     \item \For{Qubit with initial state $\vert i \rangle$}{
     \begin{enumerate} 
    \item Change the initial state of the qubit to $\vert \hat{\tau}_1 \rangle$ (defined via Def.~\ref{def:taubasis})\\
    \item Add a gate mapping $\vert \hat{\tau}_1 \rangle$) to the original input state, $\vert i \rangle$ immediately after the state preparation.
    \end{enumerate}
    }
     \item \For{Measurment in $\vert i \rangle$ basis}{
     \begin{enumerate} 
    \item Change the measurement basis to $\beta_{\hat{\tau}_1}$
    \item  Before the measurement add a single-qubit gate mapping from the old measurement basis to the $\beta_{\hat{\tau}}$ basis
    \end{enumerate}
    }
     \item $\left( \hat{\tau}_1, \hat{\tau}_2 \right)$-\textit{twirl, as in Def.}~\ref{twirlDef} \textit{, every } $\mathcal{G}$ \textit{ gate in } newTarg
    
     \item With probability $0.5$:
     \begin{enumerate} 
     \item Add $\hat{\tau}_1^{\dagger} \hat{Z} \hat{\tau}_1$ on each qubit after state preparation of newTarg\\
 \item Add $\hat{\tau}_1^{\dagger} \hat{Z} \hat{\tau}_1$ before each measurement of newTarg
\end{enumerate}
\end{enumerate}
$\mathbf{Return}:$ newTarg
\caption{Algorithm to obtain a single target circuit
 \label{targBuildingAlg}}
\end{algorithm}
\end{figure}

With the target circuits established, we define the trap circuits via Algorithm \ref{trapBuildingAlg} to construct the corresponding traps from a given input circuit.

The final step is to prove that the given targets and traps provide a QAP, as in Theorem~\ref{accreditationTheorem}.

\begin{figure}[h!]
    \centering
\begin{algorithm}[H]
%\SetAlgoLined
$\mathbf{Input:}$ \\
$\bullet$ A circuit, $\mathcal{C}$\\
$\bullet$ $\hat{\tau}_1$, $\hat{\tau}_2$ from the two-qubit gate's $\hat{\tau}$-decomposition
 \begin{enumerate} 
    \item newTrap = $\mathcal{C}$

    \item Remove all single-qubit gates from newTrap, replacing them with identity gates
     
    \item Apply Algorithm \ref{targBuildingAlg} to newTrap
    
    \item Add $\hat{\Delta}$ and $\hat{\Delta}^{\dagger}$ gates to newTrap as per Algorithm \ref{addDeltaAlg}
    
    \item With probability $0.5$, add $\hat{\tau}_1^{\dagger} H \hat{\tau}_1$ immediately before measurement and after state preparation to newTrap.
\end{enumerate}
$\mathbf{Return}:$ newTrap
\caption{Algorithm to obtain a single trap circuit
 \label{trapBuildingAlg}}
\end{algorithm}
\end{figure}
\begin{figure}[h!]
    \centering
\begin{algorithm}[H]
%\SetAlgoLined
$\mathbf{Input:}$ \\
$\bullet$ A circuit, $\mathcal{C}$\\
$\bullet$ Descriptions of $\hat{\tau}_1$ and $\hat{\tau}_2$\\
 \begin{enumerate}
 
 \item \For{Two-qubit gate, $g$, in $\mathcal{C}$}{
 \begin{enumerate}
    \item \For{Qubit, Q, the gate, $g$, acts on}{
    \begin{enumerate}
        \item  index[$g$, Q] = which of $\hat{\tau}_1$ or $\hat{\tau}_2$ acts on Q in the $\hat{\tau}$-decomposition of $g$
         \end{enumerate}
    }
     \end{enumerate}
 }
 
\item \For{Gate, $g$, in $\mathcal{C}$}{
\begin{enumerate}
    \item \For{Qubit, Q, the gate acts on}{
    \begin{enumerate} 
    \item Follow Q after $g$ until another $\mathcal{G}$ gate, $g2$, acts on Q

    \item \If{index[$g$, Q] $ \not =$ index[$g2$, Q]}{
    \begin{enumerate}
        \item  Calculate which of $\hat{\Delta}$ or $\hat{\Delta}^{\dagger}$ maps index[$g$, Q] to index[$g2$, Q]
        \item Add the above chosen gate to qubit Q after $g$
        \end{enumerate}
        }
    \end{enumerate}
    }
    \end{enumerate}
 }

 \item \For{Qubit, Q, in the circuit}{
 \begin{enumerate} 
    \item Let gate, $g_{\mathrm{last}} (Q)$ be the last gate to act on qubit $Q$ before measurement
    \item \If{index[$g$, Q] $ \not =$ $\hat{\tau}_1$}{
    \begin{enumerate}
        \item Add a $\hat{\Delta}$ gate immediately before measurement
    \end{enumerate}
    }

 \end{enumerate} }
\end{enumerate}
$\mathbf{Return}:$ Updated $\mathcal{C}$
\caption{Algorithm to allow error propagation in traps
 \label{addDeltaAlg}}
\end{algorithm}
\end{figure}

\begin{theorem}
    \label{accreditationTheorem}
    Assuming $\mathbf{N1}$ and $\mathbf{N2}$, any circuit where the only type of two-qubit gate, $\hat{\mathcal{G}}$, is $\hat{\tau}$-decomposable can be accredited.
\end{theorem}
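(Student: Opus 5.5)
The plan is to show that Algorithm~\ref{targBuildingAlg} and Algorithm~\ref{trapBuildingAlg} (with Algorithm~\ref{addDeltaAlg}) give a triplet $\{\mathcal{E}, P_{\mathrm{targ}}, P_{\mathrm{trap}}\}$ that satisfies every condition of Def.~\ref{accreditationFormalDEf}, with $\mathcal{E}$ the error model of Sec.~\ref{Sec:ErrorModel}. Theorem~\ref{definitionWorksTheorem} then says Algorithm~\ref{StandardAccAlg} accredits the circuit. I will check the conditions in order, from the easiest to the hardest.

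First, the bookkeeping. Conditions $\mathbf{1a}$ and $\mathbf{2a}$ hold by construction, since both algorithms are randomized classical procedures that run in polynomial time. Conditions $\mathbf{1b}$, $\mathbf{2b}$, $\mathbf{1c}$ and $\mathbf{2c}$ hold because the qubit count never changes. Each two-qubit gate gains at most a constant number of single-qubit gates: twirling gates, possibly a $\hat{\Delta}$ or $\hat{\Delta}^{\dagger}$, and basis-change gates at preparation and measurement. In practice adjacent single-qubit gates are merged into one. For $\mathbf{1d}$, the added basis-change gates undo the change of preparation state and measurement basis. The twirl inserts $\Gamma_{\hat{\mathcal{G}}}$ elements that cancel ideally by Lemma~\ref{GgateCommuteLemma}. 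The optional $\hat{\tau}_1^{\dagger}\hat{Z}\hat{\tau}_1$ pair acts diagonally in $\beta_{\hat{\tau}_1}$, so it leaves the measurement statistics unchanged. For $\mathbf{2d}$, expand each $\hat{\mathcal{G}}$ into its $\hat{\tau}$-decomposition. By Lemma~\ref{DeltaCommute}, the $\hat{\Delta}$ gates placed by Algorithm~\ref{addDeltaAlg} make every $\hat{\tau}$ cancel, as in Fig.~\ref{fig:CancellingTau}. What remains is a circuit of the Clifford gates $\hat{\mathcal{M}}$ acting on $\vert 0\rangle^{\otimes N}$, or on $\hat{H}^{\otimes N}\vert 0\rangle^{\otimes N}$ in the other branch. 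By Def.~\ref{tauDecompDef}, both states are fixed up to phase, so the output $m$ is deterministic and classically computable. Condition $\mathbf{3}$ follows from $\mathbf{N2}$: traps and targets differ only in which single-qubit gates occupy the same locations, provided the trap keeps identity placeholders and places the $\hat{\Delta}$ gates at slots shared with the target. I will state this placement convention explicitly.

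Second, conditions $\mathbf{1e}$ and $\mathbf{2e}$. Under $\mathbf{N1}$, each erroneous $\hat{\mathcal{G}}$ is $\mathcal{E}\circ\hat{\mathcal{G}}$. By Eqn.~\eqref{Eqn:GoverningForPQ'} and Lemma~\ref{twirlLemma}, averaging over the uniformly random twirl reduces this to $\Gamma_{\hat{\mathcal{G}}}$-stochastic error. This is exactly Theorem~\ref{TwirlingTheorem}, and $\mathbf{N2}$ guarantees that the randomizing single-qubit gates do not change $\mathcal{E}$. Preparation and measurement errors are twirled in the same way, using the random $\hat{\tau}_1^{\dagger}\hat{Z}\hat{\tau}_1$ and $\hat{\tau}_1^{\dagger}\hat{H}\hat{\tau}_1$ insertions in the $\tau_1$-rotated frame. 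Their effect conjugated by $\hat{\tau}_1$ is the standard dephasing and Hadamard twirl used in Ref.~\cite{Ferracin_2021}. The twirls at distinct gates are independent, so the averaged circuit carries stochastic error drawn from a known finite set, and Def.~\ref{def:probOfError} applies.

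The main obstacle is $\mathbf{2f}$: showing that any nontrivial stochastic error is detected with probability at least a known $k$. I will reduce this to Ref.~\cite{Ferracin_2021}. Using Lemma~\ref{errorChangeLemma}, a $\Gamma_{\hat{\tau}_i}$ error propagates through the $\hat{\Delta}$ gates as the corresponding element of the other vesicle set. Through $\hat{\mathcal{G}}$, it propagates as the underlying Pauli does through $\hat{\mathcal{M}}$. Conjugating the whole trap by the product of $\hat{\tau}$ gates therefore makes it unitarily equivalent to a Ferracin-type trap: $\hat{\mathcal{M}}$ gates on $\vert 0\rangle^{\otimes N}$ or $\vert +\rangle^{\otimes N}$, with Pauli stochastic error at the same locations. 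The final $\hat{\Delta}$ gates added by Algorithm~\ref{addDeltaAlg} before measurement bring every qubit back to the $\beta_{\hat{\tau}_1}$ frame, so measurement is the standard computational- or Hadamard-basis measurement in that frame. Ref.~\cite{Ferracin_2021} then shows that any nonidentity Pauli error flips the outcome in at least one of the two basis branches, which gives $k=1/2$ (or whatever constant applies there). The delicate point is confirming that $\hat{\mathcal{M}}$ plays the role of cNOT in that argument, namely that Pauli errors propagated through $\hat{\mathcal{M}}$ remain detectable. I would handle this by noting that the detection argument only needs $\hat{\mathcal{M}}$ to be Clifford and to stabilize both product states, which is exactly what Def.~\ref{tauDecompDef} provides.
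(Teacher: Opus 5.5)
Your proposal is correct and follows the paper's proof essentially step for step. You check each condition of Def.~\ref{accreditationFormalDEf} and invoke Theorem~\ref{definitionWorksTheorem}, using the $\hat{\Delta}$-cancellation for $\mathbf{2d}$ (Lemma~\ref{simulable}), the $\Gamma_{\hat{\mathcal{G}}}$ twirl for $\mathbf{1e}$/$\mathbf{2e}$ (Lemma~\ref{toPauli}), detection with $k=1/2$ via the random $\hat{\tau}_1^{\dagger}\hat{H}\hat{\tau}_1$ layers for $\mathbf{2f}$ (Lemma~\ref{allDetectLemma}), and $\mathbf{N2}$ for $\mathbf{3}$; your conjugation into a Ferracin-type trap is the paper's error-pushing argument written in the $\hat{\tau}$-rotated frame.

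Two points are inherited from the paper rather than introduced by you. First, your $\mathbf{2f}$ step only shows that errors whose \emph{net} propagated Pauli is non-identity are caught with probability $\geq 1/2$, so error patterns that cancel are neglected. Second, $\hat{\tau}_1^{\dagger}\hat{Z}\hat{\tau}_1$ is diagonal in $\{\hat{\tau}_1^{\dagger}\vert 0\rangle, \hat{\tau}_1^{\dagger}\vert 1\rangle\}$, the basis used in Lemmas~\ref{statePrepTwirlLemma} and~\ref{measTwirlError}. It is not in general diagonal in $\beta_{\hat{\tau}_1}$ as literally defined in Def.~\ref{def:taubasis}.
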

\begin{proof}
We proceed via Theorem~\ref{definitionWorksTheorem}, showing the conditions required for it to apply (that is, the conditions in Def.~\ref{accreditationFormalDEf}) are met. This then implies that, using Algorithm~\ref{StandardAccAlg}, the trap and target circuits generated by Algorithm~\ref{trapBuildingAlg} and Algorithm~\ref{targBuildingAlg}, respectively, successfully accredit circuit executions.\\

\noindent \underline{Conditions on the target circuit}
\begin{enumerate}
\item [$\mathbf{1a:}$] The target circuit can be created in time linear in the number of gates in the input circuit, by Algorithm~\ref{targBuildingAlg}.
\item [$\mathbf{1b:}$] The number of gates in the target can be seen -- in Algorithm~\ref{targBuildingAlg} -- to be linear in the size of the input circuit.
\item [$\mathbf{1c:}$] The number of qubits in the target circuit is exactly the same as in the input circuit: Algorithm~\ref{targBuildingAlg} adds no qubits.
\item [$\mathbf{1d:}$] Algorithm~\ref{targBuildingAlg} produces a circuit with exactly the same probabilities of each measurement outcome as in the input circuit, in the ideal case -- as shown in Lemma~\ref{lem:SameInIdealCase}.
\item [$\mathbf{1e:}$] The twirling in Algorithm~\ref{targBuildingAlg} reduces all error in the execution of the target circuit to $\Gamma_{\hat{\mathcal{G}}}$-stochastic error, as in Lemma~\ref{toPauli}.
\end{enumerate}

\noindent \underline{Conditions on the trap circuits}
\begin{enumerate}
 \item [$\mathbf{2a:}$]  The required trap circuits can be created in time linear in the number of gates in the input circuit, by Algorithm~\ref{trapBuildingAlg}.
\item [$\mathbf{2b:}$] The number of gates in any trap can be seen -- in Algorithm~\ref{trapBuildingAlg} -- to be linear in those of the input circuit.
\item [$\mathbf{2c:}$] The number of qubits in the trap circuits is exactly the same as in the input circuit: Algorithm~\ref{trapBuildingAlg} adds no qubits.
\item [$\mathbf{2d:}$] Lemma~\ref{simulable} shows the output of any trap circuit not experiencing error is fixed and classically computable.
\item [$\mathbf{2e-f:}$] Lemma~\ref{toPauli} implies that, presupposing assumptions $\mathbf{N1}$ and $\mathbf{N2}$, all error is equivalent to stochastic $\Gamma_{\hat{\tau}_1}$ error. Lemma~\ref{allDetectLemma} then implies that all stochastic $\Gamma_{\hat{\tau}_1}$ error occurring in an execution of a trap circuit is detected with probability of at least $0.5$ as the layers of $\hat{\tau}^{\dagger}_1 H \hat{\tau}_1$ gates are present with probability $0.5$.\\
\end{enumerate}
\noindent \underline{Relation between trap and target error}
\begin{enumerate}
    \item [$\mathbf{3:}$] As trap and target circuits only differ in their single-qubit gates, $\mathbf{N2}$ implies that the error in trap and target circuit executions is identical.
\end{enumerate}
\end{proof}
The proof of Theorem~\ref{accreditationTheorem} has made use of Lemmas~\ref{simulable},~\ref{lem:SameInIdealCase},~\ref{toPauli},~and~\ref{allDetectLemma} which are proven in Appendix~\ref{sec:theorem3usedLemmas}.

\section{Twirling and Accrediting XY-Interaction Gates}
\label{XYDecSec}

We now move to applying our QAP to gates that more closely reflect those used in contemporary quantum hardware.  These are based on the
XY interaction and are in extensive use today (and can be performed with high fidelity~\cite{Abrams2020}): in part because they are a subset of the fSim gate~\cite{PhysRevLett.125.120504, Arute2019, Nguyen_2024}.
Additionally, any XY-interaction gate can be performed using a single iSWAP-like gate or CPHASE gate already implemented at Google~\cite{PhysRevLett.125.120504}, and can be implemented using a single Givens gate~\cite{Kivlichan_2018} surrounded by single-qubit gates.

We begin by defining the XY-interaction gate~\cite{Schuch_2003}.
\begin{definition}
    \label{def:XYinteractGate}
    A gate, $\hat{\mathcal{G}} \in \mathbb{U}(4)$, is an \underline{XY-interaction gate} if it can be expressed, for some value of $t \in \mathbb{R}$, as:
    \begin{align}
    \label{eXYGateDef}
    \hat{\mathcal{G}} = \exp\left(-i t \left( \hat{X}_1 \hat{X}_2 + \hat{Y}_1 \hat{Y}_2 \right)\right).
\end{align}
\end{definition}
If such a $t \in \mathbb{R}$ exists for a gate, $\hat{\mathcal{G}}$, we refer to Eqn.~\eqref{eXYGateDef} as the XY-decomposition of $\hat{\mathcal{G}}$. 
As with $\tau$-decompositions, not every gate has an XY-decomposition but we refer to those that do as XY-decomposable.

\subsection{Twirling XY-Interaction Gates}
\label{subsec:twirlingXYInters}

We first ask if error in these gates can be twirled to stochastic Pauli error. This begins with a basic consideration of the commutation properties of XY-interaction gates, in Lemma~\ref{commutingLemmaRestate}.
\begin{lemma}
    \label{commutingLemmaRestate}
    Let $\hat{H} = \hat{X}_1 \hat{X}_2 + \hat{Y}_1 \hat{Y}_2$, then
     $\forall \Bar{\mathfrak{B}}_j \in \{ \hat{X}^{\otimes 2}, \hat{Y}^{\otimes 2}, \hat{Z}^{\otimes 2} \}$ and $\forall t \in \mathbb{R}$,
    \begin{align}
        \left [ \Bar{\mathfrak{B}}_j, \exp \left( - i \hat{H} t /2 \right)\right ] = 0.
    \end{align}
\end{lemma}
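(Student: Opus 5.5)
The plan is to reduce the claim to commutation with the generator. If an operator $\hat{A}$ commutes with $\hat{H}$, then $\hat{A}$ commutes with every power $\hat{H}^n$. Applying this term by term to the power series $\exp(-i\hat{H}t/2)=\sum_n (-it/2)^n\hat{H}^n/n!$, which converges absolutely in finite dimensions, gives $[\hat{A},\exp(-i\hat{H}t/2)]=0$ for all $t\in\mathbb{R}$. So it suffices to show $[\Bar{\mathfrak{B}}_j,\hat{H}]=0$ for each of $\hat{X}^{\otimes 2}$, $\hat{Y}^{\otimes 2}$ and $\hat{Z}^{\otimes 2}$.

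Each $\Bar{\mathfrak{B}}_j$ and each of $\hat{X}_1\hat{X}_2$, $\hat{Y}_1\hat{Y}_2$ is a two-qubit Pauli string. Two Pauli strings either commute or anticommute. They commute exactly when the number of tensor positions in which their single-qubit factors anticommute is even.

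With this rule the cases are immediate:
\begin{itemize}
\item $\hat{X}^{\otimes 2}$ commutes with $\hat{X}_1\hat{X}_2$ trivially. Against $\hat{Y}_1\hat{Y}_2$ it anticommutes in both positions, so the two strings commute.
\item $\hat{Y}^{\otimes 2}$ is symmetric to the previous case.
\item $\hat{Z}^{\otimes 2}$ anticommutes in both positions with both $\hat{X}_1\hat{X}_2$ and $\hat{Y}_1\hat{Y}_2$, so it commutes with each.
\end{itemize}
By linearity of the commutator, each $\Bar{\mathfrak{B}}_j$ therefore commutes with $\hat{H}=\hat{X}_1\hat{X}_2+\hat{Y}_1\hat{Y}_2$.

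A one-line check of the parity rule for two qubits: $(\hat{A}_1\otimes\hat{A}_2)(\hat{B}_1\otimes\hat{B}_2)=\hat{A}_1\hat{B}_1\otimes\hat{A}_2\hat{B}_2$. Each factor picks up a sign $\pm1$ when reordered, and two minus signs cancel.

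Nothing here is a real obstacle. The only point needing care is the passage from commuting with $\hat{H}$ to commuting with its exponential, and the power-series argument settles it.
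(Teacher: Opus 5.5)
Your proposal is correct and follows essentially the paper's argument: both show that each $\Bar{\mathfrak{B}}_j$ commutes with $\hat{X}_1\hat{X}_2$ and with $\hat{Y}_1\hat{Y}_2$, using the fact that tensor-squaring two anticommuting single-qubit Paulis gives commuting products. You also spell out the power-series step from $[\Bar{\mathfrak{B}}_j,\hat{H}]=0$ to commutation with $\exp(-i\hat{H}t/2)$, which the paper leaves implicit.
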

\begin{proof}
    For any choice of $\Bar{\mathfrak{B}}_j \in \{ \hat{X}^{\otimes 2}, \hat{Y}^{\otimes 2}, \hat{Z}^{\otimes 2} \}$, $\Bar{\mathfrak{B}}_j$ commutes with both $\hat{X}_1 \hat{X}_2$ and $\hat{Y}_1 \hat{Y}_2$ as all Pauli gates either commute or anti-commute: if two Pauli gates anti-commute then if we take the tensor product of each with itself then these products commute. Therefore, $\forall \Bar{\mathfrak{B}}_j \in \{ \hat{X}^{\otimes 2}, \hat{Y}^{\otimes 2}, \hat{Z}^{\otimes 2} \}$ and $\forall t \in \mathbb{R}$,
    \begin{align}
        \left [ \Bar{\mathfrak{B}}_j, \exp \left( - i \hat{H} t /2 \right)\right ] = 0.
    \end{align}
\end{proof}
Therefore, for an XY-interaction gate, $\hat{\mathcal{G}}$, expressible as in Eqn.~\eqref{eXYGateDef}, if $\hat{\mathcal{G}}$ is implemented erroneously, with CPTP error, i.e., $\Tilde{G} = \mathcal{E} \circ \hat{\mathcal{G}}$ (again, as in Fig.~\ref{surroundError}, we note that this is an abuse of notation but it is done for readability and just means apply the unitary $\hat{\mathcal{G}}$ followed by the CPTP map $\mathcal{E}$) then,  $\forall \hat{P} \in \mathbb{P}_1$, the circuit equation in Fig.~\ref{surroundErroreXY} holds.
\begin{figure*}
    \centering
    \includegraphics[]{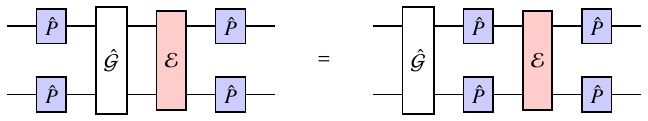}
\caption{ $\hat{P} \otimes \hat{P} \in \mathbb{P}_1^{\otimes 2}$ commuting through a gate, $\hat{\mathcal{G}}$, with an XY-decomposition, to surround CPTP error, $\mathcal{E}$.}
\label{surroundErroreXY}
\end{figure*}

Fig.~\ref{surroundErroreXY} is similar to Fig.~\ref{surroundError} as they both demonstrate how their respective methods to twirl error occurring in gates manages to surround the CPTP error with the required single-qubit gates.
As a result of Fig.~\ref{surroundErroreXY}, any XY-twirlable error (defined in Def.~\ref{XYTwirlableDef}) occurring in the two-qubit gate, $\hat{\mathcal{G}}$, can be twirled to stochastic Pauli error, as stated in Theorem~\ref{TwoFoldTwirlTheorem} which is proven in Appendix~\ref{ProofXyTwirlTheoremAppendix}.
\begin{definition}
    \label{XYTwirlableDef}
    CPTP error is \underline{XY-twirlable} if its Kraus operators all have a Pauli decomposition such that all terms with non-zero coefficients are exclusively from a set, $\mathbb{K}$, such that:
    \begin{align}
        &\bullet \mathbb{K} \subset \mathbb{P}_1^{\otimes 2}\\
        &\bullet \forall \hat{P}, \hat{Q} \in \mathbb{P}_1 , \hat{P} \otimes \hat{Q} \in \mathbb{K} \Rightarrow \hat{Q} \otimes \hat{P}  \not \in \mathbb{K}.
    \end{align}
    An exception to the second condition is made for $\hat{I} \otimes \hat{I}$. We refer to $\mathbb{K}$ as an \underline{unflippable set}. Examples of unflippable sets include:
    \begin{enumerate}
        \item $\big \{ \hat{I} \otimes \hat{I}, \hat{X} \otimes \hat{I},\hat{Y} \otimes \hat{I}, \hat{Z} \otimes \hat{I} \big \}$
        \item $\big \{ \hat{I} \otimes \hat{I}, \hat{X} \otimes \hat{Y}, \hat{Z} \otimes \hat{Y}, \hat{Z} \otimes \hat{X}, \hat{I} \otimes \hat{X} \big \}$
        \item $\big \{  \hat{Z} \otimes \hat{Y}, \hat{Z} \otimes \hat{X}, \hat{I} \otimes \hat{X}, \hat{Y} \otimes \hat{I} \big \}$
    \end{enumerate}
\end{definition}
\begin{theorem}
    \label{TwoFoldTwirlTheorem}
    All XY-twirlable errors occurring in a XY-decomposable gate can be twirled to stochastic Pauli error by conjugating it with Pauli gates (including the identity) chosen uniformly at random.
\end{theorem}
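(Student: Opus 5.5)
The plan is to use Fig.~\ref{surroundErroreXY} to reduce the problem to conjugating the CPTP map alone, and then to compute the averaged channel term by term in the Pauli expansion of its Kraus operators.

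\textbf{Step 1: reduce to a twirl of $\mathcal{E}$.} By Lemma~\ref{commutingLemmaRestate}, every $\hat{P}\otimes\hat{P}$ with $\hat{P}\in\mathbb{P}_1$ commutes with $\hat{\mathcal{G}}$. Hence inserting $\hat{P}\otimes\hat{P}$ before $\tilde{G}=\mathcal{E}\circ\hat{\mathcal{G}}$ and again after it is equivalent to conjugating only $\mathcal{E}$ by $\hat{P}\otimes\hat{P}$. By $\mathbf{N2}$, these extra single-qubit gates do not change $\mathcal{E}$. Choosing $\hat{P}$ uniformly therefore implements the channel
\begin{align}
\bar{\mathcal{E}}(\rho)=\frac14\sum_{\hat{P}\in\mathbb{P}_1}\sum_{\ell}(\hat{P}\otimes\hat{P})\hat{K}_\ell(\hat{P}\otimes\hat{P})\rho(\hat{P}\otimes\hat{P})\hat{K}_\ell^{\dagger}(\hat{P}\otimes\hat{P}).
\end{align}

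\textbf{Step 2: expand the Kraus operators.} Write each Kraus operator as $\hat{K}_\ell=\sum_{\sigma\in\mathbb{K}}c_{\ell,\sigma}\sigma$, where $\mathbb{K}$ is the unflippable set from Def.~\ref{XYTwirlableDef}; any environment factor can be absorbed into the coefficients. Substituting, $\bar{\mathcal{E}}$ becomes a sum of terms $c_{\ell,\sigma}\bar{c}_{\ell,\sigma'}\,\sigma\rho\sigma'$, each multiplied by the factor $\frac14\sum_{\hat{P}}\chi_{\hat{P}}(\sigma)\chi_{\hat{P}}(\sigma')$. Here $\chi_{\hat{P}}(\sigma)=\pm1$ records whether $\sigma$ commutes or anticommutes with $\hat{P}\otimes\hat{P}$.

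This average is a character sum over the group $\{\hat{I}\hat{I},\hat{X}\hat{X},\hat{Y}\hat{Y},\hat{Z}\hat{Z}\}$. It equals $1$ when $\sigma\sigma'$ lies in that group up to phase, and $0$ otherwise. The diagonal terms $\sigma=\sigma'$ always survive, with non-negative weights $\sum_\ell|c_{\ell,\sigma}|^2$ that sum to one by trace preservation. These diagonal terms are exactly the desired stochastic Pauli channel over $\mathbb{K}$.

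\textbf{Step 3: kill the cross terms.} It remains to show that no off-diagonal pair $\sigma\neq\sigma'$ in $\mathbb{K}$ satisfies $\sigma'\propto\sigma(\hat{R}\otimes\hat{R})$ for a non-identity $\hat{R}$. The first thing I would check is the swap pairs. For $\sigma=\hat{P}\otimes\hat{Q}$ with $\hat{P}\neq\hat{Q}$, one has $(\hat{P}\otimes\hat{Q})(\hat{Q}\otimes\hat{P})\propto\hat{R}\otimes\hat{R}$ with $\hat{R}\propto\hat{P}\hat{Q}$. So the pair $\{\hat{P}\otimes\hat{Q},\hat{Q}\otimes\hat{P}\}$ is always a surviving cross term, and the unflippable condition excludes precisely these. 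The special status of $\hat{I}\otimes\hat{I}$ needs a separate check: it can only pair with some $\hat{R}\otimes\hat{R}$, and a direct check shows no $\hat{R}\otimes\hat{R}$ with $\hat{R}\neq\hat{I}$ appears in the listed examples.

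\textbf{Main obstacle.} The hard part is the remaining pairs in the same coset that are not swaps. Examples are $\hat{X}\otimes\hat{I}$ with $\hat{I}\otimes\hat{X}$, which is a swap and so is excluded, versus $\hat{Z}\otimes\hat{Y}$ with $\hat{I}\otimes\hat{X}$, which differ by $\hat{Z}\otimes\hat{Z}$ and are not a swap. For these I would first try to show that the four-element cosets of $\{\hat{R}\otimes\hat{R}\}$ are organized so that unflippability forces at most one element per coset in $\mathbb{K}$. If that fails, as the pair $\hat{Z}\otimes\hat{Y}$, $\hat{I}\otimes\hat{X}$ in example~2 of Def.~\ref{XYTwirlableDef} suggests it will, the fallback is to show that such surviving cross terms are Hermitian-paired. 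Each such pair would then combine into a Pauli-mixing term that can be dominated by, and absorbed into, the diagonal stochastic terms for the purpose of bounding the probability of error. This step is where I expect the argument to need the most care.
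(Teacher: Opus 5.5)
\textbf{There is a genuine gap, and it cannot be closed.} Your Steps 1--2 use the paper's own mechanism: twirl by $\Lambda=\{\hat{I}\otimes\hat{I},\hat{X}\otimes\hat{X},\hat{Y}\otimes\hat{Y},\hat{Z}\otimes\hat{Z}\}$ and apply the character-sum criterion. They correctly reduce the theorem to showing that no two distinct $\sigma,\sigma'\in\mathbb{K}$ satisfy $\sigma\sigma'\propto\hat{R}\otimes\hat{R}$. Step 3 is never closed, and your fallback cannot close it. A surviving pair leaves a coherent term $\sigma\rho\sigma'$ that the twirl does not remove. ``Absorbing'' it into the diagonal weights does not produce the stochastic Pauli channel the theorem asserts. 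It is also not a justified bound, since the downstream trap analysis pushes \emph{Pauli} errors through the circuit.

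In fact the obstacle you spotted makes the statement false as written. Take the single Kraus operator $\hat{K}=(\hat{Z}\otimes\hat{Y}+\hat{I}\otimes\hat{X})/\sqrt{2}$. Its two terms anticommute, so $\hat{K}$ is unitary, and its support lies in example~2 of Def.~\ref{XYTwirlableDef}. Because $(\hat{Z}\otimes\hat{Y})(\hat{I}\otimes\hat{X})=-i\hat{Z}\otimes\hat{Z}$ commutes with all of $\Lambda$, both terms pick up the same sign under every $\lambda\in\Lambda$. Hence $\lambda\hat{K}\lambda=\pm\hat{K}$, and the twirl returns $\rho\mapsto\hat{K}\rho\hat{K}^{\dagger}$ unchanged. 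That channel is not a Pauli mixture. Indeed $\hat{K}=(\hat{I}\otimes\hat{X})\exp(i\pi\hat{Z}\otimes\hat{Z}/4)$ conjugates $\hat{X}\otimes\hat{I}$ to $\hat{Y}\otimes\hat{Z}$, so its Pauli transfer matrix is not diagonal.

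This is also exactly where the paper's proof fails. Lemma~\ref{compositionLemma} claims $\hat{a}\circ\hat{b}\in\mathbb{D}$ for all $\hat{a},\hat{b}$ in an unflippable set. It deduces $\{j_a,j_b\}=\{k_a,k_b\}$ from the rule $\hat{\sigma}_j\hat{\sigma}_k=\delta_{jk}\hat{I}+i\epsilon_{jkl}\hat{\sigma}_l$, but that rule does not cover index $0$. The pair $\hat{a}=\hat{Z}\otimes\hat{Y}$, $\hat{b}=\hat{I}\otimes\hat{X}$ has $\{j_a,j_b\}=\{3,0\}$ and $\{k_a,k_b\}=\{2,1\}$, yet $\hat{a}\circ\hat{b}\propto\hat{Z}\otimes\hat{Z}\notin\mathbb{D}$. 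Example~3 contains this pair, and also $\hat{Z}\otimes\hat{X},\hat{Y}\otimes\hat{I}$, whose product is $\propto\hat{X}\otimes\hat{X}$.

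The repair is to the hypothesis, not the argument. The nontrivial cosets of $\Lambda$ are $\{\hat{X}\otimes\hat{I},\hat{I}\otimes\hat{X},\hat{Y}\otimes\hat{Z},\hat{Z}\otimes\hat{Y}\}$ and its analogues for $\hat{Y}$ and $\hat{Z}$. Unflippability forbids only the two swap pairs inside each coset, not the other four same-coset pairs. Suppose instead that $\mathbb{K}$ meets each coset of $\Lambda$ at most once, as example~1 does. Then your Steps~1--2 already finish the proof. The condition is also necessary: any two distinct same-coset $\sigma,\sigma'$ give a non-Pauli unitary Kraus operator $(\sigma+c\,\sigma')/\sqrt{2}$, with $c\in\{1,i\}$, that the twirl leaves invariant.

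Your separate check on $\hat{I}\otimes\hat{I}$ is automatic. $\hat{R}\otimes\hat{R}$ is its own flip, so for $\hat{R}\neq\hat{I}$ it never lies in an unflippable set.
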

In fact we can further expand the error that can be twirled to stochastic error to any stochastic combination of XY-twirlable error, as in Corollary~\ref{stochXYError}.
\begin{corollary}
    \label{stochXYError}
    Any stochastic combination of XY-twirlable error occurring in a XY-decomposable gate can be twirled to stochastic Pauli error by adding only single-qubit Pauli gates.
\end{corollary}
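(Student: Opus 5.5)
The plan is to reduce the corollary to Theorem~\ref{TwoFoldTwirlTheorem} using linearity of the twirling operation. The twirl is a fixed average over conjugations. Write it as
\begin{align}
\mathcal{T}(\mathcal{E}) = \frac{1}{|\mathbb{S}|}\sum_{\hat{s}\in\mathbb{S}} \hat{s}^{\dagger}\circ\mathcal{E}\circ\hat{s},
\end{align}
where $\mathbb{S}$ is the set of Pauli conjugations used in Theorem~\ref{TwoFoldTwirlTheorem}, realised via Fig.~\ref{surroundErroreXY} by placing $\hat{P}\otimes\hat{P}$ on both sides of the gate. This map $\mathcal{T}$ is linear in the channel $\mathcal{E}$.

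A stochastic combination of XY-twirlable errors is a convex combination $\mathcal{E}=\sum_i p_i\mathcal{E}_i$. Each $\mathcal{E}_i$ is XY-twirlable with respect to its own unflippable set $\mathbb{K}_i$, and $p_i\ge 0$ with $\sum_i p_i=1$.

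\textbf{Step 1.} Because the randomly chosen single-qubit Paulis are selected independently of which $\mathcal{E}_i$ occurs, and $\mathbf{N2}$ ensures that the choice of single-qubit gates does not influence the error, the twirl can be interchanged with the mixture:
\begin{align}
\mathcal{T}(\mathcal{E})=\sum_i p_i\,\mathcal{T}(\mathcal{E}_i).
\end{align}

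\textbf{Step 2.} Apply Theorem~\ref{TwoFoldTwirlTheorem} to each $\mathcal{E}_i$ separately. Each $\mathcal{T}(\mathcal{E}_i)$ is then a stochastic Pauli channel $\sum_{\hat{Q}}q^{(i)}_{\hat{Q}}\,\hat{Q}\cdot\hat{Q}^{\dagger}$ with $q^{(i)}_{\hat{Q}}\ge0$ and $\sum_{\hat{Q}}q^{(i)}_{\hat{Q}}=1$.

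\textbf{Step 3.} Substituting gives
\begin{align}
\mathcal{T}(\mathcal{E})=\sum_{\hat{Q}}\Big(\sum_i p_i q^{(i)}_{\hat{Q}}\Big)\hat{Q}\cdot\hat{Q}^{\dagger}.
\end{align}
The coefficients are nonnegative and sum to one, so this is again stochastic Pauli error. Only the single-qubit Pauli gates from Theorem~\ref{TwoFoldTwirlTheorem} were added, which proves the claim.

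The main obstacle is making sure the twirl used in Theorem~\ref{TwoFoldTwirlTheorem} is one fixed procedure: the same random Pauli set and the same uniform distribution, not one adapted to the particular unflippable set $\mathbb{K}_i$. If the proof of that theorem used a $\mathbb{K}$-dependent choice, the $\mathcal{E}_i$ could not be twirled simultaneously by a single procedure.

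I expect this is not an issue. The statement of Theorem~\ref{TwoFoldTwirlTheorem} says the conjugation is by uniformly random Paulis, and the unflippable condition is only used to show that the cross terms $\hat{A}\rho\hat{B}^{\dagger}$ with $\hat{A}\neq\hat{B}$ cancel after averaging. That cancellation is a property of each $\mathcal{E}_i$ individually, so the fixed uniform twirl handles every component at once.

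It is also worth checking that mixing two XY-twirlable channels whose Kraus supports together are flippable causes no harm. Cross terms only ever pair Kraus operators belonging to the same component $\mathcal{E}_i$, never operators from different components, so no such harm arises. This is precisely why the corollary applies to stochastic combinations rather than to arbitrary channels.
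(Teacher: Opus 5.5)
Your proposal is correct and follows essentially the same route as the paper. The paper splits into cases according to which XY-twirlable error occurs, applies Theorem~\ref{TwoFoldTwirlTheorem} to each, and recombines the resulting stochastic Pauli channels, together with an error-free term of weight $1-\mathrm{R}_1-\mathrm{R}_2$, into a single convex combination; your worry about a $\mathbb{K}$-dependent twirl is settled because the paper's proof of that theorem uses the same set $\Lambda=\{\hat{I}\otimes\hat{I},\hat{X}\otimes\hat{X},\hat{Y}\otimes\hat{Y},\hat{Z}\otimes\hat{Z}\}$ for every unflippable set.
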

\begin{proof}
    Consider the case where there are two distinct XY-twirlable errors, $\mathcal{E}_1$ and $\mathcal{E}_2$, each occurring with their respective probabilities $\mathrm{R}_1, \mathrm{R}_2 \in [0,1]$ after a two-qubit gate, $\hat{g} \in \mathbb{U}(4)$.
    
    Separately consider the cases where each XY-twirlable error occurs: each case is established to be effectively twirlable into to stochastic Pauli error, as per Theorem \ref{TwoFoldTwirlTheorem}. Hence the error in $\hat{g}$ can be considered as:
    \begin{align}
        \sum_{j = 0}^{3} \left( 
 \alpha_j^{(\mathcal{E}_1)} \hat{\sigma}_j \hat{g} \rho \hat{g}^{\dagger} \left( \hat{\sigma}_j \right)^{\dagger} \right)
 \hspace{1em} \textit{or} \hspace{1em}
 \sum_{j = 0}^{3} \left( 
 \alpha_j^{(\mathcal{E}_2)} \hat{\sigma}_j \hat{g} \rho \hat{g}^{\dagger} \left( \hat{\sigma}_j \right)^{\dagger} \right),
    \end{align}
    where $\rho$ is any state of the qubits $g$ acts on, and $\alpha_j^{(\mathcal{E}_k)} = \dfrac{1}{2^{2N}} \big \vert \text{Tr}\left( \hat{\sigma}_j \mathcal{E}_k\right) \big \vert^2$.
    Then considering each of these errors applied stochastically, the effective overall state after the gate will be:
    \begin{align}
    \rho^{\prime}
    &=
    \label{finalFormMixedError}
    \sum_{j = 0}^{3} \left( \bigg[ \left( 1 - \mathrm{R}_1 - \mathrm{R}_2 \right) \delta_{j, 0} + \mathrm{R}_1 \alpha_j^{(\mathcal{E}_1)} +
    \mathrm{R}_2 \alpha_j^{(\mathcal{E}_2)} \bigg] \hat{\sigma}_j \hat{g} \rho \hat{g}^{\dagger} \hat{\sigma}_j^{\dagger} \right).
    \end{align}
Eqn.~\eqref{finalFormMixedError} can be interpreted as the state resulting from the gate, $\hat{g}$, being applied correctly and then experiencing stochastic Pauli error. Hence the overall error is effectively reduced to stochastic Pauli error, as required.
\end{proof}

\subsection{Twirling Errors Not in Unflippable Set: Extra Assumption}
\label{subsec:ImprovedXYTwirling}

We now wish to avail of the features of twirling for errors from outside the unflippable set. This seems to be impossible without additional assumptions.
One such assumption, originating in Ref.~\cite{carrasco2023gaining} (but expressed differently here), allows for % this:
\begin{enumerate}
    \item[$\mathbf{N3:}$] \label{ExtraAss} Define $\mathbb{H} = \big \{ \hat{H}'_{s_1, s_2} \text{ } \vert \text{ } s_1, s_2 \in \{+1, -1\} \big \}$, where:
    \begin{align}
    \hat{H}'_{s_1, s_2}
    =
    s_1 \hat{X}_1 \hat{X}_2 + s_2 \hat{Y}_1 \hat{Y}_2.
    \end{align}
Then, for any fixed value of $t \in \mathbb{R}$, all gates in the set
$ \{\exp(-i \hat{H}' t) \text{ } \vert \text{ } \hat{H}' \in \mathbb{H} \}$ experience identical error. 
\end{enumerate}

This means that the error in distinct implementations of any of the gates in the set is represented by the exact same CPTP map. Examples of real-world error that conforms to this requirement are some stray couplings (e.g. an extra $\hat{Z}_1 \hat{Z}_2$ term in $\hat{H}'_{s_1, s_2}$) \cite{Xu_2024}, and fluctuations -- independent of the intended values of $s_1$ and $s_2$ -- in the values of $s_1$ and $s_2$ (in $\hat{H}'_{s_1, s_2}$) as the time evolution according to $\hat{H}'_{s_1, s_2}$ is being applied.
Twirling of any error subject to \textbf{N1} - \textbf{N3} in XY-interaction gates is now possible due to Lemma~\ref{extraAssumptionXYTwirlLemma}.
It first requires Def.~\ref{XIPDef}.

\begin{definition}
    \label{XIPDef}
    Define the function \underline{$\xi'$}$: \mathbb{P}_1 \times \mathbb{P}_1 \rightarrow \{ +1, -1 \}$ by: $\forall \hat{p}_1, \hat{p}_2 \in \mathbb{P}_1$,
    \begin{align}
         \hat{p}_1 \hat{p}_2 = \xi'(\hat{p}_1, \hat{p}_2) \hat{p}_2 \hat{p}_1.
        \end{align}
        Then define, $\forall \hat{P} \in \mathbb{P}_1$,
    $\Xi(\hat{P}) = \hat{H}'_{\xi'(\hat{X}, \hat{P}), \xi'(\hat{Y}, \hat{P})}$.
\end{definition}

\begin{lemma}
    \label{extraAssumptionXYTwirlLemma}
    Consider a circuit containing arbitrarily many XY-interaction gates
     and assume $\mathbf{N1}-\mathbf{N3}$. If, for each specific instance of $\exp (-i \hat{H} t )$ in each execution (experiencing CPTP error), $\hat{P} \in \mathbb{P}_1^{\otimes 2}$ is chosen uniformly at random and $\hat{P} \exp \left (-i \Xi(\hat{P}) t \right) \hat{P}$ is applied instead; the measurement outcomes -- of those executions -- are the same as if they experienced only stochastic Pauli error.
\end{lemma}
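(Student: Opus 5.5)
The plan is to show that each randomized replacement $\hat{P} \exp(-i \Xi(\hat{P}) t) \hat{P}$ is, ideally, exactly the intended gate $\hat{\mathcal{G}} = \exp(-i \hat{H} t)$, and that under $\mathbf{N3}$ it equals $\hat{\mathcal{G}}$ preceded by a CPTP error conjugated by a uniformly random two-qubit Pauli. The standard Pauli twirl then finishes the argument.

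\textbf{Step 1: the ideal case.} Write $\hat{P} = \hat{p}_1 \otimes \hat{p}_2$. By Def.~\ref{XIPDef}, $\hat{P} \hat{X}_1\hat{X}_2 \hat{P} = \xi'(\hat{X},\hat{p}_1)\xi'(\hat{X},\hat{p}_2)\, \hat{X}_1\hat{X}_2$, and the analogous identity holds for $\hat{Y}_1\hat{Y}_2$. I read $\Xi(\hat{P})$ for a two-qubit $\hat{P}$ as $\hat{H}'_{s_1,s_2}$ with these signs, i.e. $s_1 = \xi'(\hat{X},\hat{p}_1)\xi'(\hat{X},\hat{p}_2)$ and $s_2 = \xi'(\hat{Y},\hat{p}_1)\xi'(\hat{Y},\hat{p}_2)$. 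Making this convention explicit is the first small step. Since each sign squares to one, $\hat{P}\,\Xi(\hat{P})\,\hat{P} = \hat{H}$. Because $\hat{P}$ is a Hermitian unitary, conjugation passes through the exponential, giving
\begin{align}
\hat{P} \exp\left(-i \Xi(\hat{P}) t\right) \hat{P} = \exp\left(-i \hat{P}\Xi(\hat{P})\hat{P} t\right) = \hat{\mathcal{G}}.
\end{align}
Thus in the error-free case the modified circuit samples from $\mathcal{O}_I[\mathcal{C}]$ for every choice of $\hat{P}$.

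\textbf{Step 2: the erroneous case.} By $\mathbf{N1}$, the implementation of $\exp(-i\Xi(\hat{P})t)$ is $\mathcal{E}_{\Xi(\hat{P})} \circ \exp(-i\Xi(\hat{P})t)$. By $\mathbf{N3}$, $\mathcal{E}_{\Xi(\hat{P})} = \mathcal{E}$ is the same CPTP map for all four choices of $(s_1,s_2)$. By $\mathbf{N2}$, the added Pauli gates do not alter this error either. Inserting $\hat{P}\hat{P} = \hat{I}$ gives
\begin{align}
\hat{P} \circ \mathcal{E} \circ \exp\left(-i \Xi(\hat{P}) t\right) \circ \hat{P} = \left(\hat{P} \circ \mathcal{E} \circ \hat{P}\right) \circ \hat{\mathcal{G}}.
\end{align}
This is exactly the situation of Fig.~\ref{surroundError} with $\Gamma_{\hat{\mathcal{G}}}$ replaced by the full set $\mathbb{P}_1^{\otimes 2}$, that is, the $(\hat{I},\hat{I})$ case of Lemma~\ref{twirlLemma}. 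Averaging over $\hat{P}$ drawn uniformly at random turns $\mathcal{E}$ into stochastic Pauli error. Concretely, expand the Kraus operators of $\mathcal{E}$ in the Pauli basis on the two system qubits, with the environment factors carried along as operator coefficients. The orthogonality relation $\frac{1}{16}\sum_{\hat{P}} \xi(\hat{P},\hat{\sigma}_a)\xi(\hat{P},\hat{\sigma}_b) = \delta_{ab}$ then kills every cross term. Tracing out the environment, or keeping it while noting that only diagonal Pauli terms survive on the system, leaves $\sum_a q_a\, \hat{\sigma}_a(\cdot)\hat{\sigma}_a$ with $q_a \geq 0$ summing to one.

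\textbf{Step 3: extending to the whole circuit.} Each gate instance draws its own independent $\hat{P}$. The output distribution is multilinear in the channels at the individual gate locations, so the average over the product of independent uniform choices factorizes into independent twirls at each location. This is the same argument used for Lemma~\ref{twirlLemma} and in Ref.~\cite{Ferracin_2021}.

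I expect the main obstacle to be this last step when the error acts jointly on the system and a shared environment that persists across gates. There the averaging must be done location by location, with the environment treated as part of the Kraus coefficients. I would follow the same treatment as in Appendix~\ref{TwirlingTauDeets} rather than arguing per gate on reduced system channels.
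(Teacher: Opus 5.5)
Your proposal is correct and follows the paper's proof. The paper likewise uses $\exp(-i\Xi(\hat P)t)=\hat P\exp(-i\hat H t)\hat P$ together with $\mathbf{N3}$ to rewrite the randomized gate as $\frac{1}{16}\sum_{\hat P}\hat P\,\mathcal E(\hat P\rho'\hat P)\,\hat P$ with $\rho'=\exp(-i\hat Ht)\rho\exp(i\hat Ht)$, and then applies Lemma~\ref{MostGeneralTwirlLemma} with $\mathbb S=\Lambda=Q=\mathbb P_1^{\otimes 2}$; this is equivalent to your $(\hat I,\hat I)$ case of Lemma~\ref{twirlLemma} and to your direct orthogonality computation. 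Your product-sign extension of $\Xi$ to two-qubit Paulis is exactly what that identity requires, since the paper only defines $\Xi$ on $\mathbb P_1$. Your Step 3, with independent per-location twirls and environment-valued Kraus coefficients, fills in detail the paper leaves implicit rather than departing from its route.
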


\begin{proof}
    The uniformly random application of $\hat{P} \exp \left ( -i \Xi(\hat{P}) t \right) \hat{P}$, with $\hat{P}$ chosen uniformly from $ \mathbb{P}_1^{\otimes 2}$ to arbitrary density matrix, $\rho$, experiencing arbitrary CPTP error (denoted by $\mathcal{E}$, which is invariant under changing values of $\hat{P}$ due to $\mathbf{N3}$) can be expressed as:
    \begin{align}
        \label{ExtraAssumpTwirlFirstExpress}
        \dfrac{1}{\vert  \mathbb{P}_1^{\otimes 2} \vert}
        \sum_{P \in \mathbb{P}_1^{\otimes 2}}
        \left(
            \hat{P} \mathcal{E} \left( \exp \left ( -i \Xi(\hat{P}) t \right) \hat{P} \rho \hat{P} \exp \left( i \Xi(\hat{P}) t \right) \right) \hat{P}
        \right).
    \end{align}
    Via properties of the Pauli group and the Taylor expansion of the matrix exponential, $\forall \hat{P} \in \mathbb{P}_1^{\otimes 2}$:
    \begin{align}
        \label{useOfXiEquation}
        \exp \left ( -i \Xi(\hat{P}) t \right)
        &=
       \hat{P} \exp \left ( -i \hat{H} t \right) \hat{P}.
    \end{align}
    Substituting Eqn.~\eqref{useOfXiEquation} into Eqn.~\eqref{ExtraAssumpTwirlFirstExpress} and letting $\rho' = \exp \left( -i \hat{H}_I t \right) \rho \exp \left( i \hat{H} t\right)$, we derive:
    \begin{align}
        \label{ExtraAssumpTwirlSecondExpress}
        \dfrac{1}{\vert  \mathbb{P}_1^{\otimes 2} \vert}
        \sum_{\hat{P} \in \mathbb{P}_1^{\otimes 2}}
        \left(
            \hat{P} \mathcal{E} \left( \hat{P} \rho' \hat{P} \right) \hat{P}
        \right).
    \end{align}
    Eqn.~\eqref{ExtraAssumpTwirlSecondExpress} represents arbitrary CPTP error, $\mathcal{E}$, acting on $\rho'$ being Pauli twirled to stochastic Pauli error. This can be seen using Lemma~\ref{MostGeneralTwirlLemma}, in Appendix~\ref{TwirlingTauDeets}, by setting $\mathbb{S} = \Lambda = Q = \mathbb{P}_1^{\otimes 2}$. Therefore, we conclude that all CPTP error (conforming to $\mathbf{N1-3}$) occurring within the two-qubit gate is effectively reduced to stochastic Pauli error.
\end{proof}

\subsection{QAP for Circuits with XY-Interaction Gates}
\label{AccOfXYSec}

With the ability to twirl errors occurring in XY-interaction gates established, we now turn to using this to accredit circuits containing XY-interaction gates. The basis for this is a sub-circuit that we call a $j$-vanishing block defined and studied in Secs.~\ref{vanishingBlockIntro} and~\ref{ErrorVanishingSec}.
In our QAP, the difference between the $j$-vanishing blocks in the same position in a target and a trap is encoded in a binary bit, that we refer to as $j$ and it is from this that $j$-vanishing blocks get their name.
In a trap circuit, $j$ is set to one in every vanishing block, and in a target circuit, $j$ is set to zero in every vanishing block.

\subsubsection{j-Vanishing Blocks}
\label{vanishingBlockIntro}
\begin{definition}
\label{vanishBlockDef}
A \underline{$j$-vanishing block} is a subcircuit of the form in Fig. \ref{fig:vanishingBlockDef}.
\begin{figure*}%[h!]
    \centering
    \includegraphics[]{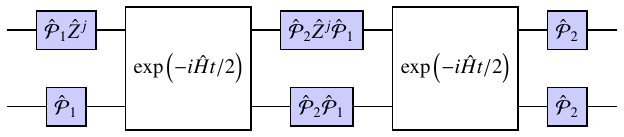}
    \caption{Depiction of a $j$-vanishing block, where $\hat{H} = \hat{X}\hat{X} + \hat{Y} \hat{Y}$, $t \in \mathbb{R}$, and $j \in \{ 0, 1\}$. Each $\hat{\mathcal{P}}_k$ is a random Pauli \footnote{But fixed throughout each circuit execution.} selected according to the twirling scheme (either from Sec.~\ref{subsec:twirlingXYInters} or Sec.~\ref{subsec:ImprovedXYTwirling}) \footnote{We note the Sec.~\ref{subsec:ImprovedXYTwirling} scheme requires $\exp \left(-i \hat{H} t /2 \right)$ be replaced with some other gate but, for simplicity this is neglected in our definition of the vanishing blocks and makes no difference to how we use the vanishing blocks}.
    \label{fig:vanishingBlockDef}}
\end{figure*}
\end{definition}
The $j$ parameter is present only in the single-qubit gates in Fig.~\ref{fig:vanishingBlockDef}.
This makes the sub-circuit different for different values of $j$. 
 We can then proceed to prove the vital properties of $j$-vanishing blocks. A pre-requisite of this is Lemma \ref{InvertingLemma}, which is based on the time inversion circuits found in Ref.~\cite{jackson2023accreditation}.
 \begin{lemma}
\label{InvertingLemma}
    $\forall t \in \mathbb{R}$, if $\hat{H} = \hat{X}_1 \hat{X}_2 + \hat{Y}_1 \hat{Y}_2$,
    \begin{align}
        \left( \hat{Z} \otimes \hat{I} \right) \exp \left( -i \hat{H}t / 2 \right) \left( \hat{Z} \otimes \hat{I} \right)
        &=
        \exp \left( i \hat{H} t / 2 \right).
    \end{align}
\end{lemma}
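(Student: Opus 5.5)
Since $\hat{Z}\otimes\hat{I}$ is unitary and self-inverse, conjugation by it is an algebra automorphism that commutes with taking power series. Hence
\begin{align}
\left( \hat{Z} \otimes \hat{I} \right) \exp \left( -i \hat{H}t / 2 \right) \left( \hat{Z} \otimes \hat{I} \right)
=
\exp \left( -i \left( \hat{Z} \otimes \hat{I} \right) \hat{H} \left( \hat{Z} \otimes \hat{I} \right) t / 2 \right).
\end{align}
To justify this, I will either expand the exponential as a Taylor series and insert $(\hat{Z}\otimes\hat{I})^2=\hat{I}$ between every factor of $\hat{H}^n$, or use $U e^{A} U^{\dagger} = e^{UAU^{\dagger}}$ for unitary $U$. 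This is the same device the paper uses in Eqn.~\eqref{useOfXiEquation}.

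It remains to compute $(\hat{Z}\otimes\hat{I})\hat{H}(\hat{Z}\otimes\hat{I})$ term by term. On the first tensor factor, $\hat{Z}\hat{X}\hat{Z} = -\hat{X}$ and $\hat{Z}\hat{Y}\hat{Z} = -\hat{Y}$, because $\hat{Z}$ anticommutes with both $\hat{X}$ and $\hat{Y}$. On the second factor, the identity leaves $\hat{X}_2$ and $\hat{Y}_2$ unchanged. Therefore $(\hat{Z}\otimes\hat{I})(\hat{X}_1\hat{X}_2 + \hat{Y}_1\hat{Y}_2)(\hat{Z}\otimes\hat{I}) = -\hat{X}_1\hat{X}_2 - \hat{Y}_1\hat{Y}_2 = -\hat{H}$. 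In the notation of Def.~\ref{XIPDef}, this is the statement $\Xi(\hat{Z}) = \hat{H}'_{-1,-1} = -\hat{H}$, since $\xi'(\hat{X},\hat{Z}) = \xi'(\hat{Y},\hat{Z}) = -1$.

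Substituting into the exponent gives $\exp(-i(-\hat{H})t/2) = \exp(i\hat{H}t/2)$, which holds for all $t\in\mathbb{R}$.

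I expect no real obstacle. The only care point is the first step, moving the conjugation inside the exponential, which the Taylor-series argument settles because the series converges absolutely for finite-dimensional matrices. The sign bookkeeping for $\hat{Z}$ acting only on qubit 1 must also be stated explicitly, so that it is clear both terms of $\hat{H}$ flip sign together, rather than cancelling as they would under $\hat{Z}\otimes\hat{Z}$, which commutes with $\hat{H}$ as in Lemma~\ref{commutingLemmaRestate}.
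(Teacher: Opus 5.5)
Your proposal is correct and follows essentially the same route as the paper: you compute $(\hat{Z}\otimes\hat{I})\hat{H}(\hat{Z}\otimes\hat{I}) = -\hat{H}$ from $\hat{Z}$ anticommuting with $\hat{X}$ and $\hat{Y}$ on the first qubit, then move the conjugation inside the exponential via its Taylor expansion. Your explicit justification of that last step, and your remark on why both terms of $\hat{H}$ flip sign together, add rigour but do not change the argument.
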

\begin{proof} Due to the commutation relations of the Pauli group:
    \begin{align}
        \left( \hat{Z} \otimes \hat{I} \right) \hat{H} \left( \hat{Z} \otimes \hat{I} \right)
        &=
        \hat{Z} \hat{X} \hat{Z} \otimes \hat{X} + \hat{Z} \hat{Y} \hat{Z} \otimes \hat{Y} =
        \label{InvertedH}
        - \hat{H}.
    \end{align}
    Then, using Eqn.~\eqref{InvertedH} and the Taylor expansion of $\exp \left( -i \hat{H}t / 2 \right)$,
    \begin{align}
        \left( \hat{Z} \otimes \hat{I} \right) \exp \left( -i \hat{H}t / 2 \right) \left( \hat{Z} \otimes \hat{I} \right)
        &=
        \label{expWithZsIn}
        \exp \left( -i \left( \hat{Z} \otimes \hat{I} \right) \hat{H} \left( \hat{Z} \otimes \hat{I} \right) t / 2 \right) \nonumber \\
        &=
        \exp \left( i \hat{H} t / 2 \right).
    \end{align}
\end{proof}
The most important properties of $j$-vanishing blocks, when implemented correctly, is what operators they are equivalent to for different values of $j \in \{0, 1\}$. These equivalences are established in Lemmas~\ref{IdealJZeroLemma} and \ref{IdealJOneLemma}.
\begin{lemma}
\label{IdealJZeroLemma}
    The error-free implementation of a $0$-vanishing block is equivalent to $\exp \left( -i \hat{H}t  \right)$.
\end{lemma}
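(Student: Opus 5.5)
The plan is to read the $j$-vanishing block of Fig.~\ref{fig:vanishingBlockDef} as an explicit product of unitaries and simplify it algebraically. With no error, every CPTP map in the block is the identity, so the block is one operator: the single-qubit gates ($\hat{\mathcal{P}}_k$ twirling Paulis and the $j$-dependent gates), interleaved with two copies of $\exp\left(-i\hat{H}t/2\right)$. I will write this product out left to right and set $j=0$ from the start. Any gate of the form $\left(\hat{Z}\otimes\hat{I}\right)^{j}$ then becomes the identity and drops out.

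\textbf{Step 1: remove the twirling Paulis.} The twirling Paulis sit around each $\exp\left(-i\hat{H}t/2\right)$ as in Fig.~\ref{surroundErroreXY}, with the same $\hat{P}\otimes\hat{P}$ on both sides of the gate. By Lemma~\ref{commutingLemmaRestate}, each such $\hat{P}\otimes\hat{P}$ commutes with $\exp\left(-i\hat{H}t/2\right)$. I will push it through the gate so that it meets its partner. Since every element of $\mathbb{P}_1$ is Hermitian and squares to $\hat{I}$, the pair cancels.

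If instead the block uses the Sec.~\ref{subsec:ImprovedXYTwirling} scheme, the pair $\hat{P}\exp\left(-i\Xi(\hat{P})t/2\right)\hat{P}$ reduces to $\exp\left(-i\hat{H}t/2\right)$ by Eqn.~\eqref{useOfXiEquation}. The footnote to Fig.~\ref{fig:vanishingBlockDef} lets us ignore this case in any event.

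\textbf{Step 2: combine the two halves.} After Step 1, what remains is two copies of $\exp\left(-i\hat{H}t/2\right)$, possibly separated by single-qubit gates that are now trivial. The two exponentials have the same generator, so they commute. Their product is therefore $\exp\left(-i\hat{H}t/2 - i\hat{H}t/2\right) = \exp\left(-i\hat{H}t\right)$, which is the claim.

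\textbf{Main obstacle: bookkeeping.} The difficulty is not conceptual. It is making sure that every single-qubit gate in the figure really cancels when $j=0$, including any Paulis that straddle the boundary between the two half-gates. Where a twirling Pauli from the first half is adjacent to one from the second half rather than to its own partner, I will first push each through its own half-gate using Lemma~\ref{commutingLemmaRestate}, and only then pair them. The $\hat{Z}\otimes\hat{I}$ used in Lemma~\ref{InvertingLemma} appears only as $\left(\hat{Z}\otimes\hat{I}\right)^{j}$, so it plays no role here. That lemma is reserved for the $j=1$ case (Lemma~\ref{IdealJOneLemma}), where the second half is inverted and the block reduces to the identity.
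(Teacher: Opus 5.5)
Your proposal is correct and is essentially the paper's proof. The paper writes the $0$-vanishing block as $(\hat{\mathcal{P}}_2\otimes\hat{\mathcal{P}}_2)\exp(-i\hat{H}t/2)(\hat{\mathcal{P}}_2\hat{\mathcal{P}}_1\otimes\hat{\mathcal{P}}_2\hat{\mathcal{P}}_1)\exp(-i\hat{H}t/2)(\hat{\mathcal{P}}_1\otimes\hat{\mathcal{P}}_1)$, then uses Lemma~\ref{commutingLemmaRestate} to move the outer Pauli layers through the half-gates so that all the Paulis collect in the middle and cancel, and finally combines the two half-gates into $\exp(-i\hat{H}t)$; your pairing of each $\hat{\mathcal{P}}_k\otimes\hat{\mathcal{P}}_k$ across its own half-gate is the same computation organized slightly differently.
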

\begin{widetext}
\begin{proof}
    A $0$-vanishing block may be expressed mathematically as:
    \begin{align}
        \left(  \hat{\mathcal{P}}_2 \otimes  \hat{\mathcal{P}}_2 \right)\exp \left( -i \hat{H} t/2  \right) \left( \hat{\mathcal{P}}_2 \hat{Z}^0 \hat{\mathcal{P}}_1 \otimes \hat{\mathcal{P}}_2 \hat{\mathcal{P}}_1 \right)\exp \left( -i \hat{H} t/2  \right) \left( \hat{\mathcal{P}}_1 \hat{Z}^0  \otimes \hat{\mathcal{P}}_1  \right).
    \end{align}
    Using the commutation relations between $\exp \left ( -i \hat{H} t \right)$ and Pauli gates, due to Lemma~\ref{commutingLemmaRestate}, this becomes:
    \begin{align}
        &\exp \left( -i \hat{H} t/2 \right) \left(  \hat{\mathcal{P}}_2 \otimes  \hat{\mathcal{P}}_2 \right) \left( \hat{\mathcal{P}}_2 \hat{\mathcal{P}}_1 \otimes \hat{\mathcal{P}}_2 \hat{\mathcal{P}}_1 \right) \left( \hat{\mathcal{P}}_1  \otimes \hat{\mathcal{P}}_1  \right)\exp \left( -i \hat{H} t/2  \right)
        =
        \exp \left ( -i \hat{H} t \right).
    \end{align}
\end{proof}
\begin{lemma}
    \label{IdealJOneLemma}
    The error-free implementation of a $1$-vanishing block is equivalent to the identity.
\end{lemma}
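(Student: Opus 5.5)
We follow the proof of Lemma~\ref{IdealJZeroLemma} line for line, now with $j=1$. From the expression given there, a $1$-vanishing block is
\begin{align}
\left( \hat{\mathcal{P}}_2 \otimes \hat{\mathcal{P}}_2 \right)\exp\left(-i\hat{H}t/2\right)\left( \hat{\mathcal{P}}_2 \hat{Z} \hat{\mathcal{P}}_1 \otimes \hat{\mathcal{P}}_2 \hat{\mathcal{P}}_1 \right)\exp\left(-i\hat{H}t/2\right)\left( \hat{\mathcal{P}}_1 \hat{Z} \otimes \hat{\mathcal{P}}_1 \right).
\end{align}
The only change from the $j=0$ case is that each $\hat{Z}^j$ is now a genuine $\hat{Z}$ on the first qubit. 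The plan has three steps. First, strip out the twirling Paulis. Second, collect the two $\hat{Z}$ gates so that they conjugate one of the half-evolutions. Third, use Lemma~\ref{InvertingLemma} to invert that half-evolution.

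\textbf{Step 1.} Write the middle single-qubit layer as $(\hat{\mathcal{P}}_2\otimes\hat{\mathcal{P}}_2)(\hat{Z}\otimes\hat{I})(\hat{\mathcal{P}}_1\otimes\hat{\mathcal{P}}_1)$, and the last layer as $(\hat{\mathcal{P}}_1\otimes\hat{\mathcal{P}}_1)(\hat{Z}\otimes\hat{I})$. Lemma~\ref{commutingLemmaRestate} lets us commute $\hat{\mathcal{P}}_k\otimes\hat{\mathcal{P}}_k$ through $\exp(-i\hat{H}t/2)$ exactly; the global phase from $\hat{Y}^{\otimes2}$ or $\hat{Z}^{\otimes 2}$ is absent because these are products of identical Paulis. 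Move the outer $\hat{\mathcal{P}}_2\otimes\hat{\mathcal{P}}_2$ rightward so it meets its partner, and move $\hat{\mathcal{P}}_1\otimes\hat{\mathcal{P}}_1$ leftward through the second exponential so it meets its partner. Since each Pauli squares to the identity, both pairs cancel. What remains is
\begin{align}
\exp\left(-i\hat{H}t/2\right)(\hat{Z}\otimes\hat{I})\exp\left(-i\hat{H}t/2\right)(\hat{Z}\otimes\hat{I}).
\end{align}

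\textbf{Step 2.} Apply Lemma~\ref{InvertingLemma} to the factor $(\hat{Z}\otimes\hat{I})\exp(-i\hat{H}t/2)(\hat{Z}\otimes\hat{I})$, which becomes $\exp(i\hat{H}t/2)$.

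\textbf{Step 3.} The product is now $\exp(-i\hat{H}t/2)\exp(i\hat{H}t/2)=\hat{I}$, since the two exponents commute.

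The only real obstacle is bookkeeping. The exact placement of $\hat{Z}^j$ inside the middle layer, whether before or after $\hat{\mathcal{P}}_1$, matters, because $\hat{Z}$ need not commute with a single-qubit $\hat{\mathcal{P}}_1$. The point to check is that the two $\hat{Z}$ gates end up on the same qubit, sandwiching exactly one exponential. If the figure instead places $\hat{Z}$ to the right of $\hat{\mathcal{P}}_1$ in the middle layer, we first use $\hat{Z}\hat{\mathcal{P}}_1=\pm\hat{\mathcal{P}}_1\hat{Z}$. This introduces only a global sign $\pm1$ that appears once, and the claimed equivalence holds up to global phase, which does not affect measurement statistics.
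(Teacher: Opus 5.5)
Your proposal is correct and is essentially the paper's proof: you cancel the twirling Paulis using Lemma~\ref{commutingLemmaRestate}, reduce to $\exp(-i\hat{H}t/2)(\hat{Z}\otimes\hat{I})\exp(-i\hat{H}t/2)(\hat{Z}\otimes\hat{I})$, and invert the inner half-evolution with Lemma~\ref{InvertingLemma}. Cancelling each $\hat{\mathcal{P}}_k\otimes\hat{\mathcal{P}}_k$ pair across its own exponential is slightly cleaner than the paper's intermediate line, which implicitly moves $\hat{\mathcal{P}}_1$ past $\hat{Z}$ (the two resulting signs cancel); and since the paper places $\hat{Z}$ between $\hat{\mathcal{P}}_2$ and $\hat{\mathcal{P}}_1$, your closing caveat about a global sign is not needed.
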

\begin{proof}
    A $1$-vanishing block may be expressed mathematically as:
    \begin{align}
        \left(  \hat{\mathcal{P}}_2 \otimes  \hat{\mathcal{P}}_2 \right)\exp \left( -i \hat{H} t/2  \right) \left( \hat{\mathcal{P}}_2 \hat{Z}^1 \hat{\mathcal{P}}_1 \otimes \hat{\mathcal{P}}_2 \hat{\mathcal{P}}_1 \right)\exp \left( -i \hat{H} t/2  \right) \left( \hat{\mathcal{P}}_1 \hat{Z}^1  \otimes \hat{\mathcal{P}}_1  \right).
    \end{align}
    Using the commutation relations between $\exp \left ( -i \hat{H} t \right)$ and Pauli gates, and Lemma~\ref{InvertingLemma}, this becomes:
    \begin{align}
     \label{DueToLemmaInvertingLemma}
        &\exp \left( -i \hat{H} t/2 \right) \left(  \hat{\mathcal{P}}_2 \hat{\mathcal{P}}_2 \hat{\mathcal{P}}_1 \hat{\mathcal{P}}_1  \otimes  \hat{\mathcal{P}}_2 \hat{\mathcal{P}}_2 \hat{\mathcal{P}}_1 \hat{\mathcal{P}}_1  \right) \left( \hat{Z} \otimes \hat{I} \right) \exp \left( -i \hat{H} t / 2  \right) \left( \hat{Z} \otimes \hat{I} \right) \nonumber \\
        &=
       \exp \left( -i \hat{H} t/2  \right) \left( \hat{Z} \otimes \hat{I} \right) \exp \left( -i \hat{H} t / 2  \right) \left( \hat{Z} \otimes \hat{I} \right)
        =
       \exp \left( -i \hat{H} t/2  \right) \exp \left( i \hat{H} t/2  \right)
        =
        \hat{I}. 
    \end{align}
\end{proof}
\end{widetext}
\subsubsection{Erroneous j-vanishing blocks}
\label{ErrorVanishingSec}
With the behavior of $j$-vanishing blocks in the error-free case established, we now consider their erroneous implementation.
Using an approach similar to Ref.~\cite[Lemma 1]{jackson2023accreditation}, we consider the single-qubit gates in a $j$-vanishing block to be error-free by folding their error into the error of the two-qubit gates of that same $j$-vanishing block.
\begin{lemma}
\label{StochErrorInVanishing}
 Subject to errors conforming to \textbf{N1}-\textbf{N3}, erroneous $j$-vanishing blocks
 act as ideal $j$-vanishing blocks with stochastic error (that is independent of $j$) acting before and after the block -- with the $\hat{\mathcal{P}}_j$ gates removed.
\end{lemma}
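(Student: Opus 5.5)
Our approach is to treat the block as a sequence of layers and push every piece of twirled error to the boundary of the block, tracking how it transforms. First, following Ref.~\cite[Lemma 1]{jackson2023accreditation} and $\mathbf{N1}$, we write the erroneous block as
\begin{align}
&\left(\hat{\mathcal{P}}_2 \otimes \hat{\mathcal{P}}_2\right)\mathcal{E}_2 \exp\left(-i\hat{H}t/2\right)\left(\hat{\mathcal{P}}_2\hat{Z}^j\hat{\mathcal{P}}_1 \otimes \hat{\mathcal{P}}_2\hat{\mathcal{P}}_1\right) \nonumber\\
&\quad \mathcal{E}_1\exp\left(-i\hat{H}t/2\right)\left(\hat{\mathcal{P}}_1\hat{Z}^j \otimes \hat{\mathcal{P}}_1\right),
\end{align}
with the single-qubit errors folded into $\mathcal{E}_1,\mathcal{E}_2$. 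By $\mathbf{N2}$, and by $\mathbf{N3}$ for the Sec.~\ref{subsec:ImprovedXYTwirling} scheme, $\mathcal{E}_1$ and $\mathcal{E}_2$ do not depend on $j$ or on the $\hat{\mathcal{P}}_k$.

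Second, we commute each $\hat{\mathcal{P}}_k\otimes\hat{\mathcal{P}}_k$ through the adjacent XY gate using Lemma~\ref{commutingLemmaRestate}, or Eqn.~\eqref{useOfXiEquation} under $\mathbf{N3}$. This places each $\mathcal{E}_k$ between two copies of the same random Pauli. Averaging over the uniformly random choice, via Theorem~\ref{TwoFoldTwirlTheorem}/Corollary~\ref{stochXYError} or Lemma~\ref{extraAssumptionXYTwirlLemma}, turns each $\mathcal{E}_k$ into a stochastic Pauli channel $\sum_q \alpha^{(k)}_q \hat{\sigma}_q(\cdot)\hat{\sigma}_q$. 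The coefficients $\alpha^{(k)}_q$ are independent of $j$, because the channels being twirled are. After this step the block equals the ideal block with the $\hat{\mathcal{P}}_k$ removed, namely $\exp(-i\hat{H}t/2)(\hat{Z}^j\otimes\hat{I})\exp(-i\hat{H}t/2)(\hat{Z}^j\otimes\hat{I})$, with stochastic Pauli errors inserted after each XY gate.

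Third, the error $\hat{\sigma}_q$ following the second XY gate already sits at the end of the block, so it is a stochastic error after the block. The error after the first XY gate we push backwards, through $(\hat{Z}^j\otimes\hat{I})$ and then through the first $\exp(-i\hat{H}t/2)$, to the beginning of the block. Commuting a Pauli through $\hat{Z}^j\otimes \hat{I}$ only introduces a sign, which cancels in $\hat{\sigma}(\cdot)\hat{\sigma}$. Commuting a general Pauli through $\exp(-i\hat{H}t/2)$ is the main obstacle, since that gate is not Clifford for general $t$.

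For this step we would first try the following. Conjugating a Pauli string $\hat{\sigma}$ through $\exp(-i\hat{H}t/2)$ gives a linear combination of Pauli strings within the orbit generated by $\hat{X}\hat{X}$ and $\hat{Y}\hat{Y}$. The resulting map $\hat{\sigma}(\cdot)\hat{\sigma} \mapsto U^\dagger\hat{\sigma}U(\cdot)U^\dagger\hat{\sigma}U$ is still CPTP and independent of $j$. We can then re-twirl it using the twirl already present at the block's input (the $\hat{\mathcal{P}}_1$ layer, together with the neighbouring block's $\hat{\mathcal{P}}_2$ layer), producing stochastic Pauli error before the block.

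If this fails, the fallback is to use the freedom in the statement. It only requires the error to be stochastic and $j$-independent, not Pauli at a specific location. So we can define the pre-block error as the $j$-independent mixture obtained by conjugating the Pauli channel by $\exp(-i\hat{H}t/2)$. We then check that this mixture remains a convex combination of unitaries whose weights do not depend on $j$, which is what the accreditation argument (Lemma~\ref{stochAndVDLemma}) needs.
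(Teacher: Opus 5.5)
Your first two steps are fine. The third step, which you rightly flag as the crux, fails, and neither remedy rescues it.

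Look at the ordering. After twirling, the first gate's error $\hat{\sigma}$ sits between the first $U=\exp(-i\hat{H}t/2)$ and the \emph{middle} $\hat{Z}^j$. To move it to the start of the block it must cross $U$ first and the initial $\hat{Z}^j\otimes\hat{I}$ second, not the other way round. By then it is $U^{\dagger}\hat{\sigma}U$, and Lemma~\ref{InvertingLemma} gives $(\hat{Z}\otimes\hat{I})U^{\dagger}\hat{\sigma}U(\hat{Z}\otimes\hat{I})=\pm U\hat{\sigma}U^{\dagger}$. So the pre-block error is $U^{\dagger}\hat{\sigma}U$ for $j=0$ but $U\hat{\sigma}U^{\dagger}$ for $j=1$. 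For example, $U^{\dagger}(\hat{X}\otimes\hat{I})U=\cos t\,\hat{X}\otimes\hat{I}+\sin t\,\hat{Z}\otimes\hat{Y}$, while $U(\hat{X}\otimes\hat{I})U^{\dagger}$ has the opposite sign on the second term. The weights are $j$-independent but the unitaries are not, which is exactly what the lemma (and condition $\mathbf{3}$) forbids, and your fallback inherits this.

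Pushing forward instead (through the middle $\hat{Z}^j$, then the second gate) does give $j$-independent error $U\hat{\sigma}U^{\dagger}$ after the block. However, it is not Pauli, whereas Lemmas~\ref{XYStochasticInAcc} and~\ref{XYStochasticDetectLemma} use this lemma to obtain stochastic \emph{Pauli} error for the Hadamard-based detection.

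The re-twirl option fails too. The $\hat{\mathcal{P}}_1$ layer is the same random variable you already averaged over to get the Pauli channel $\mathcal{T}_1$. The neighbouring block's $\hat{\mathcal{P}}_2$ is already used up twirling that block's own error. In any case, a $\{\hat{P}\otimes\hat{P}\}$-twirl commutes with conjugation by $U$, so it leaves $U^{\dagger}\mathcal{T}_1U$ unchanged. It therefore cannot remove coherences such as the one between $\hat{X}\otimes\hat{I}$ and $\hat{Z}\otimes\hat{Y}$.

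The missing idea is the one the paper's proof rests on. By $\mathbf{N1}$ you may model the first gate's error as a CPTP map acting immediately \emph{before} that gate, while keeping the second gate's error after it. The first Pauli twirl layer then surrounds that pre-gate error directly: the trailing Pauli layer moves back through the gate by Lemma~\ref{commutingLemmaRestate}, or through the modified gate via Eqn.~\eqref{useOfXiEquation} in the $\mathbf{N3}$ scheme. The twirl produces a Pauli channel sitting between the initial $\hat{Z}^j\otimes\hat{I}$ and the first XY gate, and crossing $\hat{Z}^j\otimes\hat{I}$ costs only a sign. The second error twirls to Pauli error after the block.

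No Pauli ever crosses a non-Clifford gate, so you get $j$-independent stochastic Pauli error on both sides, as in Fig.~\ref{fig:PostTwirledError}. The twirling hypothesis ($\mathbf{N3}$ or XY-twirlability) must then be applied to this pre-gate representation. Converting a fixed post-gate error into pre-gate form conjugates it by the gate, which is precisely the obstruction you ran into.
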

\begin{proof}
    First note that the $\hat{\mathcal{P}}_j$ gates twirl error to stochastic error and that the choice to consider CPTP error as acting immediately after the ideal version of the gate experiencing error is entirely arbitrary: it is equally valid for error to be modelled as CPTP maps acting immediately before the ideal version of the gate experiencing error. Using this let $\mathcal{E}_1$ and $\mathcal{E}_2$ be CPTP maps modelling the error in the first and second two-qubit gates, respectively, in the considered block, as in Fig.~\ref{fig:jVanishwithCPTP}.
\begin{figure*}
\centering
\begin{center}
    \includegraphics[]{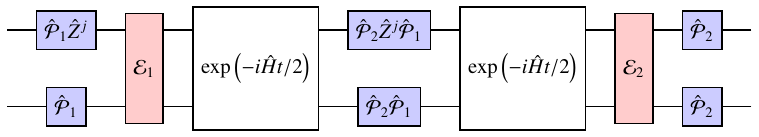}
\end{center}
    \caption{A $j$-vanishing block experiencing CPTP error ($\mathcal{E}_1$ and $\mathcal{E}_2$) and the $\hat{\mathcal{P}}_j$ gates applying a Pauli twirl to that error.}
    \label{fig:jVanishwithCPTP}
\end{figure*}
Then if the $\hat{\mathcal{P}}_j$ gates can successfully twirl the error present, as assumed, to stochastic error, this reduces to as in Fig.~\ref{fig:PostTwirledError}.
\begin{figure*}
\centering
\includegraphics[]{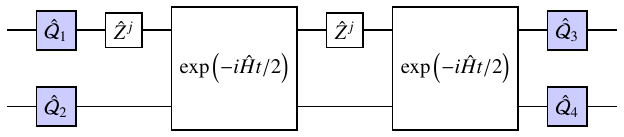}
    \caption{A $j$-vanishing block having experienced CPTP error that has been Pauli twirled to stochastic Pauli error. The resulting stochastic Pauli error is depicted in the above figure by $\hat{\mathcal{Q}}_1$, $\hat{\mathcal{Q}}_2$, $\hat{\mathcal{Q}}_3$, $\hat{\mathcal{Q}}_4 \in \mathbb{P}_1$.}
    \label{fig:PostTwirledError}
\end{figure*}
\end{proof}
\begin{Note}
Lemma~\ref{StochErrorInVanishing} additionally holds if the twirl requires stochastic changes in the two-qubit gates of a vanishing block, provided $\mathbf{N3}$ holds.
\end{Note}
\begin{corollary}
If single-qubit gates experience gate-independent error, 1-vanishing blocks and 0-vanishing blocks experience identical error.
\end{corollary}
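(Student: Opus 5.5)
The plan is to reduce the corollary to Lemma~\ref{StochErrorInVanishing} together with $\mathbf{N2}$. The key observation is that a $0$-vanishing block and a $1$-vanishing block, as drawn in Fig.~\ref{fig:vanishingBlockDef}, have the same two-qubit gates in the same positions. They also have single-qubit gates at exactly the same locations. The only difference is which single-qubit gate occupies those locations: the factor $\hat{Z}^{j}$ merged into the single-qubit layers, giving $\hat{\mathcal{P}}_1 \hat{Z}^{j}$ and $\hat{\mathcal{P}}_2 \hat{Z}^{j} \hat{\mathcal{P}}_1$ on the first qubit.

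First, I would treat the two-qubit gates. By $\mathbf{N2}$, the CPTP errors $\mathcal{E}_1$ and $\mathcal{E}_2$ attached to the two XY-interaction gates in Fig.~\ref{fig:jVanishwithCPTP} do not depend on which single-qubit gates surround them, so they are the same maps for $j=0$ and $j=1$. If the twirl of Sec.~\ref{subsec:ImprovedXYTwirling} is used, the two-qubit gate is replaced by $\exp(-i\Xi(\hat{P})t)$. In that case I would invoke $\mathbf{N3}$ (as in the Note above) to conclude that this replacement does not alter the error either.

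Second, I would treat the single-qubit gates. Here I use the hypothesis that single-qubit errors are gate-independent: the error map at each single-qubit location is a fixed CPTP map $\mathcal{F}_\ell$, independent of whether the gate there is $\hat{\mathcal{P}}_1$ or $\hat{\mathcal{P}}_1\hat{Z}$. Following the proof of Lemma~\ref{StochErrorInVanishing}, I would fold each $\mathcal{F}_\ell$ into the adjacent two-qubit error, before or after the ideal gate as convenient, and do this identically for both values of $j$. This produces composite maps $\mathcal{E}'_1,\mathcal{E}'_2$ that are the same for $j=0$ and $j=1$. Lemma~\ref{StochErrorInVanishing} then says that after twirling these become stochastic Pauli errors $\hat{\mathcal{Q}}_1,\dots,\hat{\mathcal{Q}}_4$ (Fig.~\ref{fig:PostTwirledError}) with a distribution independent of $j$. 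Hence the two blocks are ideal $j$-blocks sandwiched between identical stochastic error.

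The main obstacle is the folding step. To absorb a single-qubit error $\mathcal{F}_\ell$ into a neighbouring two-qubit error, it has to be moved past a single-qubit unitary: $\mathcal{F}_\ell\circ U = U\circ(U^\dagger \mathcal{F}_\ell U)$. The conjugated map depends on $U$, and $U$ does depend on $j$. I would avoid this by attaching each $\mathcal{F}_\ell$ on the side of its gate that faces the two-qubit gate (pre- or post-error as appropriate), so it is merged without being commuted through a $j$-dependent unitary.

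Where that is not possible, namely the outermost layers, I would use the fact that the twirl averages over $\hat{\mathcal{P}}_k$. Conjugating a Pauli-twirled channel by $\hat{Z}$ permutes Pauli labels without changing the error probabilities. So the probability of error, which is the quantity relevant to Lemma~\ref{stochAndVDLemma}, is identical for $j=0$ and $j=1$. That suffices for condition $\mathbf{3}$.
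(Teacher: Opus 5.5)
Your proposal is correct and follows the paper's route. The corollary is read off from Lemma~\ref{StochErrorInVanishing}, whose folded and twirled error is independent of $j$, together with $\mathbf{N2}$ and the gate-independence hypothesis; the two blocks have the same gates in the same locations and differ only in the factors $\hat{Z}^{j}$. Your final fallback is not needed: if each single-qubit error is placed after its gate, as in the paper's noise convention, it only ever has to be moved through the $j$-independent gate $\exp(-i\hat{H}t/2)$, never through $\hat{Z}^{j}$. Moreover, conjugating a Pauli channel by $\hat{Z}$ leaves it exactly invariant rather than permuting labels, so you get identical error and not merely equal error probability.
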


\subsubsection{Traps and Target}

The traps and target required for the QAP for circuits with XY-interaction gates are constructed as defined in Algorithms \ref{bandStructureAlgorithm}, \ref{targGenAlgorithm}, and \ref{TrapGenAlgorithm}, in accordance with Algorithm~\ref{StandardAccAlg}.
Informally, the traps and target are constructed by replacing every instance of a two-qubit gate with either a $1$-vanishing block (in the case of traps, which implement the identity) or a $0$-vanishing block (in the case of targets, which implement the replaced gate) -- in the case of traps, the single-qubit gates are also removed before this substitution is made.
These mappings of a circuit to the trap and target circuits are shown in Fig.~\ref{fig:mappingToTarg} and Fig.~\ref{fig:mappingToTrap}.

\begin{figure*}
    \centering
    \includegraphics[]{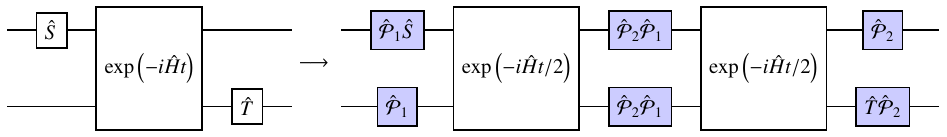}
    \caption{Example of how a subcircuit in the input circuit is mapped to the corresponding subcircuit in the target circuit.}
    \label{fig:mappingToTarg}
\end{figure*}

\begin{figure*}
    \centering
    \includegraphics[]{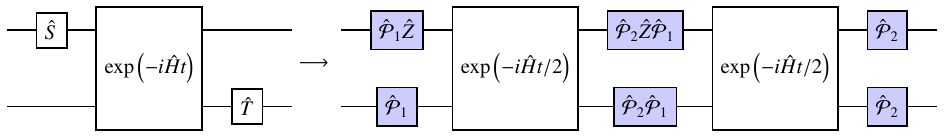}
    \caption{Example of how a subcircuit in the input circuit is mapped to the corresponding subcircuit in the trap circuit.}
    \label{fig:mappingToTrap}
\end{figure*}

These two kinds of vanishing blocks differ only in their single-qubit gates (as can be seen in Fig.~\ref{fig:mappingToTarg} and Fig.~\ref{fig:mappingToTrap}) and so experience identical error, due to $\mathbf{N2}$. In the case of traps, the entire circuit is equivalent (in the error-free case) to the identity, allowing the error to be detected and measured.
\begin{figure*}
    \centering
\begin{algorithm}[H]
%\SetAlgoLined
$\mathbf{Input:}$\\
$\bullet$ An arbitrary circuit, $\mathcal{C}$\\
\vspace{0.1cm}
 \begin{enumerate}
    \item Re-express each single-qubit state preparation in $\mathcal{C}$ as preparing the state $\vert 0 \rangle$ then applying a single-qubit gate
    \item Re-express each single-qubit measurement in $\mathcal{C}$ as applying a single-qubit gate then measuring in the computational basis
\end{enumerate}
\vspace{0.1cm}
$\mathbf{Return}:$ Updated $\mathcal{C}$
\caption{Algorithm for rearranging a circuit into standard form
 \label{bandStructureAlgorithm}}
\end{algorithm}
%\end{figure*}

% Define Targets Algorithm
%\begin{figure*}
    %\centering
\begin{algorithm}[H]
%\SetAlgoLined
$\mathbf{Input:}$ \\
$\bullet$ A circuit, $\mathcal{C}$, where all two-qubit gates are of the form $\exp \left ( -i \hat{H} t \right)$ for some $t \in \mathbb{R}$
\vspace{0.1cm}
 \begin{enumerate}
    \item Convert $\mathcal{C}$ to standard form via Algorithm \ref{bandStructureAlgorithm}
    \item Replace every two-qubit gate in $\mathcal{C}$ with a $0$-vanishing gate
\end{enumerate}
\vspace{0.1cm}
$\mathbf{Return}:$ Updated $\mathcal{C}$
\caption{Algorithm to generate target circuits  
 \label{targGenAlgorithm}}
\end{algorithm}
%\end{figure*}

% Define Traps algorithm
%\begin{figure*}
    %\centering
\begin{algorithm}[H]
%\SetAlgoLined
$\mathbf{Input:}$ \\
$\bullet$ A circuit, $\mathcal{C}$, where all two-qubit gates are of the form $\exp \left ( -i \hat{H} t \right)$ for some $t \in \mathbb{R}$
\vspace{0.1cm}
 \begin{enumerate}
    \item Convert $\mathcal{C}$ to standard form via Algorithm \ref{bandStructureAlgorithm}
    \item Remove all single-qubit gates
    \item Replace every two-qubit gate in $\mathcal{C}$ with a $1$-vanishing gate
    \item With probability $0.5$, place a Hadamard gate on every qubit immediately after state prep and before measurement
    \item For each qubit, with probability $0.5$ add a Pauli Z gate immediately before measurement
    \item For each qubit, with probability $0.5$ add a Pauli Z gate immediately after state preparation
    \item Merge any neighboring Hadamard and Pauli Z gates added in the preceding steps
\end{enumerate}
\vspace{0.1cm}
$\mathbf{Return}:$ Updated $\mathcal{C}$
\caption{Algorithm to generate trap circuits 
 \label{TrapGenAlgorithm}}
\end{algorithm}
\end{figure*}

\begin{theorem}
    \label{XYAccCentralTheorem}
    Circuits subject to errors conforming to N1-N3, where all two-qubit gates are XY-interaction gates can be accredited using Algorithm~\ref{StandardAccAlg}, with the target and trap circuits built using Algorithms~\ref{targGenAlgorithm} and~\ref{TrapGenAlgorithm} respectively.
\end{theorem}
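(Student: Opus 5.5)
The plan is to argue exactly as in the proof of Theorem~\ref{accreditationTheorem}: invoke Theorem~\ref{definitionWorksTheorem} after checking, condition by condition, that Algorithms~\ref{targGenAlgorithm} and~\ref{TrapGenAlgorithm} supply a $P_{\mathrm{targ}}$ and $P_{\mathrm{trap}}$ satisfying Def.~\ref{accreditationFormalDEf} under $\mathbf{N1}$--$\mathbf{N3}$.

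The efficiency conditions are routine. Algorithm~\ref{bandStructureAlgorithm} only re-expresses preparations and measurements, and each two-qubit gate is replaced by a constant-size vanishing block. Hence the gate counts grow linearly and no qubits are added, giving $\mathbf{1a}$--$\mathbf{1c}$ and $\mathbf{2a}$--$\mathbf{2c}$. For $\mathbf{1d}$, Lemma~\ref{IdealJZeroLemma} shows each $0$-vanishing block equals $\exp(-i\hat{H}t)$ in the ideal case, so the target has the same ideal output distribution as $\mathcal{C}$. For $\mathbf{2d}$, the trap has all single-qubit gates removed and every two-qubit gate replaced by a $1$-vanishing block, which is the identity by Lemma~\ref{IdealJOneLemma}. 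The ideal trap is therefore state preparation of $\vert 0\rangle^{\otimes N}$, optionally conjugated by a Hadamard layer and random $\hat{Z}$'s, followed by measurement. Its output is a deterministic, classically computed string: all zeros, or flips determined by which $\hat{Z}$'s sit inside the Hadamard layer. For $\mathbf{1e}$ and $\mathbf{2e}$, I would cite Lemma~\ref{StochErrorInVanishing} together with Lemma~\ref{extraAssumptionXYTwirlLemma} (and its accompanying Note). These show that the random Paulis $\hat{\mathcal{P}}_k$ reduce all error to stochastic Pauli error acting before and after ideal blocks, independent of $j$. Condition $\mathbf{3}$ follows because traps and targets differ only in single-qubit gates and the value of $j$, so $\mathbf{N2}$ and the corollary after Lemma~\ref{StochErrorInVanishing} make the errors identical.

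The main obstacle is $\mathbf{2f}$: showing that any nontrivial stochastic Pauli error in a trap is detected with known probability $k$. I would take $k=1/4$ as a first attempt. Since every ideal $1$-vanishing block is the identity, all Pauli errors located before or after blocks can be propagated to the end of the circuit. Because each block equals $\hat{I}$, this propagation leaves each Pauli unchanged, so the net effect is a single Pauli string $\hat{E}\neq \hat{I}$ acting between preparation and measurement. Products of errors could cancel to the identity, though. To handle this, I would condition on the random choices and argue as in Ref.~\cite{Ferracin_2021}. The random $\hat{Z}$ layers after preparation and before measurement act as a one-time pad on the $Z$ component. The random Hadamard layer, present with probability $1/2$, ensures that whatever nontrivial string remains has, with probability at least $1/2$, an $X$ or $Y$ component in the measured basis and so flips some outcome.

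The delicate point is that the Paulis from distinct error locations are independent of the Hadamard choice. If needed, I would adopt the convention, following Ferracin et al., that ``error occurs'' means the effective net Pauli is nonidentity, since only that affects the output distribution through Lemma~\ref{stochAndVDLemma}. This yields detection probability at least $1/2$ for the net error. With these conditions established, Theorem~\ref{definitionWorksTheorem} gives the result.
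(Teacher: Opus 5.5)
Your skeleton is the paper's: you check the conditions of Def.~\ref{accreditationFormalDEf} and invoke Theorem~\ref{definitionWorksTheorem}, using Lemmas~\ref{IdealJZeroLemma}, \ref{IdealJOneLemma}, \ref{StochErrorInVanishing} and $\mathbf{N2}$ where the paper does. The gap is in $\mathbf{2f}$. You rightly notice that errors can cancel, but your convention (count an error only if its net Pauli \emph{in the trap} is non-identity) breaks the chain of Theorem~\ref{definitionWorksTheorem}. There, $\mathbf{2f}$ is combined with $\mathbf{3}$ and with Lemma~\ref{stochAndVDLemma} applied to the \emph{target}. So the traps must bound the probability that some erroneous Pauli occurs (Def.~\ref{def:probOfError}), or at least the probability of every configuration that is non-trivial in the target.

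Cancellation is not preserved between trap and target. Take $\hat X\otimes\hat I$ immediately before a vanishing block and another immediately after it; generic twirled noise from Lemma~\ref{StochErrorInVanishing} produces this with nonzero probability. In every trap the block is $\hat I$, and the Hadamard layers and random $\hat Z$'s lie outside this pair. The configuration is therefore invisible for every random choice. In the target it turns the block into $(\hat X\otimes\hat I)e^{-i\hat H t}(\hat X\otimes\hat I)=e^{-i(\hat X_1\hat X_2-\hat Y_1\hat Y_2)t}$, which differs from $e^{-i\hat H t}$ unless $t\in\frac{\pi}{2}\mathbb{Z}$. Your justification ``only that affects the output distribution'' thus holds for the trap but not for the target. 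Consequently $(\mathrm{TrapResult}+\theta)/k$ is no longer guaranteed to bound $\nu[\tilde{\mathcal{C}}_{\mathrm{targ}}]$. The random $\hat Z$'s cannot act as a one-time pad against this either. They sit only at the circuit boundary, and the $\hat{\mathcal{P}}_k$ inside the blocks have already been used up by the twirl.

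For comparison, the paper's Lemma~\ref{XYStochasticDetectLemma} obtains $k=1/2$ by pushing all errors through the identity blocks and the last Hadamard layer. It does so while explicitly ``neglecting error cancelling''. Your Hadamard argument reproduces this, so $k=1/4$ is unnecessary, but only under the same caveat, which your convention hides rather than removes. A rigorous $\mathbf{2f}$ would need either an explicit assumption excluding such cancellations, with the bound then holding only up to their probability, or a different trap design.

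Separately, in $\mathbf{2e}$ you cite only the block twirl. The random $\hat Z$'s of steps 5--6 of Algorithm~\ref{TrapGenAlgorithm} are what twirl SPAM error into stochastic $\hat X$ error (Lemma~\ref{XYStochasticInAcc}, via Lemmas~\ref{statePrepTwirlLemma} and~\ref{measTwirlError} with $\hat\tau=\hat I$). That, not detection, is their role.
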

\begin{proof}
We proceed via Theorem~\ref{definitionWorksTheorem}, showing the conditions required for it to apply are met. This then implies that Algorithm~\ref{StandardAccAlg} along with the traps and target circuits built using Algorithms~\ref{TrapGenAlgorithm}~and~\ref{targGenAlgorithm} provide a correctly functioning accreditation protocol. \\

\noindent \underline{Conditions on the target circuit}
\begin{enumerate}
\item [$\mathbf{1a:}$] The target circuit can be created in time linear in the number of gates in the input circuit -- by Algorithm~\ref{targGenAlgorithm}.
\item [$\mathbf{1b:}$] The number of gates in the target can be seen -- in Algorithm~\ref{targGenAlgorithm} -- to be linear in those of the input circuit.
\item [$\mathbf{1c:}$] The number of qubits in the target circuit is exactly the same as in the input circuit: Algorithm~\ref{targGenAlgorithm} adds no qubits.
\item [$\mathbf{1d:}$] Algorithm~\ref{targGenAlgorithm} produces a circuit with exactly the same probabilities of each measurement outcome as in the input circuit.
\item [$\mathbf{1e:}$] As in the original accreditation protocol, the twirling in Algorithm~\ref{targGenAlgorithm} reduces all error in the target to stochastic error -- as in Lemma~\ref{StochErrorInVanishing}.
\end{enumerate}

\noindent \underline{Conditions on the trap circuits}
\begin{enumerate} 
\item [$\mathbf{2a:}$]  The required trap circuits can be created in time linear in the number of gates in the input circuit -- by Algorithm~\ref{TrapGenAlgorithm}.
\item [$\mathbf{2b:}$] The number of gates in any trap can be seen -- in Algorithm~\ref{TrapGenAlgorithm} -- to be linear in those of the input circuit.
\item [$\mathbf{2c:}$] The number of qubits in the trap circuits is exactly the same as in the input circuit: Algorithm~\ref{TrapGenAlgorithm} adds no qubits.
\item [$\mathbf{2d:}$] Lemma~\ref{noErrorTrap} shows the output of any trap circuit not experiencing error is constant and classically computable.
\item [$\mathbf{2e-f:}$] Lemma~\ref{XYStochasticInAcc} implies that all error is equivalent to stochastic Pauli error  and Lemma~\ref{XYStochasticDetectLemma} implies all such error is detected with probability at least $0.5$, as in Algorithm~\ref{TrapGenAlgorithm} the layers of Hadamard gates are present with probability $0.5$.
\end{enumerate}
\noindent \underline{Relation between trap and target error}
\begin{enumerate}
    \item [$\mathbf{3:}$] As trap and target circuits only differ in their single-qubit gates, $\mathbf{N2}$ in the error model implies that the error in trap and target circuits is identical.
\end{enumerate}
\end{proof}

The proof of Theorem~\ref{XYAccCentralTheorem} has used Lemmas~\ref{noErrorTrap},~\ref{XYStochasticInAcc}, and~\ref{XYStochasticDetectLemma}, all proven in Appendix~\ref{sec:ProofOfTheo5Lemmas}.

\section{Robustness of QAP to a Violation of the Assumptions}
%\label{app:WeaklyPerturb}
\subsection{Defining Robustness}
\begin{definition}
    \label{def:robustness}
    If $\mathbb{F}$ is a set of continuous functions, a protocol that functions by executing circuits subject to CPTP error and -- potentially via classical post-computation -- returns a set, $\big \{ V_j \big \}^N_{j = 1}$, of $N$ real values is \underline{$\mathbb{F}$-robust} if and only if replacing the CPTP affecting a specific error operator (or SPAM), $\mathcal{E}$, with a different CPTP map $\mathcal{E}'$, results in a change in $\big \{ V_j \big \}^N_{j = 1}$ such that $\forall j \in \mathbb{N}^{\leq N}$,
    \begin{align}
        \big \vert V_j - V'_j \big \vert \leq f \left( \big \vert \big \vert  \mathcal{E} -\mathcal{E}' \big \vert \big \vert_{\diamond} \right),
    \end{align}
    where $\big \{ \hat{V}'_j \big \}^N_{j = 1}$ are the changed values of $\big \{ V_j \big \}^N_{j = 1}$ (due to $\mathcal{E}'$ replacing $\mathcal{E}$), and $f$ is a  function in the set, $\mathbb{F}$.

    Perhaps the most important of the possible sets $\mathbb{F}$ could be is the set of linear functions, in this case, we call the protocol \underline{linearly robust}.
\end{definition}

\subsection{The Effect of `Weakly' Different Error on Circuit Outputs}
\label{sec:WeaklyPerturb}
The crux of examining how weakly different CPTP errors (i.e., the effect of the dependence is very small, in terms of the diamond norm, compared to overall error) will affect circuit executions is Theorem~\ref{totalBoundRobust}.
\begin{theorem}
    \label{totalBoundRobust}
    For any sub-multiplicative norm $\big \vert \big \vert \cdot  \big \vert \big \vert_{\mathrm{sub}}$, such that for any CPTP map, $\xi$, $\big \vert \big \vert \xi  \big \vert \big \vert_{\mathrm{sub}} \leq 1$ and any circuit, $\mathcal{C}$, comprised of $m \in \mathbb{N}$ gates; if $\mathcal{C}$ has two distinct executions, $\Tilde{\mathcal{C}}^{(1)}$ and $\Tilde{\mathcal{C}}^{(2)}$, each experiencing different CPTP error -- $\big\{ \mathcal{E}^{(1)}_j \big\}_{j = 1}^{m}$ and $\big\{ \mathcal{E}^{(2)}_j \big\}_{j = 1}^{m}$, respectively -- such that there exists a $ \epsilon \in \mathbb{R}$ so that $\forall j \in \mathbb{N}^{\leq m}$, $\big \vert \big \vert \mathcal{E}_{j}^{(1)}
        - \mathcal{E}_{j}^{(2)} \big \vert \big \vert_{\mathrm{sub}} \leq \epsilon$, then
        \begin{align}
            \big \vert \big \vert \Tilde{\mathcal{C}}^{(1)} - \Tilde{\mathcal{C}}^{(2)} \big \vert \big \vert_{\mathrm{sub}}
            \leq
            m \epsilon.
        \end{align}
    \end{theorem}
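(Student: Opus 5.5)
The proof is a telescoping (hybrid) argument over the $m$ gate locations. First, I will write each execution as a composition of maps. For $i \in \{1,2\}$, let $\hat{g}_j$ also denote the CPTP map of the ideal $j$-th gate. Then
\begin{align}
\Tilde{\mathcal{C}}^{(i)} = \mathcal{E}^{(i)}_m \circ \hat{g}_m \circ \mathcal{E}^{(i)}_{m-1} \circ \hat{g}_{m-1} \circ \cdots \circ \mathcal{E}^{(i)}_1 \circ \hat{g}_1 ,
\end{align}
with state preparation and measurement error absorbed into the first and last factors (or counted among the $m$ locations). For $j=1,\dots,m$ set $\mathcal{A}^{(i)}_j = \mathcal{E}^{(i)}_j \circ \hat{g}_j$, which is CPTP as a composition of CPTP maps. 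Hence $\big\vert\big\vert \mathcal{A}^{(i)}_j \big\vert\big\vert_{\mathrm{sub}} \leq 1$.

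Next, I will define hybrid maps $\mathcal{H}_k = \mathcal{A}^{(1)}_m \circ \cdots \circ \mathcal{A}^{(1)}_{k+1} \circ \mathcal{A}^{(2)}_k \circ \cdots \circ \mathcal{A}^{(2)}_1$ for $k = 0,\dots,m$. Then $\mathcal{H}_0 = \Tilde{\mathcal{C}}^{(1)}$ and $\mathcal{H}_m = \Tilde{\mathcal{C}}^{(2)}$. Telescoping gives $\Tilde{\mathcal{C}}^{(1)} - \Tilde{\mathcal{C}}^{(2)} = \sum_{k=1}^{m} (\mathcal{H}_{k-1} - \mathcal{H}_k)$. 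Each difference factors as
\begin{align}
\mathcal{H}_{k-1} - \mathcal{H}_k = \left( \mathcal{A}^{(1)}_m \circ \cdots \circ \mathcal{A}^{(1)}_{k+1} \right) \circ \left( \mathcal{E}^{(1)}_k - \mathcal{E}^{(2)}_k \right) \circ \hat{g}_k \circ \left( \mathcal{A}^{(2)}_{k-1} \circ \cdots \circ \mathcal{A}^{(2)}_1 \right),
\end{align}
using linearity of composition.

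I will then apply the triangle inequality to the sum and sub-multiplicativity to each term. Every factor other than $\mathcal{E}^{(1)}_k - \mathcal{E}^{(2)}_k$ is CPTP and so has norm at most $1$, and the middle factor has norm at most $\epsilon$ by hypothesis. Each term is therefore bounded by $\epsilon$, and summing over $k$ gives $m\epsilon$.

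The step most likely to need care is the modelling convention. The paper's error acts on system plus environment, so I will treat every map as acting on the joint space, with ideal gates extended by identity on the environment. These extensions are still CPTP, so the norm bound $\leq 1$ still applies. I also have to be consistent about whether SPAM errors count among the $m$ locations. If they do not, I will pad the gate list with identity gates carrying the SPAM errors and count them, or note that the bound is unaffected when those errors coincide between the two executions. Beyond this the argument is routine.
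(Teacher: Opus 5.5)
Your proposal is correct and is essentially the paper's argument. The paper invokes a cited ``chaining property'' of the norm, which is exactly your hybrid/telescoping bound $\|\mathcal{F}_{m:1}-\mathcal{G}_{m:1}\|_{\mathrm{sub}}\le\sum_{j}\|\mathcal{F}_j-\mathcal{G}_j\|_{\mathrm{sub}}$, and then removes the ideal gate using sub-multiplicativity and $\|\hat{\mathcal{C}}_j\|_{\mathrm{sub}}\le 1$, just as you do. The paper's only extra step is an average over random circuit realizations $\bar{\mathcal{U}}_m$, which follows from your fixed-realization bound by the triangle inequality, and your SPAM padding with identity gates matches the paper's remark after the theorem.
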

  Thus, all circuits are linearly robust.
\begin{proof}
Let, $\hat{\mathcal{C}}_j$ be the $j$th gate in the circuit $\mathcal{C}$; $\mathcal{F}_{j}$ be the erroneous execution of $\hat{\mathcal{C}}_j$ in $\Tilde{C}^{(1)}$ (i.e., $\mathcal{F}_{j} = \mathcal{E}^{(1)}_j \hat{\mathcal{C}}_j$); and
$\mathcal{G}_{j}$ be the erroneous execution of $\hat{\mathcal{C}}_j$ in $\Tilde{C}^{(2)}$, i.e., $\mathcal{G}_{j} = \mathcal{E}^{(2)}_j \hat{\mathcal{C}}_j$).

Additionally, define $\mathcal{F}_{k:1}$ by: $\forall k \in \mathbb{N}^{\leq m}$
\begin{align}
    \mathcal{F}_{k:1}
    &=
    \prod_{j = 1}^k \mathcal{F}_j.
\end{align}
$\mathcal{F}_{k:1}$ is thus the composition of all $\mathcal{F}_j$ where $j \leq k$, in order of their indices. $\mathcal{G}_{k:1}$ is defined similarly. 
Then, using the chaining property~\cite[Lemma 9]{jackson2025improvedaccreditationanaloguequantum} of the norm:
    \begin{widetext}
    \begin{align}
    \label{eqn:firstRobustnessBound}
        \big \vert \big \vert \mathcal{F}_{j} -\mathcal{G}_{j} \big \vert \big \vert_{\mathrm{sub}}
        &=
        \big \vert \big \vert \mathcal{E}_{j}^{(1)} \hat{\mathcal{C}}_j
        - \mathcal{E}_{j}^{(2)} \hat{\mathcal{C}}_j \big \vert \big \vert_{\mathrm{sub}}
        =
        \big \vert \big \vert \left( \mathcal{E}_{j}^{(1)}
        - \mathcal{E}_{j}^{(2)} \right)\hat{\mathcal{C}}_j \big \vert \big \vert_{\mathrm{sub}}
        \leq
        \big \vert \big \vert \mathcal{E}_{j}^{(1)}
        - \mathcal{E}_{j}^{(2)} \big \vert \big \vert_{\mathrm{sub}} \big \vert \big \vert \hat{\mathcal{C}}_j \big \vert \big \vert_{\mathrm{sub}}
        \leq
        \big \vert \big \vert \mathcal{E}_{j}^{(1)}
        - \mathcal{E}_{j}^{(2)} \big \vert \big \vert_{\mathrm{sub}}.
        \end{align}
        Therefore, using Eqn.~\eqref{eqn:firstRobustnessBound} to bound the difference between sequences of erroneously implemented gates:
        \begin{align}
        \big \vert \big \vert \mathcal{F}_{m:1} - \mathcal{G}_{m:1} \big \vert \big \vert_{\mathrm{sub}}
        &\leq
        \label{FGSecondEquation}
        \sum_{j = 1}^m \left( \big \vert \big \vert \mathcal{F}_{j} - \mathcal{G}_{j} \big \vert \big \vert_{\mathrm{sub}} \right)
        \leq
        \sum_{j = 1}^m \left( \big \vert \big \vert \mathcal{E}_{j}^{(1)} - \mathcal{E}_{j}^{(2)} \big \vert \big \vert_{\mathrm{sub}} \right).
    \end{align}
   Via the triangle inequality, Eqn.~\eqref{FGSecondEquation}, and the assumed bound, $\big \vert \big \vert \mathcal{E}_{j}^{(1)}
        - \mathcal{E}_{j}^{(2)} \big \vert \big \vert_{\mathrm{sub}} \leq \epsilon$; the difference between the two erroneously implemented circuits can be bounded as:
   \begin{align}
       \big \vert \big \vert \Tilde{\mathcal{C}}^{(1)} - \Tilde{\mathcal{C}}^{(2)} \big \vert \big \vert_{\mathrm{sub}}
       &=
        \bigg \vert \bigg \vert \sum_{\bar{\mathcal{U}}_m} \left( \mathbb{P} \left( \bar{\mathcal{U}}_m \right)  \mathcal{F}_{m:1} \right) 
        -
         \sum_{\bar{\mathcal{U}}_m} \left( \mathbb{P} \left( \bar{\mathcal{U}}_m \right)  \mathcal{G}_{m:1} \right)
         \bigg \vert \bigg \vert_{\mathrm{sub}}
         \leq
         \sum_{\bar{\mathcal{U}}_m} \left( \mathbb{P} \left( \bar{\mathcal{U}}_m \right) \big \vert \big \vert \mathcal{F}_{m:1} - \mathcal{G}_{m:1} \big \vert \big \vert_{\mathrm{sub}} \right) \nonumber \\
         &\leq
         \label{FirstWithE-E}
        \sum_{\bar{\mathcal{U}}_m} \left( \mathbb{P} \left( \bar{\mathcal{U}}_m \right) \sum_{j = 1}^m \left( \big \vert \big \vert \mathcal{E}_{j}^{(1)}
        - \mathcal{E}_{j}^{(2)} \big \vert \big \vert_{\mathrm{sub}} \right ) \right)
    \leq 
    \sum_{\bar{\mathcal{U}}_m} \left( \mathbb{P} \left( \bar{\mathcal{U}}_m \right) \sum_{j = 1}^m \left( \epsilon \right ) \right)
    =
    \sum_{\bar{\mathcal{U}}_m} \left( \mathbb{P} \left( \bar{\mathcal{U}}_m \right) m \epsilon \right)
    =
    m \epsilon.
\end{align}
\end{widetext}
This meets the requirements of Def.~\ref{def:robustness} as the diamond norm is one such norm this proof applies to and it bounds the total variation distance between the two distributions. Thus, the effect on the measurement outcomes is bounded too.
\end{proof}
    Examples of norms that Theorem~\ref{totalBoundRobust} applies to are the diamond norm and the trace norm (nuclear norm).
 
    The results of Theorem~\ref{totalBoundRobust} can be applied when state preparation and measurement are also erroneous. This relies on a trick wherein fictitious identity gates are added immediately after state preparation and before measurement. These identity gates have no effect if they experience no error but if we take the CPTP maps that would model error in the state preparation and measurement and use them to model error in the identity gates we just added, is mathematically identical to error occurring in state preparation and measurement but allows Theorem~\ref{sec:WeaklyPerturb} to account for the state preparation and measurement error. 

We can therefore see, by considering Def.~\ref{def:robustness}, that QAPs are linearly robust. The practical use of this then follows in Corollary~\ref{neglectingDependenceCorrolary}.
\begin{corollary}
    \label{neglectingDependenceCorrolary}
    Consider a circuit execution experiencing error `weakly' dependent on some feature of the execution or circuit (e.g. gate-dependent error). Let $\left \{ \mathcal{E}_{j}^{(1)} \right \}_{j = 1}^m$ represent the true error experienced by the circuit execution and $\left \{ \mathcal{E}_{j}^{(2)} \right \}_{j = 1}^m$ represent some error model independent of the chosen feature of the execution or circuit such that $\forall j \in \mathbb{N}^{\leq m}$, $ \big \vert \big \vert \mathcal{E}_{j}^{(1)}
        - \mathcal{E}_{j}^{(2)} \big \vert \big \vert_{\mathrm{\diamond}} \ll B/ m$. Then the effect of neglecting that feature of the execution is much less than the bound returned by any QAP.

        Informally, if the effect of the weak dependence is small enough, the consequence of neglecting that dependence is negligible.
\end{corollary}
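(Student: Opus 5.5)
The plan is to apply Theorem~\ref{totalBoundRobust} with the diamond norm, and then turn the resulting bound on the difference of channels into a bound on the difference of measured distributions, and hence on the quantities a QAP returns. The diamond norm is sub-multiplicative and assigns norm one to every CPTP map, so Theorem~\ref{totalBoundRobust} applies directly. To handle state preparation and measurement error, I would first insert the fictitious identity gates described after Theorem~\ref{totalBoundRobust}, so that every error location counts as a gate. Taking $\{\mathcal{E}^{(1)}_j\}$ to be the true error and $\{\mathcal{E}^{(2)}_j\}$ the idealised feature-independent error, the theorem gives
\begin{align}
\big\vert\big\vert \Tilde{\mathcal{C}}^{(1)} - \Tilde{\mathcal{C}}^{(2)} \big\vert\big\vert_{\diamond} \leq m \max_j \big\vert\big\vert \mathcal{E}^{(1)}_j - \mathcal{E}^{(2)}_j \big\vert\big\vert_{\diamond} \ll B.
\end{align}
Here $B$ is read as the accreditation bound $(\mathrm{TrapResult}+\theta)/k$.

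Next I would pass from channels to outcome distributions. Measurement followed by taking the classical marginal is a CPTP map, and the total variation distance between the two output distributions is at most half the trace distance of the output states. That in turn is at most $\frac{1}{2}\vert\vert \Tilde{\mathcal{C}}^{(1)}-\Tilde{\mathcal{C}}^{(2)}\vert\vert_\diamond$. The same reasoning covers every trap and the target, because each is a circuit with at most a polynomially related number of gates, with $m$ taken as the largest gate count. By the triangle inequality for VD, the ideal-actual variation distance $\nu$ of the target therefore changes by at most $m\epsilon \ll B$. Likewise, the probability that each trap fails to output $m$ changes by at most $m\epsilon$, so the expected $\mathrm{TrapResult}$, and hence the returned bound, shift by at most $m\epsilon/k$. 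This is what linear robustness in the sense of Def.~\ref{def:robustness} means here.

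Finally I would combine the pieces. The idealised error $\{\mathcal{E}^{(2)}_j\}$ satisfies $\mathbf{N1}$–$\mathbf{N3}$, so Theorem~\ref{definitionWorksTheorem} (via Theorem~\ref{accreditationTheorem} or Theorem~\ref{XYAccCentralTheorem}) makes the QAP output a valid bound for the idealised executions. The true $\nu$ then exceeds the idealised bound by at most $m\epsilon$, and the returned value moves by at most $m\epsilon/k$. Both corrections are negligible compared with $B$, which is the claim.

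The main obstacle is making ``$\ll$'' rigorous, because the statement is informal. In particular the returned bound carries a factor $1/k$, and Hoeffding sampling error is layered on top of it. I would handle this by stating explicitly that the correction is additive of order $m\epsilon(1+1/k)$, which is negligible whenever $m\epsilon \ll kB$. I would also note that $k=1/2$ in both protocols, so this is only a constant-factor change.
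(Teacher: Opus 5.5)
Your proposal is correct and follows essentially the paper's route. The paper gives no separate proof of the corollary: it applies Theorem~\ref{totalBoundRobust} with the diamond norm, with fictitious identity gates absorbing SPAM error, to get $m\epsilon \ll B$, and notes that the diamond norm bounds the total variation distance of the measured distributions. Your explicit data-processing step, your separate treatment of traps and target, and your $1/k$ accounting for the returned bound fill in details the paper leaves implicit rather than changing the argument.
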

 
\section{Discussion}
\label{discussionSec}
Our key result is a family of Quantum Accreditation Protocols (QAPs) that can be applied in various scenarios.
It is a recipe to easily generate accreditation protocols given of a target computation, a trap computation, and an error model. 
We also prove their correctness.
We applied it to a non-Clifford gate which has a $\hat{\tau}$-decomposition (as in Eqn.\ref{GDef}).
Any circuit where it features as the only two-qubit gate can be accredited.
Our results on the robustness of our protocols to small perturbations and generalization of Pauli twirling to non-Pauli single-qubit bases may be of independent interest.

We also applied our QAP to another set of gates; those with XY-decompositions, at the cost of twirling more limited error. This opens questions around if and when such trade-offs are worthwhile, that are perhaps better addressed by experimentalists, but would allow the error in a wide range of currently-in-use practical gates to be twirled to stochastic Pauli error. This issue is partially resolved with our protocol for twirling XY-decomposible gates, using the extra assumption, but practical use of this depends on how well this assumption holds: a claim best investigated in future work, in consultation with experimentalists.
It is worth noting this second QAP will require two-qubit gates be used as part of twirling but only those that have identical error due to $\mathbf{N3}$. These two-qubit gates can be erroneous. As the error is likely be greater from two-qubit gates, this will likely result in a higher bound being obtained by the QAP.

An evident path forward for the accreditation protocols developed herein is towards experimental implementation of accreditation with non-Clifford native gates. Prominent among them are the $\sqrt{\text{iSWAP}}$ and $\text{fSWAP}.$
Our numerical searches show that these do not allow a $\hat{\tau}$-decomposition.

We consider the work on $\hat{\tau}$-decomposible gates more of an illustration that non-Clifford gates can be twirled and a potential first step towards twirling larger sets of two-qubit gates.
Further development of the techniques originated in this paper to accredit circuits with non-Clifford gates spanning two or more qubits are left for future work.

\section{Acknowledgements}
This work was supported, in part, 
by an EPSRC IAA grant (G.PXAD.0702.EXP), the UKRI ExCALIBUR project QEVEC (EP/W00772X/2), the Quantum Advantage Pathfinder (EP/X026167/1), the Hub for Quantum Computing via Integrated and Interconnected Implementations (QCI3) (EP/Z53318X/1).

\newpage
\bibliography{main_v3}

%%%%%%%%%%%%%%%%%%%%%%%%%%%% APPENDIX %%%%%%%%%%%%%%%%%%%%%
\newpage
\onecolumngrid
\appendix
\section{Formal Definitions of Twirlable and Accreditable Error}
\label{appendix:defsOFTwirlableAndAccreditable}

\begin{definition}
    Relative to a given two-qubit gate, a family of CPTP maps are termed \underline{twirlable} if the error they represent, when it occurs in that two-qubit gate, can be effectively reduced to stochastic error.
\end{definition}

\begin{definition}
    Relative to a given two-qubit gate, a family of CPTP maps are termed \underline{accreditable} if, when the family of errors they represent is the only family of errors that can occur in implementations of the given two-qubit gate, there exists an accreditation protocol that functions correctly.
\end{definition}

\section{Proof of Lemma~\ref{GgateCommuteLemma}}
\label{appendixProofofLemmaGgateCommuteLemma}
\begin{definition}
    \label{def:postCommutePaulis}
    For any Clifford gate, $\hat{\mathcal{M}}$ and single-qubit Pauli gate, $\hat{P}_j$, define the operators \underline{$\Bar{P}^{(1)}_{j, \hat{\mathcal{M}}}$} and \underline{$\Bar{P}^{(2)}_{j, \hat{\mathcal{M}}}$} by:
    \begin{align}
        \label{eqn:p1def}
        \Bar{P}^{(1)}_{j, \hat{\mathcal{M}}} &= \hat{\mathcal{M}}^{\dagger} \circ \left( \hat{P}_j \otimes \hat{I} \right) \circ \hat{\mathcal{M}},\\
        \label{eqn:p2def}
       \text{ and } \Bar{P}^{(2)}_{j, \hat{\mathcal{M}}} &=
        \hat{\mathcal{M}}^{\dagger} \circ \left( \hat{I} \otimes \hat{P}_j \right) \circ \hat{\mathcal{M}}.
    \end{align}
    Eqn.~\eqref{eqn:p1def} is equivalent to the circuit equation in Fig.~\ref{fig:PbarDefGraphic}. Eqn.~\eqref{eqn:p2def} has an equivalent circuit equation: Fig.~\ref{fig:PbarDefGraphic}.
    \begin{figure}[h]
    \includegraphics[]{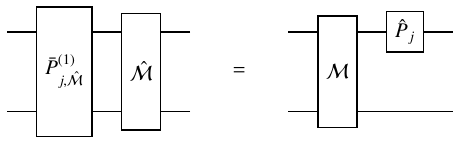}
\caption{A graphical depiction of Eqn.~\eqref{eqn:p1def}.  This is the key property of $\Bar{P}^{(1)}_{j, \hat{\mathcal{M}}}$ for twirling error. $\Bar{P}^{(1)}_{j, \hat{\mathcal{M}}}$ consists of two single-qubit gates.}
\label{fig:PbarDefGraphic}
\end{figure}

    As $\mathcal{M}$ is Clifford, $\Bar{P}^{(1)}_{j, \hat{\mathcal{M}}}$ and $\Bar{P}^{(2)}_{j, \hat{\mathcal{M}}}$ will always -- up to an overall phase -- be in $\mathbb{P}_1^{\otimes 2}$.
    
    For example, if $\hat{\mathcal{M}}$ is a cNOT gate (with the control on the first qubit) and $\hat{P}_j=\hat{X}$, then $\Bar{P}^{(1)}_{j, \hat{\mathcal{M}}} = \hat{X} \otimes \hat{X}$ and $\Bar{P}^{(2)}_{j, \hat{\mathcal{M}}} = \hat{I} \otimes \hat{X}$.
\end{definition}

\begin{proof}
Starting from the left-hand-side of Eqn.~\eqref{PprimeQprimeMainProperty} and using the $\hat{\tau}$-decomposition of $\hat{\mathcal{G}}$ (as in Def.~\ref{def:GammaG}):
\begin{align}
    \left(  \hat{\tau}_1^{\dagger} \hat{P}_j \hat{\tau}_1 \otimes \hat{\tau}_2^{\dagger} \hat{P}_k \hat{\tau}_2 \right) \circ \hat{\mathcal{G}}
    &=
    \label{eqn:withTauDecomp}
    \left(  \hat{\tau}_1^{\dagger} \hat{P}_j \hat{\tau}_1 \otimes \hat{\tau}_2^{\dagger} \hat{P}_k \hat{\tau}_2 \right) \circ \left( \hat{\tau}_1 \otimes \hat{\tau}_2 \right)^{\dagger} \circ \hat{\mathcal{M}} \circ \left( \hat{\tau}_1 \otimes \hat{\tau}_2 \right).
\end{align}
    The cancellation of $\left( \hat{\tau}_1 \otimes \hat{\tau}_2 \right)$ with its Hermitian conjugate, before we push the Pauli gates ($\hat{P}_j$ and $\hat{P}_k$) through $\hat{\mathcal{M}}$, allows Eqn.~\eqref{eqn:withTauDecomp} to be rewritten -- using Def.~\ref{def:postCommutePaulis} -- as:
    \begin{align}
    \left(  \hat{\tau}_1^{\dagger} \hat{P}_j \hat{\tau}_1 \otimes \hat{\tau}_2^{\dagger} \hat{P}_k \hat{\tau}_2 \right) \circ \hat{\mathcal{G}}
    =
    \left( \hat{\tau}_1\otimes \hat{\tau}_2 \right)^{\dagger} \circ \left( \hat{P}_j \otimes \hat{P}_k \right) \circ \hat{\mathcal{M}} \circ \left( \hat{\tau}_1 \otimes \hat{\tau}_2 \right)
    =
    \label{eqn:Pre-Sep_qubits}
    \left( \hat{\tau}_1\otimes \hat{\tau}_2 \right)^{\dagger} \circ \hat{\mathcal{M}} \circ \Bar{P}^{(1)}_{j, \hat{\mathcal{M}}} \circ \Bar{P}^{(2)}_{k, \hat{\mathcal{M}}} \circ \left( \hat{\tau}_1 \otimes \hat{\tau}_2 \right).
\end{align}
Let $\textrm{Tr}_{q} \left [ \cdot \right ]$ denote the partial trace~\cite{BHATIA2003125}  over qubit $q$ of its argument\footnote{Throughout the rest of this proof tracing out one qubit will always result in just a single-qubit gate acting on the other.}
\emph{For example}, if $\Bar{P}^{(1)}_{j, \hat{\mathcal{M}}} = \hat{X} \otimes \hat{Z}$ and $\Bar{P}^{(2)}_{j, \hat{\mathcal{M}}} = \hat{Y} \otimes \hat{I}$ then:
\begin{align} \textit{Tr}_{2} \bigg[ \Bar{P}^{(1)}_{j, \hat{\mathcal{M}}} \circ \Bar{P}^{(2)}_{k, \hat{\mathcal{M}}} \bigg] = \hat{Z} \text{ and } \textit{Tr}_{1} \bigg[ \Bar{P}^{(1)}_{j, \hat{\mathcal{M}}} \circ \Bar{P}^{(2)}_{k, \hat{\mathcal{M}}} \bigg] = \hat{Z},
\end{align}
up to an overall phase.

Returning to the main argument, using the partial traces and that $\Bar{P}^{(1)}_{j, \hat{\mathcal{M}}} \circ \Bar{P}^{(2)}_{k, \hat{\mathcal{M}}}$ will always be a tensor product of two single-qubit Pauli matrices so either partial trace of it will always be a single single-qubit Pauli matrix, the right-hand-side of Eqn.~\eqref{eqn:Pre-Sep_qubits} can be expressed as the tensor product of an element of $\Gamma_{\hat{\tau}_1}$ with an element of $\Gamma_{\hat{\tau}_2}$ post-composed with $\hat{\mathcal{G}}$, i.e.,
\begin{align}
     \left(  \hat{\tau}_1^{\dagger} \hat{P}_j \hat{\tau}_1 \otimes \hat{\tau}_2^{\dagger} \hat{P}_k \hat{\tau}_2 \right) \circ \hat{\mathcal{G}}
    =
    \label{prePprimeQprime}
    \hat{\mathcal{G}} \circ \left( \bigg \{ \hat{\tau}_1^{\dagger} \circ \textit{Tr}_{2} \bigg[ \Bar{P}^{(1)}_{j, \hat{\mathcal{M}}} \circ \Bar{P}^{(2)}_{k, \hat{\mathcal{M}}} \bigg] \circ \hat{\tau}_1 \bigg\} \otimes 
\bigg\{ \hat{\tau}_2^{\dagger} \circ \textit{Tr}_{1} \bigg[ \Bar{P}^{(1)}_{j, \hat{\mathcal{M}}} \circ \Bar{P}^{(2)}_{k, \hat{\mathcal{M}}} \bigg] \circ \hat{\tau}_2 \bigg \} \right)
=
\hat{\mathcal{G}} \circ \left( \hat{P}^{\prime}_{j,k} \otimes \hat{Q}^{\prime}_{j,k} \right),
\end{align}
where:
\begin{align}
    \label{PPrimeDef}
    \hat{P}^{\prime}_{j,k} 
    &=
    \hat{\tau}_1^{\dagger} \circ \textit{Tr}_{2} \bigg[ \Bar{P}^{(1)}_{j, \hat{\mathcal{M}}} \circ \Bar{P}^{(2)}_{k, \hat{\mathcal{M}}} \bigg] \circ \hat{\tau}_1\\
    \label{QPrimeDef}
    \text{ and } \hat{Q}^{\prime}_{j,k} 
    &=
    \hat{\tau}_2^{\dagger} \circ \textit{Tr}_{1} \bigg[ \Bar{P}^{(1)}_{j, \hat{\mathcal{M}}} \circ \Bar{P}^{(2)}_{k, \hat{\mathcal{M}}} \bigg] \circ \hat{\tau}_2.
\end{align}
As $ \hat{\tau}_1^{\dagger} \circ \textit{Tr}_{2} \bigg[ \Bar{P}^{(1)}_{j, \hat{\mathcal{M}}} \circ \Bar{P}^{(2)}_{k, \hat{\mathcal{M}}} \bigg] \circ \hat{\tau}_1  \in \Gamma_{\hat{\tau}_1}$ and $ \hat{\tau}_2^{\dagger} \circ \textit{Tr}_{1} \bigg[ \Bar{P}^{(1)}_{j, \hat{\mathcal{M}}} \circ \Bar{P}^{(2)}_{k, \hat{\mathcal{M}}} \bigg] \circ \hat{\tau}_2\in \Gamma_{\hat{\tau}_2}$ and both are single-qubit gates, Lemma~\ref{GgateCommuteLemma} can be seen to be true, with $\hat{P}^{\prime}_{j,k} \in \Gamma_{\hat{\tau}_1}$ and $\hat{Q}^{\prime}_{j,k} \in \Gamma_{\hat{\tau}_2}$ (as used in Eqn.~\eqref{prePprimeQprime}).
\end{proof}
\begin{Note}
    Additionally, for any $\hat{P}^{\prime}_{j,k} \in \Gamma_{\hat{\tau}_1}$ and $\hat{Q}^{\prime}_{j,k} \in \Gamma_{\hat{\tau}_2}$, there exists a $\left(  \hat{\tau}_1^{\dagger} \hat{P}_j \hat{\tau}_1 \otimes \hat{\tau}_2^{\dagger} \hat{P}_k \hat{\tau}_2 \right) \in \Gamma_{\hat{\mathcal{G}}}$, such that Eqn.~\eqref{PprimeQprimeMainProperty} holds.
\end{Note}

\section{Twirling in a new Basis}
\label{TwirlingTauDeets}
\subsection{Pre-requisites: Commutator Functions and Superoperators}
Before beginning the proof of Lemma~\ref{twirlLemma}, we require two definitions of objects featured in the proof of Lemma  \ref{twirlLemma}.\\
The first of these is commutator functions.
\begin{definition}
\label{commutatorsProperDef}
    Define the \underline{commutator function}, $\xi$, between two operators ($\hat{a}$ and $\hat{b}$) by: $ \hat{a} \circ \hat{b} = \xi(\hat{a},\hat{b}) \hat{b} \circ \hat{a}$.\\
    $\forall N \in \mathbb{N}$, if both arguments of the commutator function are in $\mathbb{P}^{\otimes N}$ then it always always returns a value in $\{ 1, -1\}$, as all elements of $\mathbb{P}^{\otimes N}$ either commute or anti-commute with each other.
\end{definition}
These commutator functions have the important properties (stated without proof, but proofs can be found in Ref. \cite{Cai_2019}) that:
\begin{lemma}
\label{xiProductLemma}
    $\forall n \in \mathbb{N}$, $\forall g, g_1, g_2 \in \mathbb{U}\left(n \right)$, $
        \xi \left( g, g_1 \right) \xi \left( g, g_2 \right)
        =
        \xi \left( g, g_1g_2\right)$.
\end{lemma}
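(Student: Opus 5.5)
The plan is a direct computation from Def.~\ref{commutatorsProperDef}, together with a short check that the commutator function is well defined. First I would make the hypothesis precise. The relation $\hat{a} \circ \hat{b} = \xi(\hat{a},\hat{b})\, \hat{b} \circ \hat{a}$ only defines a scalar when $\hat{a}$ and $\hat{b}$ commute up to a scalar factor. So I read the lemma as follows: whenever $\xi(g,g_1)$ and $\xi(g,g_2)$ exist, which is always the case for elements of $\mathbb{P}_1^{\otimes N}$ (the setting in which the lemma is used), then $\xi(g,g_1g_2)$ exists and equals their product.

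The key step is to move $g$ through the product $g_1 g_2$ one factor at a time. Using the defining relation for the pair $(g,g_1)$ and then for the pair $(g,g_2)$,
\begin{align}
g \circ (g_1 g_2) = \xi(g,g_1)\, g_1 \circ g \circ g_2 = \xi(g,g_1)\,\xi(g,g_2)\, (g_1 g_2) \circ g ,
\end{align}
where the scalars can be pulled out freely because they are complex numbers. This exhibits $g$ and $g_1g_2$ as commuting up to the scalar $\xi(g,g_1)\xi(g,g_2)$.

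Next I would argue that this scalar is the unique one, so that it is legitimately $\xi(g, g_1 g_2)$. Suppose $g(g_1g_2) = c\,(g_1g_2)g = c'\,(g_1g_2)g$. Then $(c-c')(g_1g_2)g = 0$. Since $g, g_1, g_2 \in \mathbb{U}(n)$, the product $g_1 g_2 g$ is unitary and in particular nonzero, so $c = c'$. This gives $\xi(g,g_1g_2)=\xi(g,g_1)\xi(g,g_2)$.

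The only real obstacle is this well-definedness issue rather than the algebra. For arbitrary unitaries, $\xi$ need not exist, so the statement must be read as conditional on the existence of $\xi(g,g_1)$ and $\xi(g,g_2)$. Invertibility of unitaries then guarantees uniqueness of the resulting scalar. For Pauli strings, existence is automatic and the values lie in $\{1,-1\}$, consistent with the use of this lemma in the twirling argument.
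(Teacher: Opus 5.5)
Your proof is correct. The paper gives no proof of this lemma (it defers to Ref.~\cite{Cai_2019}), and your computation $g g_1 g_2 = \xi(g,g_1)\, g_1 g g_2 = \xi(g,g_1)\xi(g,g_2)\, g_1 g_2 g$, with uniqueness of the scalar coming from invertibility of $g_1 g_2 g$, is the standard argument for it. Reading the lemma as conditional on $\xi(g,g_1)$ and $\xi(g,g_2)$ existing as scalars is also the right repair of the paper's quantifier over all of $\mathbb{U}(n)$: if one instead took $\xi(g,g_1) = g g_1 g^{-1} g_1^{-1}$ as an operator (as the ``$\propto \hat{I}$'' in condition $\mathbf{S2}$ suggests), the identity genuinely fails, e.g.\ for $g$ the Hadamard gate and $g_1 = g_2 = \hat{Z}$.
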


\begin{lemma}
    \label{lem:reverseOrderXi}
    $\forall N \in \mathbb{N}$, $\forall a, b \in \mathbb{P}_1^{\otimes N}$,
        $\xi \left( a, b \right)
        =
        \xi \left( b, a\right)$.
\end{lemma}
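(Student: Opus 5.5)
The plan is to work directly from the defining relation of the commutator function in Def.~\ref{commutatorsProperDef}, applying it twice with the arguments in opposite orders. For $a, b \in \mathbb{P}_1^{\otimes N}$ the definition gives
\begin{align}
a \circ b = \xi(a,b)\, b \circ a \quad \text{and} \quad b \circ a = \xi(b,a)\, a \circ b .
\end{align}
Substituting the second relation into the first yields $a \circ b = \xi(a,b)\,\xi(b,a)\, a \circ b$. Since $a$ and $b$ are unitaries, the product $a \circ b$ is invertible. Multiplying on the right by $(a \circ b)^{\dagger}$ therefore gives $\xi(a,b)\,\xi(b,a) = 1$, so $\xi(b,a) = \xi(a,b)^{-1}$.

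The second step uses the fact, stated after Def.~\ref{commutatorsProperDef}, that $\xi$ takes values in $\{1,-1\}$ when both arguments lie in $\mathbb{P}_1^{\otimes N}$. Each of $\pm 1$ is its own inverse, so $\xi(a,b)^{-1} = \xi(a,b)$, and hence $\xi(b,a) = \xi(a,b)$.

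The only point that needs any care is justifying that the values lie in $\{1,-1\}$, and I would verify it briefly rather than merely cite it. Write $a = \bigotimes_i a_i$ and $b = \bigotimes_i b_i$ with $a_i, b_i \in \mathbb{P}_1$. Any two single-qubit Paulis (or the identity) either commute or anticommute. Hence $a b = \prod_i \xi(a_i,b_i)\, b a$, and the prefactor is a product of signs, so it is itself a sign.

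This also confirms that $\xi$ is well defined, i.e.\ unique, because $b \circ a \neq 0$. No other obstacle is expected; the lemma is essentially a one-line consequence of invertibility together with the sign-valuedness of $\xi$.
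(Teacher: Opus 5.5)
Your proof is correct and complete. Applying the defining relation in both orders and using that $a\circ b$ is invertible gives $\xi(a,b)\,\xi(b,a)=1$. Since $\xi$ takes values in $\{1,-1\}$ on $\mathbb{P}_1^{\otimes N}$, which you rightly check qubit by qubit, each value is its own inverse, so $\xi(b,a)=\xi(a,b)$.

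The paper gives no proof of its own here; it states the lemma without proof and defers to Ref.~\cite{Cai_2019}. Your argument simply supplies the standard reasoning. It also makes explicit that the restriction to $\mathbb{P}_1^{\otimes N}$ matters only through $\xi\in\{1,-1\}$. For operators that commute up to a general scalar phase, you would only get $\xi(b,a)=\xi(a,b)^{-1}$.
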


\begin{lemma}
    For any $\hat{\tau}_1, \hat{\tau}_2 \in \mathbb{U}(2)$, there exists a bijection, $f: \Gamma_{\hat{\tau}_1} \otimes \Gamma_{\hat{\tau}_2} \rightarrow \mathbb{P}_1^{\otimes 2}$ such that, $\forall \hat{\gamma}_1, \hat{\gamma}_2 \in \Gamma_{\hat{\tau}_1} \otimes \Gamma_{\hat{\tau}_2}$, 
    \begin{align}
    \label{eqn:xiDef}
    \xi \left( f (\hat{\gamma}_1), f(\hat{\gamma}_2) \right) 
    =
    \xi \left( \hat{\gamma}_1, \hat{\gamma}_2 \right).
    \end{align}
\end{lemma}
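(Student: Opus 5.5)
The natural bijection is $f(\hat{\tau}_1^{\dagger}\hat{P}_j\hat{\tau}_1 \otimes \hat{\tau}_2^{\dagger}\hat{P}_k\hat{\tau}_2) = \hat{P}_j \otimes \hat{P}_k$. The plan is to prove three things about this map in turn: it is well defined, it is a bijection, and it preserves the commutator function $\xi$.

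\emph{Well-definedness and bijectivity.} The map $\hat{A} \mapsto \hat{\tau}^{\dagger}\hat{A}\hat{\tau}$ is conjugation by a unitary, so it is injective on $\mathbb{U}(2)$. Hence distinct elements $\hat{P}_j \in \mathbb{P}_1$ give distinct elements of $\Gamma_{\hat{\tau}}$, and $\vert \Gamma_{\hat{\tau}} \vert = \vert \mathbb{P}_1 \vert = 4$. Tensor products of linearly independent operators are linearly independent, so distinct pairs $(j,k)$ give distinct products in $\Gamma_{\hat{\tau}_1} \otimes \Gamma_{\hat{\tau}_2}$. Since $\mathbb{P}_1$ contains no two elements differing by a phase, no ambiguity from scalar factors arises. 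So $f$ is well defined and injective. It is surjective by construction, because every $\hat{P}_j \otimes \hat{P}_k$ is hit.

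\emph{Preservation of $\xi$.} The key observation is that every element of $\Gamma_{\hat{\tau}_1} \otimes \Gamma_{\hat{\tau}_2}$ equals $\hat{U}^{\dagger}\, f(\hat{\gamma})\, \hat{U}$ with the single fixed unitary $\hat{U} = \hat{\tau}_1 \otimes \hat{\tau}_2$. This holds because $(\hat{\tau}_1 \otimes \hat{\tau}_2)^{\dagger}(\hat{P}_j \otimes \hat{P}_k)(\hat{\tau}_1 \otimes \hat{\tau}_2)$ factorises as a tensor product. Then for $\hat{\gamma}_1, \hat{\gamma}_2$ we compute
\begin{align}
\hat{\gamma}_1 \hat{\gamma}_2 = \hat{U}^{\dagger} f(\hat{\gamma}_1) f(\hat{\gamma}_2) \hat{U} = \xi\left(f(\hat{\gamma}_1), f(\hat{\gamma}_2)\right) \hat{U}^{\dagger} f(\hat{\gamma}_2) f(\hat{\gamma}_1) \hat{U} = \xi\left(f(\hat{\gamma}_1), f(\hat{\gamma}_2)\right) \hat{\gamma}_2 \hat{\gamma}_1 .
\end{align}
Here $\hat{U}\hat{U}^{\dagger} = \hat{I}$ is used to insert and cancel the middle factors. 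Comparing with Def.~\ref{commutatorsProperDef} gives $\xi(\hat{\gamma}_1, \hat{\gamma}_2) = \xi(f(\hat{\gamma}_1), f(\hat{\gamma}_2))$.

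\emph{Main obstacle.} The only delicate point is that $\xi$ must be well defined, i.e.\ a unique scalar. Elements of $\mathbb{P}_1^{\otimes 2}$ are invertible and either commute or anticommute, so $\xi(f(\hat{\gamma}_1), f(\hat{\gamma}_2)) \in \{\pm 1\}$ is uniquely determined. Because $\hat{\gamma}_2\hat{\gamma}_1$ is invertible, the scalar $\xi(\hat{\gamma}_1, \hat{\gamma}_2)$ is then uniquely determined by the displayed identity. I will also note that $f$ is multiplicative up to phase, $f(\hat{\gamma}_1\hat{\gamma}_2) \propto f(\hat{\gamma}_1) f(\hat{\gamma}_2)$. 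This remark is not needed for the lemma, but it is what later lets Lemma~\ref{xiProductLemma} and Lemma~\ref{lem:reverseOrderXi} transfer to $\Gamma_{\hat{\mathcal{G}}}$ in the twirling argument.
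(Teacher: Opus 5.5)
Your proof is correct and uses the same idea as the paper: identify $\Gamma_{\hat{\tau}_1}\otimes\Gamma_{\hat{\tau}_2}$ with $\mathbb{P}_1^{\otimes 2}$ by conjugating with the fixed unitary $\hat{U}=\hat{\tau}_1\otimes\hat{\tau}_2$, then cancel the inserted $\hat{U}\hat{U}^{\dagger}$ in the product. Your version is also the cleaner one: the paper writes $f(\hat{\gamma}) = (\hat{\tau}_1\otimes\hat{\tau}_2)^{\dagger}\hat{\gamma}(\hat{\tau}_1\otimes\hat{\tau}_2)$, which conjugates the wrong way (your $\hat{U}\hat{\gamma}\hat{U}^{\dagger}$ is what actually lands in $\mathbb{P}_1^{\otimes 2}$), its verification chain mixes up the operator orders, and it never checks bijectivity, whereas you handle all three correctly.
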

\begin{proof}
    We propose the solution for Eqn.~\eqref{eqn:xiDef}: $\forall \hat{\gamma} \in \Gamma_{\hat{\tau}_1} \otimes \Gamma_{\hat{\tau}_2}$, $f(\hat{\gamma}) = \left( \hat{\tau}_1 \otimes \hat{\tau}_2 \right)^{\dagger} \hat{\gamma} \left( \hat{\tau}_1 \otimes \hat{\tau}_2 \right)$.

    We then check that this is a correct solution by calculating:
    \begin{align}
        f(\hat{\gamma}_2)  f (\hat{\gamma}_1)
        &=
        \xi \left( f (\hat{\gamma}_1), f(\hat{\gamma}_2) \right) f (\hat{\gamma}_1), f(\hat{\gamma}_2)\\
        \Rightarrow
        \left( \hat{\tau}_1 \otimes \hat{\tau}_2 \right)^{\dagger} \hat{\gamma}_1 \left( \hat{\tau}_1 \otimes \hat{\tau}_2 \right) \left( \hat{\tau}_1 \otimes \hat{\tau}_2 \right)^{\dagger} \hat{\gamma}_2 \left( \hat{\tau}_1 \otimes \hat{\tau}_2 \right)
        &=
        \xi \left( f (\hat{\gamma}_1), f(\hat{\gamma}_2) \right) \left( \hat{\tau}_1 \otimes \hat{\tau}_2 \right)^{\dagger} \hat{\gamma}_1 \left( \hat{\tau}_1 \otimes \hat{\tau}_2 \right) \left( \hat{\tau}_1 \otimes \hat{\tau}_2 \right)^{\dagger} \hat{\gamma}_2 \left( \hat{\tau}_1 \otimes \hat{\tau}_2 \right)\\
        \Rightarrow
        \left( \hat{\tau}_1 \otimes \hat{\tau}_2 \right)^{\dagger} \hat{\gamma}_1 \hat{\gamma}_2 \left( \hat{\tau}_1 \otimes \hat{\tau}_2 \right)
        &=
        \xi \left( f (\hat{\gamma}_1), f(\hat{\gamma}_2) \right) \left( \hat{\tau}_1 \otimes \hat{\tau}_2 \right)^{\dagger} \hat{\gamma}_1 \hat{\gamma}_2 \left( \hat{\tau}_1 \otimes \hat{\tau}_2 \right)\\
        \Rightarrow
        \hat{\gamma}_1 \hat{\gamma}_2
        &=
        \xi \left( f (\hat{\gamma}_1), f(\hat{\gamma}_2) \right) \hat{\gamma}_1 \hat{\gamma}_2.
    \end{align}
    This confirms that $\xi \left( f (\hat{\gamma}_1), f(\hat{\gamma}_2) \right) 
    =
    \xi \left( \hat{\gamma}_1, \hat{\gamma}_2 \right).$
\end{proof}
\begin{corollary}
\label{equivSumCorollary}
    $\forall \hat{\gamma} \in \Gamma_{\hat{\tau}_1} \otimes \Gamma_{\hat{\tau}_2}$, $\exists \hat{\mathcal{P}} \in \mathbb{P}_1^{\otimes 2}$ such that:
    \begin{align}
        \sum_{\hat{\lambda} \in \Gamma} \left( \xi \left( \hat{\lambda}, \hat{\gamma}\right) \right)
        =
        \sum_{\hat{\lambda}' \in \mathbb{P}_1^{\otimes 2}} \left( \xi \left( \hat{\lambda}', \hat{\mathcal{P}} \right) \right).
    \end{align}
\end{corollary}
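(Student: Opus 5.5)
The plan is to reindex the left-hand sum through the bijection $f$ of the preceding lemma. Here $\Gamma$ denotes $\Gamma_{\hat{\tau}_1} \otimes \Gamma_{\hat{\tau}_2}$. Recall that $f(\hat{\gamma}) = \left( \hat{\tau}_1 \otimes \hat{\tau}_2 \right)^{\dagger} \hat{\gamma} \left( \hat{\tau}_1 \otimes \hat{\tau}_2 \right)$. For a given $\hat{\gamma} \in \Gamma$, I will take $\hat{\mathcal{P}} = f(\hat{\gamma})$. Since every element of $\Gamma$ has the form $\left( \hat{\tau}_1 \otimes \hat{\tau}_2 \right)^{\dagger} (\hat{P}_j \otimes \hat{P}_k) \left( \hat{\tau}_1 \otimes \hat{\tau}_2 \right)$, this choice is simply $\hat{P}_j \otimes \hat{P}_k \in \mathbb{P}_1^{\otimes 2}$, as the corollary requires.

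The argument then has three steps. First, apply the preceding lemma term by term, giving $\xi(\hat{\lambda}, \hat{\gamma}) = \xi(f(\hat{\lambda}), f(\hat{\gamma})) = \xi(f(\hat{\lambda}), \hat{\mathcal{P}})$ for every $\hat{\lambda} \in \Gamma$. Second, sum over $\hat{\lambda} \in \Gamma$. Third, substitute $\hat{\lambda}' = f(\hat{\lambda})$. Because $f$ is a bijection from $\Gamma$ onto $\mathbb{P}_1^{\otimes 2}$, as $\hat{\lambda}$ runs once over $\Gamma$, $\hat{\lambda}'$ runs exactly once over $\mathbb{P}_1^{\otimes 2}$, which yields the stated equality.

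The one point needing care is that $f$ really is a bijection onto $\mathbb{P}_1^{\otimes 2}$, and in particular that $\Gamma$ has no repeated elements. This holds because conjugation by the fixed unitary $\hat{\tau}_1 \otimes \hat{\tau}_2$ is injective. Its image is $\mathbb{P}_1 \otimes \mathbb{P}_1 = \mathbb{P}_1^{\otimes 2}$, since $\left( \hat{\tau}_1^{\dagger} \hat{P}_j \hat{\tau}_1 \right) \otimes \left( \hat{\tau}_2^{\dagger} \hat{P}_k \hat{\tau}_2 \right) = \left( \hat{\tau}_1 \otimes \hat{\tau}_2 \right)^{\dagger} (\hat{P}_j \otimes \hat{P}_k) \left( \hat{\tau}_1 \otimes \hat{\tau}_2 \right)$. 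For the same reason $\xi$ takes values in $\{1,-1\}$ on $\Gamma$: the values of $\xi$ are unchanged under this conjugation. I expect no real obstacle beyond stating this reindexing explicitly.
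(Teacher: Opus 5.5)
Your reindexing through the bijection $f$, with $\hat{\mathcal{P}} = f(\hat{\gamma})$, is the intended argument: the paper gives no separate proof of this corollary and treats it as an immediate consequence of the preceding lemma in exactly this way.

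There is one concrete slip, inherited from the formula written in the paper's proof of that lemma: the conjugation runs the wrong way. Write $\hat{V} = \hat{\tau}_1 \otimes \hat{\tau}_2$. Since $\hat{\gamma} = \hat{V}^{\dagger}(\hat{P}_j \otimes \hat{P}_k)\hat{V}$, your formula gives $f(\hat{\gamma}) = (\hat{V}^{\dagger})^{2}(\hat{P}_j \otimes \hat{P}_k)\hat{V}^{2}$, that is, $(\hat{\tau}_1^{\dagger})^{2}\hat{P}_j\hat{\tau}_1^{2} \otimes (\hat{\tau}_2^{\dagger})^{2}\hat{P}_k\hat{\tau}_2^{2}$. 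This is in general not a Pauli string at all, for example whenever $\hat{\tau}_1^{2}$ is not Clifford. So, as written, your claims that $\hat{\mathcal{P}} = \hat{P}_j \otimes \hat{P}_k$ and that $f$ maps onto $\mathbb{P}_1^{\otimes 2}$ are false.

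The identity you quote in your last paragraph already shows $\Gamma_{\hat{\tau}_1} \otimes \Gamma_{\hat{\tau}_2} = \hat{V}^{\dagger}\,\mathbb{P}_1^{\otimes 2}\,\hat{V}$. Hence the map onto $\mathbb{P}_1^{\otimes 2}$ is $f(\hat{\gamma}) = \hat{V}\hat{\gamma}\hat{V}^{\dagger}$.

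With that correction the rest of your argument goes through unchanged:
\begin{itemize}
\item $\xi$ is preserved under conjugation by any fixed unitary, so the term-by-term step holds.
\item Injectivity holds for the reason you give.
\item The substitution $\hat{\lambda}' = f(\hat{\lambda})$ then runs exactly once over $\mathbb{P}_1^{\otimes 2}$, which yields the stated equality.
\end{itemize}
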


The other objects to be defined are super-operators, in Def.~\ref{def:superoperators}.
\begin{definition}
    \label{def:superoperators}
    For any operator, $A$, define the corresponding \underline{super-operator}, $\overbracket{A}$, by, for any density matrix, $\rho$:
$
    \overbracket{A} \rho 
    =
    A \rho A^{\dagger}$
\end{definition}
\subsection{Core Algebraic Mechanics of Twirling}
\begin{Note}
    The below Lemma \ref{MostGeneralTwirlLemma} is a generalization and abstractification of in the section titled ``Requirements and results of twirling'' 
 in Ref.~\cite{Cai_2019}. It follows roughly the same approach.
\end{Note}
\begin{lemma}
    \label{MostGeneralTwirlLemma}
    For any set of operators, $\mathbb{S}$, acting on a Hilbert space, $H$, if $\exists \Lambda, Q \subseteq \mathbb{S}$ such that:\\
    $\mathbf{S1}:$ $\mathbb{S}$ is closed under composition,\\
    $\mathbf{S2}:$ $\forall \hat{a}, \hat{b} \in \mathbb{S}$, $\xi \left( \hat{a}, \hat{b} \right) = \xi \left( \hat{b}, \hat{a} \right) \propto \hat{I}$,\\
    $\mathbf{Q1}:$ $\forall \hat{a}, \hat{b} \in Q$, $\textit{Tr} \left [ \hat{a} \circ \hat{b} \right ] = \Xi \delta_{\hat{a}, \hat{b}}$, where $ \Xi \in \mathbb{R}$.\\
    and\\
    \underline{The $\Lambda$ Summation Condition:}
    $\forall \hat{\gamma}_1, \hat{\gamma}_2 \in Q$,
        $\sum_{\hat{\lambda} \in \Lambda} \left( \xi \left( \hat{\gamma}_1 \circ \hat{\gamma}_2, \hat{\lambda} \right) \right)
        =
        \delta_{\hat{\gamma}_1, \hat{\gamma}_2} \big \vert \Lambda \big \vert$,
    \\ and\\
    \underline{The $\Lambda$ Self-Adjoint Condition:}
    $\forall \hat{\lambda} \in \Lambda$,
        $\hat{\lambda} = \hat{\lambda}^{\dagger}$,\\
    then conjugation by uniformly random elements from $\Lambda$ effectively reduces any CPTP error, with Kraus operators that all have a decomposition as a linear combination of elements in Q, to stochastic error consisting of elements from $Q$.
\end{lemma}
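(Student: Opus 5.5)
Write the error as a CPTP map $\mathcal{E}(\rho)=\sum_k \hat{E}_k\rho\hat{E}_k^{\dagger}$. By hypothesis each Kraus operator expands as $\hat{E}_k=\sum_{\hat{a}\in Q} c_{k,\hat{a}}\,\hat{a}$, so
\begin{align}
\mathcal{E}(\rho)=\sum_{\hat{a},\hat{b}\in Q}\chi_{\hat{a},\hat{b}}\,\hat{a}\rho\hat{b}^{\dagger},\qquad \chi_{\hat{a},\hat{b}}=\sum_k c_{k,\hat{a}}\overline{c_{k,\hat{b}}}.
\end{align}
Using super-operators (Def.~\ref{def:superoperators}), the twirled map is $\mathcal{T}(\rho)=\frac{1}{|\Lambda|}\sum_{\hat{\lambda}\in\Lambda}\hat{\lambda}^{\dagger}\,\mathcal{E}\!\left(\hat{\lambda}\rho\hat{\lambda}^{\dagger}\right)\hat{\lambda}$. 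By the $\Lambda$ self-adjoint condition, $\hat{\lambda}^{\dagger}=\hat{\lambda}$. This is exactly the conjugation by a uniformly random element of $\Lambda$ placed on both sides of the error, as in Fig.~\ref{surroundError}. The plan is to show that $\mathcal{T}$ keeps only the diagonal terms $\hat{a}=\hat{b}$ of the $\chi$-matrix, which is the definition of stochastic error from $Q$.

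First, I would move $\hat{\lambda}$ through $\hat{a}$ and $\hat{b}^{\dagger}$ using the commutator function (Def.~\ref{commutatorsProperDef}). Since $\Lambda,Q\subseteq\mathbb{S}$, \textbf{S2} guarantees $\xi(\hat{\lambda},\hat{a})$ is a scalar. This gives $\hat{\lambda}\hat{a}\hat{\lambda}=\xi(\hat{\lambda},\hat{a})\hat{a}\hat{\lambda}^2$, and similarly for $\hat{b}^{\dagger}$. So each term becomes $\xi(\hat{\lambda},\hat{a})\,\overline{\xi(\hat{\lambda},\hat{b})}\,\hat{a}\hat{\lambda}^2\rho\hat{\lambda}^2\hat{b}^{\dagger}$ (up to how the conjugate enters, to be checked). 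Because $\hat{\lambda}$ is unitary and self-adjoint, $\hat{\lambda}^2=\hat{I}$, and the residual $\hat{\lambda}$'s disappear. Next I would combine the two scalar factors into a single $\xi(\hat{a}\hat{b},\hat{\lambda})$ (or $\xi(\hat{a}\hat{b}^{\dagger},\hat{\lambda})$). This uses Lemma~\ref{xiProductLemma}, the symmetry in \textbf{S2}, and the fact that the $\xi$ values are $\pm1$, so they are real and self-inverse. For that last fact I would note that $\xi(\hat{a},\hat{b})\xi(\hat{b},\hat{a})=1$ together with symmetry forces $\xi^2=1$. \textbf{S1} ensures that $\hat{a}\hat{b}\in\mathbb{S}$, so its commutator with $\hat{\lambda}$ is still a scalar.

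Summing over $\hat{\lambda}$ and applying the $\Lambda$ summation condition then yields $\frac{1}{|\Lambda|}\sum_{\hat{\lambda}}\xi(\hat{a}\hat{b},\hat{\lambda})=\delta_{\hat{a},\hat{b}}$. This leaves $\mathcal{T}(\rho)=\sum_{\hat{a}\in Q}\chi_{\hat{a},\hat{a}}\,\hat{a}\rho\hat{a}^{\dagger}$, with $\chi_{\hat{a},\hat{a}}=\sum_k|c_{k,\hat{a}}|^2\ge0$. Finally I would use \textbf{Q1} to show these weights sum to one, so the result is a genuine probability distribution over the unitaries $\hat{a}\in Q$. Trace preservation of $\mathcal{E}$ gives $\sum_k\hat{E}_k^{\dagger}\hat{E}_k=\hat{I}$. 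Taking the trace and using the orthogonality $\mathrm{Tr}[\hat{a}^{\dagger}\hat{b}]\propto\delta_{\hat{a},\hat{b}}$ gives $\sum_{\hat{a}}\chi_{\hat{a},\hat{a}}\,\mathrm{Tr}[\hat{a}^{\dagger}\hat{a}]=\dim H$. Since each $\hat{a}$ is unitary, this gives $\sum_{\hat{a}}\chi_{\hat{a},\hat{a}}=1$.

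The main obstacle is the bookkeeping in the middle step. The summation condition is phrased with $\xi(\hat{\gamma}_1\circ\hat{\gamma}_2,\hat{\lambda})$, whereas the twirl naturally produces $\xi(\hat{\lambda},\hat{a})\,\overline{\xi(\hat{\lambda},\hat{b})}$, which corresponds to $\hat{a}\hat{b}^{\dagger}$. I would handle this by first proving that every $\xi$ value on $\mathbb{S}$ is $\pm1$ (the argument above). It then follows that $\xi(\cdot,\hat{b}^{\dagger})=\xi(\cdot,\hat{b})$, since $\hat{b}^{\dagger}=\hat{b}^{-1}$ and Lemma~\ref{xiProductLemma} applied to $\hat{b}\hat{b}^{-1}=\hat{I}$ gives inverse values. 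With that in hand, the conversion between the two forms is legitimate. If it is needed, I would also use \textbf{Q1} with $\hat{a}^{\dagger}$ versus $\hat{a}$, noting that in the intended applications ($Q=\mathbb{P}_1^{\otimes 2}$ or $\Gamma_{\hat{\mathcal{G}}}$) the elements of $Q$ are self-adjoint.
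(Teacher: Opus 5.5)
Your proposal is correct and follows essentially the same route as the paper: expand each Kraus operator over $Q$, commute the twirling $\hat{\lambda}$'s through using the commutator function (with $\hat{\lambda}^2=\hat{I}$), merge the two phases via Lemma~\ref{xiProductLemma} and \textbf{S2}, and let the $\Lambda$ summation condition remove the off-diagonal terms. Your bookkeeping is more careful than the paper's, which silently replaces $\hat{\gamma}_2^{\dagger}$ by $\hat{\gamma}_2$ and never checks that the weights sum to one; for normalization you do not even need \textbf{Q1}, since the averaged map is trace preserving and each $\hat{a}\in Q$ is unitary, which gives $\sum_{\hat{a}}\chi_{\hat{a},\hat{a}}=1$ directly.
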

\begin{proof}
    Assume $\overbracket{\mathcal{E}}$ is a superoperator representing error conforming to all conditions in the lemma statement. Define the conjugation of the error with uniformly random elements of $\Lambda$ by $\mathcal{T}_{\Lambda}^N \left( \overbracket{\mathcal{E}} \right)$. Then, $\mathcal{T}_{\Lambda}^N \left( \overbracket{\mathcal{E}} \right)$ can be expressed as:
    \begin{align}
        \label{eqn:TwirlFirstInProof}
        \mathcal{T}_{\Lambda}^N \left( \overbracket{\mathcal{E}} \right)
        &=
        \dfrac{1}{\big \vert \Lambda \big \vert} \sum_{\hat{\lambda} \in \Lambda} \left( \overbracket{\hat{\lambda} \circ \mathcal{E} \circ \hat{\lambda}^{\dagger}} \right).
    \end{align}
    As $\mathcal{E}$ is a CPTP map, it has a Kraus decomposition. By the assumption that every Kraus operator of $\mathcal{E}$ has a decomposition as a linear combination of elements from $Q$, and using assumption $\mathbf{Q1}$: for any Kraus operator of $\mathcal{E}$, $\hat{\mathcal{K}}$,
    \begin{align}
        \hat{\mathcal{K}}
        &=
        \sum_{\gamma \in Q} \left( \dfrac{\textit{Tr}\left [ \gamma \hat{\mathcal{K}} \right ]}{\Xi} \gamma \right).
    \end{align}
    Therefore, using the Kraus decomposition of $\mathcal{E}$, for any state, $\rho$,
    \begin{align}
        \label{eqn:AppedixInTwirlToSubIn}
        \overbracket{\hat{\lambda} \circ \mathcal{E} \circ \hat{\lambda}^{\dagger}} \rho
        &=
        \left( \dfrac{1}{\Xi} \right)^2
        \sum_{\substack{\hat{\gamma}_1, \hat{\gamma}_2 \in Q\\ \hat{\mathcal{K}}}} \left( \textit{Tr}\left [ \hat{\gamma}_1\hat{\mathcal{K}} \right ] \bigg[ \textit{Tr}\left [ \hat{\gamma}_2 \hat{\mathcal{K}} \right ]\bigg]^{*} \hat{\lambda} \circ \hat{\gamma}_1 \circ \hat{\lambda}^{\dagger} \circ \rho \circ \left [\hat{\lambda} \circ \hat{\gamma}_2 \circ \hat{\lambda}^{\dagger} \right ]^{\dagger} \right),
    \end{align}
    where the $\hat{\mathcal{K}}$ in the subscript of a summation denotes summing over each Kraus operator of $\mathcal{E}$.
    Therefore, substituting Eqn.~\eqref{eqn:AppedixInTwirlToSubIn} into Eqn.~\eqref{eqn:TwirlFirstInProof}, then suppressing each $\circ$ and using the $\Lambda$ self-adjoint condition:
    \begin{align}
         \mathcal{T}_{\Lambda}^N \left( \overbracket{\mathcal{E}} \right) \rho
         &=
         \label{withTwoSumsInGeneralTwirl}
         \dfrac{1}{\big \vert \Lambda \big \vert} \left( \dfrac{1}{\Xi} \right)^2
        \sum_{\substack{\hat{\gamma}_1, \hat{\gamma}_2 \in Q\\ \hat{\mathcal{K}}}} \left( \textit{Tr}\left [ \hat{\gamma}_1\hat{\mathcal{K}} \right ] \bigg[ \textit{Tr}\left [ \hat{\gamma}_2 \hat{\mathcal{K}} \right ]\bigg]^{*} \sum_{\hat{\lambda} \in \Lambda} \left( \hat{\lambda} \hat{\gamma}_1  \hat{\lambda} \rho \hat{\lambda}  \hat{\gamma}_2 \hat{\lambda}  \right) \right).
    \end{align}
    Consider just the inner sum of Eqn.~\eqref{withTwoSumsInGeneralTwirl}. Using condition $\mathbf{S2}$ and the $\Lambda$ self-adjoint condition:
    \begin{align}
        \sum_{\hat{\lambda} \in \Lambda} \left( \hat{\lambda} \hat{\gamma}_1  \hat{\lambda} \rho \hat{\lambda} \hat{\gamma}_2 \hat{\lambda}  \right)
        &=
        \sum_{\hat{\lambda} \in \Lambda} \left( \xi \left( \hat{\gamma}_1, \hat{\lambda} \right) \xi \left( \hat{\gamma}_2, \hat{\lambda} \right) \hat{\gamma}_1 \rho \hat{\gamma}_2  \right)
        =
        \sum_{\hat{\lambda} \in \Lambda} \left( \xi \left( \hat{\gamma}_1 \hat{\gamma}_2, \hat{\lambda} \right)  \right)  \hat{\gamma}_1 \rho \hat{\gamma}_2
        =
        \label{toResub}
        \big \vert \Lambda \big \vert \delta_{\hat{\gamma}_1, \hat{\gamma}_2} \hat{\gamma}_1 \rho \hat{\gamma}_2,
    \end{align}
    where this has also used Lemma~\ref{xiProductLemma} and Lemma~\ref{lem:reverseOrderXi}.
   Substituting Eqn.~\eqref{toResub} into Eqn.~\eqref{withTwoSumsInGeneralTwirl}: for any state, $\rho$,
    \begin{align}
        \label{eqn:FinalTwirlEquation}
         \mathcal{T}_{\Lambda}^N \left( \overbracket{\mathcal{E}} \right) \rho
         &=
         \dfrac{1}{\big \vert \Lambda \big \vert} \left( \dfrac{1}{\Xi} \right)^2
        \sum_{\substack{\hat{\gamma}_1, \hat{\gamma}_2 \in Q\\ \hat{\mathcal{K}}}} \left( \textit{Tr}\left [ \hat{\gamma}_1\hat{\mathcal{K}} \right ] \bigg[ \textit{Tr}\left [ \hat{\gamma}_2 \hat{\mathcal{K}} \right ]\bigg]^{*} \big \vert \Lambda \big \vert \delta_{\hat{\gamma}_1, \hat{\gamma}_2} \hat{\gamma}_1  \rho \hat{\gamma}_2 \right)
        =
        \sum_{\substack{\hat{\gamma}_1\in Q\\ \hat{\mathcal{K}}}} \left( \dfrac{\big \vert \textit{Tr}\left [ \hat{\gamma}_1\hat{\mathcal{K}} \right ] \big \vert^2}{\Xi^2} \hat{\gamma}_1 \rho \hat{\gamma}_1 \right)
        \Rightarrow
        \mathcal{T}_{\Lambda}^N \left( \overbracket{\mathcal{E}} \right)
        =
        \sum_{\substack{\hat{\gamma}_1 \in Q\\ \hat{\mathcal{K}}}} \left( \dfrac{\big \vert \textit{Tr}\left [ \hat{\gamma}_1\hat{\mathcal{K}} \right ] \big \vert^2}{\Xi^2} \overbracket{\hat{\gamma}_1} \right).
    \end{align}
    As, for any $\hat{\gamma}_1 \in \mathbb{U}(2)$ and set of Kraus operators, $\sum_{\hat{\mathcal{K}}} \left( \vert \textit{Tr}\left [ \hat{\gamma}_1 \hat{\mathcal{K}} \right ] \big \vert^2 \right) \in \mathbb{R}$ (and $\Xi \in \mathbb{R}$, due to $\mathbf{Q1}$), the twirled error, $\mathcal{T}_{\Lambda}^N \left( \overbracket{\mathcal{E}} \right)$ (at the end of Eqn.~\eqref{eqn:FinalTwirlEquation}), can be interpreted as stochastic error consisting of elements from $Q$. 
\end{proof}
\subsection{Proof of Lemma~\ref{twirlLemma}}
\begin{lemmaRestate}[Lemma~\ref{twirlLemma}]
If any CPTP map is conjugated by uniformly random elements from $\Gamma_{\hat{\mathcal{G}}}$, it is effectively (in terms of the probabilities of measurement outcomes) reduced to $\Gamma_{\hat{\mathcal{G}}}$-stochastic error.
\end{lemmaRestate}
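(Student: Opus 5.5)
The natural route is to apply Lemma~\ref{MostGeneralTwirlLemma} with $\Lambda = Q = \Gamma_{\hat{\mathcal{G}}}$, taking $\mathbb{S}$ to be the group generated by $\Gamma_{\hat{\mathcal{G}}}$ together with phases $\{\pm 1, \pm i\}$. Equivalently, $\mathbb{S} = (\hat{\tau}_1 \otimes \hat{\tau}_2)^{\dagger}\, \mathcal{P}_2\, (\hat{\tau}_1 \otimes \hat{\tau}_2)$, where $\mathcal{P}_2$ is the two-qubit Pauli group. The whole argument rests on the map $f(\hat{\gamma}) = (\hat{\tau}_1 \otimes \hat{\tau}_2)\, \hat{\gamma}\, (\hat{\tau}_1 \otimes \hat{\tau}_2)^{\dagger}$ (or its inverse, depending on convention). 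This is a unitary conjugation, so it is an algebra isomorphism that preserves products, adjoints, traces and commutator functions. It carries $\Gamma_{\hat{\mathcal{G}}}$ bijectively onto $\mathbb{P}_1^{\otimes 2}$, as in the bijection lemma and Corollary~\ref{equivSumCorollary}. Every hypothesis of Lemma~\ref{MostGeneralTwirlLemma} should therefore transport from the standard Pauli case.

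I would check the conditions in order. \textbf{S1}: $\mathbb{S}$ is a conjugate of a group, so it is closed under composition. \textbf{S2}: Pauli elements commute or anticommute, so $\xi(\hat a,\hat b) = \pm 1 = \xi(\hat b,\hat a)$. This carries over to $\mathbb{S}$ because $\xi$ is invariant under simultaneous conjugation, and the phases only ever commute. \textbf{Q1}: $\mathrm{Tr}[\hat a \hat b] = \mathrm{Tr}[f(\hat a) f(\hat b)] = 4\,\delta_{\hat a,\hat b}$ for $\hat a, \hat b \in \Gamma_{\hat{\mathcal{G}}}$, using Hermiticity of the Paulis, so $\Xi = 4$. \textbf{Self-adjointness}: $(\hat{\tau}^{\dagger}\hat P\hat{\tau})^{\dagger} = \hat{\tau}^{\dagger}\hat P\hat{\tau}$ because $\hat P$ is Hermitian. \textbf{$\Lambda$ summation}: for $\hat{\gamma}_1, \hat{\gamma}_2 \in \Gamma_{\hat{\mathcal{G}}}$, $\sum_{\hat\lambda} \xi(\hat{\gamma}_1\hat{\gamma}_2, \hat\lambda) = \sum_{\hat{\lambda}' \in \mathbb{P}_1^{\otimes 2}} \xi(f(\hat{\gamma}_1) f(\hat{\gamma}_2), \hat{\lambda}')$. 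Here $f(\hat\gamma_1) f(\hat\gamma_2)$ is a Pauli string up to phase. The standard Pauli character-orthogonality fact says this sum equals $16$ if that string is the identity and $0$ otherwise. It is the identity exactly when $\hat{\gamma}_1 = \hat{\gamma}_2$, giving $\delta_{\hat\gamma_1,\hat\gamma_2}|\Lambda|$.

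Next I need the Kraus-decomposition hypothesis. Because $Q = \Gamma_{\hat{\mathcal{G}}}$ is the image of the full Pauli basis under a unitary conjugation, it is an orthogonal basis of all $4\times 4$ operators. Every Kraus operator of an arbitrary CPTP map therefore decomposes over $Q$, which is why the lemma can claim \emph{any} CPTP map. If the error also acts on the environment, I would expand each Kraus operator as $\sum_{\gamma}\gamma \otimes \hat{B}_\gamma$ and run the same computation. The cross terms cancel by the identical $\xi$-sum argument, leaving a convex mixture of $\overbracket{\hat\gamma}$ acting on the system. Lemma~\ref{MostGeneralTwirlLemma} then yields $\sum_{\hat\gamma \in Q} p_{\hat\gamma}\, \overbracket{\hat\gamma}$ with $p_{\hat\gamma} = \sum_{\hat{\mathcal K}} |\mathrm{Tr}[\hat\gamma \hat{\mathcal K}]|^2/16 \ge 0$. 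Trace preservation combined with Q1 gives $\sum_{\hat\gamma} p_{\hat\gamma} = 1$, so the twirled map is genuinely $\Gamma_{\hat{\mathcal{G}}}$-stochastic error. The phrase ``effectively'' is interpreted as averaging the measurement statistics over the random choice of twirl.

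The main obstacle I expect is the bookkeeping in the $\Lambda$ summation and S2 steps. Elements of $\Gamma_{\hat{\mathcal{G}}}$ are only Pauli up to conjugation, and products such as $\hat\gamma_1\hat\gamma_2$ pick up phases. I would handle this by doing every computation on the $f$-image, where it is the textbook Pauli computation, and pulling results back through $f$ using invariance of $\xi$ and of the trace.
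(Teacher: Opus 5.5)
Your proposal is correct and follows the paper's proof essentially step for step: apply Lemma~\ref{MostGeneralTwirlLemma} with $\Lambda = Q = \Gamma_{\hat{\mathcal{G}}}$, then carry $\mathbf{S1}$, $\mathbf{S2}$, $\mathbf{Q1}$, self-adjointness and the $\Lambda$ summation condition over from the Pauli case via conjugation by $\hat{\tau}_1 \otimes \hat{\tau}_2$ (the paper uses Corollary~\ref{equivSumCorollary} for the summation step). Your choice of $\mathbb{S}$ as the phase-extended conjugated Pauli group, rather than $\Gamma_{\hat{\mathcal{G}}}$ itself, is a small but genuine improvement: $\Gamma_{\hat{\mathcal{G}}}$ is closed under composition only up to phases, and the paper's check of $\mathbf{S1}$ glosses over this.
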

\begin{proof}[Proof of Lemma~\ref{twirlLemma}]
    This proof consists of showing that $\forall \hat{\tau}_1,\hat{\tau}_2 \in \mathbb{U}(2)$, Lemma~\ref{MostGeneralTwirlLemma} can be applied by setting $\mathbb{S} = Q = \Lambda = \Gamma_{\hat{\tau}_1} \otimes \Gamma_{\hat{\tau}_2}$. This then directly implies Lemma~\ref{twirlLemma}.

    Each condition of Lemma~\ref{MostGeneralTwirlLemma} is addressed in turn.

    $\Gamma_{\hat{\tau}_1} \otimes \Gamma_{\hat{\tau}_2}$ inherits its group structure from the Pauli group on two spins/qubits (as it is isomorphic to this group), so conditions $\mathbf{S1}$ and $\mathbf{S2}$ are met.
    
    This inheritance of group structure from the Pauli group (used above), along with the cyclic invariance of the trace, also implies condition $\mathbf{Q1}$.
    
    Showing that the $\Lambda$ summation condition is satisfied is a little more involved. First consider: $\forall \hat{\gamma}_1, \hat{\gamma}_2 \in \Gamma_{\hat{\tau}_1} \otimes \Gamma_{\hat{\tau}_2}$,
    \begin{align}
        \hat{\gamma}_1 \circ \hat{\gamma}_2 \in \begin{cases}
            \left( \Gamma_{\hat{\tau}_1} \otimes \Gamma_{\hat{\tau}_2} \right) \backslash \{ \hat{I} \}, &\mathrm{ if } \hspace{0.4cm} \hat{\gamma}_1 = \hat{\gamma}_2\\
            \{\hat{I}\}, &\mathrm{ if } \hspace{0.4cm} \hat{\gamma}_1 \not = \hat{\gamma}_2 
        \end{cases}.
    \end{align}
    Therefore, $\forall \hat{\gamma}_1, \hat{\gamma}_2 \in \Gamma_{\hat{\tau}_1} \otimes \Gamma_{\hat{\tau}_2}$, $\exists \hat{\gamma}_3 \in \Gamma_{\hat{\tau}_1} \otimes \Gamma_{\hat{\tau}_2} \backslash \{ \hat{I} \}$ such that:
    \begin{align}
        \sum_{\hat{\lambda} \in \Gamma_{\hat{\tau}_1} \otimes \Gamma_{\hat{\tau}_2}} \left( \xi \left( \hat{\gamma}_1 \circ \hat{\gamma}_2, \hat{\lambda} \right) \right)
        &=
        \begin{cases}
            \big \vert \Gamma \big \vert, &\mathrm{ if } \hspace{0.4cm} \hat{\gamma}_1 = \hat{\gamma}_2\\
             \sum_{\hat{\lambda} \in \Gamma_{\hat{\tau}_1} \otimes \Gamma_{\hat{\tau}_2}} \left( \xi \left( \hat{\gamma}_3, \hat{\lambda} \right) \right), &\mathrm{ if } \hspace{0.4cm} \hat{\gamma}_1 \not = \hat{\gamma}_2 
        \end{cases}.
    \end{align}
    Using Corollary~\ref{equivSumCorollary}, $\forall \hat{\tau}_1, \hat{\tau}_2 \in \mathbb{U}(2)$, $\forall \hat{\gamma}_3 \in \Gamma_{\hat{\tau}_1} \otimes \Gamma_{\hat{\tau}_2}$, $\sum_{\hat{\lambda} \in \Gamma_{\hat{\tau}_1} \otimes \Gamma_{\hat{\tau}_2}} \left( \xi \left( \hat{\gamma}_3, \hat{\lambda} \right) \right)$ equates to $\sum_{\hat{\lambda}' \in \mathbb{P}_1^{\otimes 2}} \left( \xi \left( \hat{\mathcal{P}}, \hat{\lambda}' \right) \right)$ for some $\hat{\mathcal{P}} \in \mathbb{P}_1^{\otimes 2}$. This reduces showing that the $\Lambda$ summation condition is met to checking a finite number of sums, (i.e,  for each possible value of $\hat{\mathcal{P}} \in \mathbb{P}_1^{\otimes 2}$) always equal zero. Therefore, the summation condition is met.\\
    
    Finally we turn to examine the $\Lambda$ self-adjoint condition. As $\forall \hat{\lambda} \in \Gamma_{\hat{\tau}_1} \otimes \Gamma_{\hat{\tau}_2}$, $\exists \hat{\mathcal{P}}, \hat{\mathcal{P}}^{\prime} \in \mathbb{P}_1$ such that $\hat{\lambda} = (\hat{\tau}_1 \otimes \hat{\tau}_2)^{\dagger} (\hat{\mathcal{P}} \otimes \hat{\mathcal{P}}^{\prime}) (\hat{\tau}_1 \otimes \hat{\tau}_2)$, $\forall \hat{\lambda} \in \Gamma_{\hat{\tau}_1} \otimes \Gamma_{\hat{\tau}_2}$,
    \begin{align}
        \hat{\lambda}^{\dagger}
        =
        \left [ (\hat{\tau}_1 \otimes \hat{\tau}_2)^{\dagger} (\hat{\mathcal{P}} \otimes \hat{\mathcal{P}}^{\prime}) (\hat{\tau}_1 \otimes \hat{\tau}_2) \right]^{\dagger}
        =
        (\hat{\tau}_1 \otimes \hat{\tau}_2)^{\dagger} (\hat{\mathcal{P}} \otimes \hat{\mathcal{P}}^{\prime})^{\dagger} \left [ (\hat{\tau}_1 \otimes \hat{\tau}_2)^{\dagger} \right]^{\dagger}
        =
        (\hat{\tau}_1 \otimes \hat{\tau}_2)^{\dagger} (\hat{\mathcal{P}} \otimes \hat{\mathcal{P}}^{\prime}) (\hat{\tau}_1 \otimes \hat{\tau}_2)
        =
        \hat{\lambda}.
    \end{align}
    Hence, the $\Lambda$ self-adjoint condition is also met.
\end{proof}

\subsection{Twirling Error in State Preparation and Measurement}
Error does not just occur in the non-SPAM operators. It also occurs in state preparation and measurement (collectively referred to as SPAM). This error too can be twirled into stochastic error, as shown in Lemmas~\ref{statePrepTwirlLemma} and~\ref{measTwirlError}.

\begin{lemma}
    \label{statePrepTwirlLemma}
    Error in the preparation of the state $\vert \hat{\tau}^{\dagger} \rangle \langle \hat{\tau}^{\dagger} \vert = \hat{\tau}^{\dagger} \vert 0 \rangle \langle 0 \vert \hat{\tau}$ can be twirled into $\Gamma_{\hat{\tau}}$-stochastic error, for any $\hat{\tau} \in \mathbb{U} (2)$, on the same terms as Pauli twirling state preparation.
\end{lemma}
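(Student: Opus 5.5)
The plan is to reduce the statement to ordinary Pauli twirling of state preparation by conjugating everything with $\hat{\tau}$, so that no new twirling argument is needed. Model the erroneous preparation, as allowed by $\mathbf{N1}$, as the ideal preparation of $\hat{\tau}^{\dagger}\vert 0\rangle\langle 0\vert\hat{\tau}$ followed by a CPTP map $\mathcal{E}$. This can be rewritten as the ideal preparation of $\vert 0 \rangle\langle 0 \vert$, followed by the unitary $\hat{\tau}^{\dagger}$, followed by $\mathcal{E}$. Equivalently, it is $\vert 0\rangle\langle 0\vert$ followed by the CPTP map $\mathcal{E}' = \overbracket{\hat{\tau}}\circ\mathcal{E}\circ\overbracket{\hat{\tau}^{\dagger}}$, and then by $\hat{\tau}^{\dagger}$. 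Here $\mathcal{E}'$ is CPTP, since it is a composition of CPTP maps. The error thus sits on the preparation of $\vert 0\rangle$, in the frame where the usual Pauli twirl applies.

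Next I would specify the twirl for $\vert \hat{\tau}^{\dagger}\rangle$ and transport it. Pauli twirling of state preparation works as follows: a uniformly random $\hat{P}\in\mathbb{P}_1$ is applied after preparation. Its inverse (up to a phase fixed by $\hat{P}\vert 0\rangle$, which is trivially tracked classically since $\hat{Z}$ and $\hat{I}$ stabilize $\vert 0\rangle$ and $\hat{X},\hat{Y}$ flip it) is absorbed into a relabelling of the prepared state or outcome. Averaging over $\hat{P}$ turns $\mathcal{E}'$ into $\sum_j p_j \overbracket{\hat{P}_j}$. Conjugating the whole construction by $\hat{\tau}^{\dagger}$, the corresponding operation on $\vert\hat{\tau}^{\dagger}\rangle$ is to apply the element $\hat{\tau}^{\dagger}\hat{P}\hat{\tau}\in\Gamma_{\hat{\tau}}$ of Def.~\ref{PauliVesicleSetDef} after preparation. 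This is a single-qubit gate, so by $\mathbf{N2}$ it does not change the error. Conjugating the twirled map back gives $\overbracket{\hat{\tau}^{\dagger}}\circ\left(\sum_j p_j\overbracket{\hat{P}_j}\right)\circ\overbracket{\hat{\tau}} = \sum_j p_j \overbracket{\hat{\tau}^{\dagger}\hat{P}_j\hat{\tau}}$, which is $\Gamma_{\hat{\tau}}$-stochastic error.

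To justify the middle step formally rather than by analogy, I would invoke Lemma~\ref{MostGeneralTwirlLemma} on a single qubit with $\mathbb{S}=Q=\Lambda=\Gamma_{\hat{\tau}}$. The conditions $\mathbf{S1}$, $\mathbf{S2}$, $\mathbf{Q1}$ and self-adjointness follow exactly as in the proof of Lemma~\ref{twirlLemma}, since $\Gamma_{\hat{\tau}}$ is unitarily conjugate to $\mathbb{P}_1$. The $\Lambda$ summation condition reduces, via the one-qubit analogue of Corollary~\ref{equivSumCorollary}, to the finite Pauli check $\sum_{\hat{\lambda}\in\mathbb{P}_1}\xi(\hat{P},\hat{\lambda})=4\delta_{\hat{P},\hat{I}}$. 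One point needs attention: every Kraus operator of $\mathcal{E}$ must decompose over $Q=\Gamma_{\hat{\tau}}$. This holds because $\Gamma_{\hat{\tau}}$, like $\mathbb{P}_1$, spans all $2\times 2$ matrices, so the twirl applies to arbitrary CPTP error and not only a restricted class.

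I expect the main obstacle to be the phrase ``on the same terms as Pauli twirling state preparation''. The first twirling gate $\hat{\tau}^{\dagger}\hat{P}\hat{\tau}$ is immediately after preparation, but its partner, which completes the conjugation around $\mathcal{E}$, would have to sit before the preparation, and that is impossible. I would handle this the way the Pauli case does. The inner gate acts on $\vert\hat{\tau}^{\dagger}\rangle$ as the identity up to a phase (for $\hat{P}\in\{\hat{I},\hat{Z}\}$) or as the map to the orthogonal state $\hat{\tau}^{\dagger}\vert 1\rangle$ (for $\hat{P}\in\{\hat{X},\hat{Y}\}$). The latter is handled by preparing that state instead, i.e.\ preparing $\vert 0\rangle$ or $\vert 1\rangle$ in the $\hat{\tau}^{\dagger}$ frame and classically tracking the flip. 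This needs the model to assume that error in preparing the basis states of $\beta_{\hat{\tau}^{\dagger}}$ is identical, exactly the assumption Pauli SPAM twirling makes for $\vert 0\rangle,\vert 1\rangle$. I would state that dependence explicitly rather than try to remove it.
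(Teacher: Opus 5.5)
Your change of frame, $\mathcal{E}' = \overbracket{\hat{\tau}}\circ\mathcal{E}\circ\overbracket{\hat{\tau}^{\dagger}}$ acting on $\vert 0\rangle\langle 0\vert$, is exactly what the paper does. The gap is in the twirl you then run. Because you twirl with all of $\Gamma_{\hat{\tau}}$ and relabel the prepared state, you need a hypothesis that is not in $\mathbf{N1}$--$\mathbf{N2}$: that preparing $\hat{\tau}^{\dagger}\vert 0\rangle$ and $\hat{\tau}^{\dagger}\vert 1\rangle$ suffers identical error. Nor is this what ``the same terms as Pauli twirling state preparation'' means here. The Pauli state-preparation twirl the paper relies on, both in its proof and in step 5 of Algorithm~\ref{targBuildingAlg} and steps 5--6 of Algorithm~\ref{TrapGenAlgorithm}, only applies a random $\hat{Z}$ after preparing $\vert 0\rangle$. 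It assumes nothing about preparing $\vert 1\rangle$. So, as written, your argument proves a weaker lemma. Your attribution of that assumption to standard Pauli SPAM twirling is also incorrect.

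The missing idea is one you half-stated yourself. For $\hat{P}\in\{\hat{I},\hat{Z}\}$, the gate $\hat{\tau}^{\dagger}\hat{P}\hat{\tau}$ fixes $\vert\hat{\tau}^{\dagger}\rangle$ up to phase. So for these two elements the ``partner'' before the preparation is free, and this two-element twirl alone suffices. The paper applies $\hat{\tau}^{\dagger}\hat{Z}\hat{\tau}$ with probability $1/2$ after preparation. In your frame this produces $\frac{1}{2}\sum_{k\in\{0,1\}}\hat{Z}^{k}\mathcal{E}'(\vert 0\rangle\langle 0\vert)\hat{Z}^{k}$, which is $\mathcal{E}'(\vert 0\rangle\langle 0\vert)$ dephased in the computational basis, i.e.\ $(1-p)\vert 0\rangle\langle 0\vert + p\,\hat{X}\vert 0\rangle\langle 0\vert\hat{X}$. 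Rotating back gives stochastic $\{\hat{\tau}^{\dagger}\hat{I}\hat{\tau},\hat{\tau}^{\dagger}\hat{X}\hat{\tau}\}\subseteq\Gamma_{\hat{\tau}}$ error. Note that $\{\hat{I},\hat{Z}\}$ does \emph{not} satisfy the $\Lambda$ summation condition of Lemma~\ref{MostGeneralTwirlLemma} for $Q=\mathbb{P}_1$, so the channel $\mathcal{E}'$ itself is not twirled. For state preparation, however, only its output on the single fixed input matters, and that output is fully twirled. Hence invoking Lemma~\ref{MostGeneralTwirlLemma} with $\Lambda=\Gamma_{\hat{\tau}}$, relabelling the prepared state, and the extra assumption are all unnecessary.

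Your route is valid under the added assumption. Its output, stochastic $\Gamma_{\hat{\tau}}$ error on $\vert\hat{\tau}^{\dagger}\rangle$, acts the same as the paper's up to phases, so nothing downstream would break. But it does not establish the lemma under the paper's error model.
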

\begin{proof}
    Consider preparing the state $\vert \hat{\tau}^{\dagger} \rangle \langle \hat{\tau}^{\dagger} \vert$. Then, with probability $0.5$ apply the gate $\hat{\tau}^{\dagger} \hat{Z} \hat{\tau}$. If CPTP error, $\mathcal{E}$, occurs after state preparation, this results in the mixed state (which we denote as $J$):
    \begin{align}
        J
        &=
        \dfrac{1}{2} \sum_{k \in \{ 0,1 \} } \left( \left( \hat{\tau}^{\dagger}\hat{Z} \hat{\tau} \right)^k \mathcal{E} \left( \vert \hat{\tau}^{\dagger} \rangle \langle \hat{\tau}^{\dagger} \vert \right) \left( \left [ \hat{\tau}^{\dagger}\hat{Z} \hat{\tau} \right ]^{\dagger} \right)^k  \right)
        =
        \label{stateTwirl}
        \dfrac{1}{2} \hat{\tau}^{\dagger} \sum_{\substack{k \in \{ 0,1 \}\\ \hat{\mathcal{K}}} } \left( \hat{Z}^k \hat{\tau} \hat{\mathcal{K}} \hat{\tau}^{\dagger} \vert 0 \rangle \langle 0 \vert \hat{\tau} \hat{\mathcal{K}}^{\dagger}\hat{\tau}^{\dagger}\hat{Z}^k \right) \hat{\tau},
    \end{align}
    where $\hat{\mathcal{K}}$ are the Kraus operators of $\mathcal{E}$.
    $\hat{\mathcal{K}}$ has a decomposition in the $\Gamma_{\hat{\tau}}$ basis (as it is a basis for $\mathbb{U}(4)$). Therefore, $\hat{\tau} \hat{\mathcal{K}} \hat{\tau}^{\dagger}$ has a decomposition in the Pauli basis with the same coefficients as $\hat{\mathcal{K}}$ has in the $\Gamma_{\hat{\tau}}$ basis. Using this decomposition in the RHS of Eqn.~\eqref{stateTwirl}, reduces the summation in the RHS of Eqn.~\eqref{stateTwirl} to the Pauli twirling of preparing the state $\vert 0 \rangle \langle 0 \vert$. Therefore, applying the results of Ref.~\cite{Ferracin_2019}:
    \begin{align}
        J
        &=
        \hat{\tau}^{\dagger} \sum_{k \in \{ 0,1 \} } \left( \alpha_k \hat{Z}^k \vert 0 \rangle \langle 0 \vert \hat{Z}^k \right) \hat{\tau}
        =
        \sum_{k \in \{ 0,1 \} } \left( \alpha_k  \left(\hat{\tau}^{\dagger} \hat{Z} \hat{\tau} \right)^{k} \vert \hat{\tau}^{\dagger} \rangle \langle \hat{\tau}^{\dagger} \vert \left( \hat{\tau}^{\dagger} \hat{Z} \hat{\tau} \right)^{k} \right)=
        \sum_{\hat{\gamma} \in \Gamma^*_{\hat{\tau}} } \left( \alpha_k \hat{\gamma} \vert \hat{\tau}^{\dagger} \rangle \langle \hat{\tau}^{\dagger} \vert \hat{\gamma}^{\dagger} \right),
    \end{align}
    where $\forall k \in \{0,1\}, \alpha_k \in [0,1]$. Let $\Gamma_{\hat{\tau}}^* = \big \{ \hat{\tau}^{\dagger} \hat{P} \hat{\tau} \text{ } \vert \text{ } \hat{P} \in \{ \hat{I}, \hat{X} \} \big \}$.
    Hence all error in the preparation of the state $\vert \hat{\tau}^{\dagger} \rangle \langle \hat{\tau}^{\dagger} \vert$ has been effectively reduced to stochastic $\Gamma^{*}_{\hat{\tau}} \subseteq \Gamma_{\hat{\tau}}$ error.
\end{proof}

\begin{lemma}
    \label{measTwirlError}
    For any $\hat{\tau} \in \mathbb{U}(2)$, CPTP error in a measurement in the $\{ \hat{\tau}^{\dagger} \vert 0 \rangle, \hat{\tau}^{\dagger} \vert 1 \rangle \}$ basis can be twirled to stochastic $\hat{\gamma}_{\hat{\tau}}$ error by, with probability $0.5$, applying a $\hat{\tau}^{\dagger} \hat{Z} \hat{\tau}$ gate immediately before the measurement.
\end{lemma}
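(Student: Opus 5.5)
I would mirror the proof of Lemma~\ref{statePrepTwirlLemma}: conjugate the whole problem by $\hat{\tau}$ so that it becomes ordinary Pauli twirling of a computational-basis measurement, use the known result from Ref.~\cite{Ferracin_2019}, and then conjugate back. I model the erroneous measurement, as allowed by $\mathbf{N1}$, as a CPTP map $\mathcal{E}$ with Kraus operators $\{\hat{\mathcal{K}}\}$ acting immediately before an ideal projective measurement with projectors $\Pi_b = \hat{\tau}^{\dagger}\vert b\rangle\langle b\vert\hat{\tau}$, $b\in\{0,1\}$. With probability $0.5$ the gate $\hat{\tau}^{\dagger}\hat{Z}\hat{\tau}$ is inserted. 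This gate is not subject to $\mathcal{E}$, since $\mathcal{E}$ is attached to the measurement and $\mathbf{N2}$ lets its single-qubit error be folded in. Its insertion point must sit on the far side of $\mathcal{E}$ from the projector, so that it commutes with $\Pi_b$ and conjugates the error. Concretely, I will write the gate as entering symmetrically on both sides of $\mathcal{E}$, with the one adjacent to $\Pi_b$ absorbed because it commutes with $\Pi_b$. This is the same trick as in Fig.~\ref{surroundError}.

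The outcome probabilities are $p(b)=\tfrac12\sum_{k\in\{0,1\}}\sum_{\hat{\mathcal{K}}}\mathrm{Tr}\big[\Pi_b\,(\hat{\tau}^{\dagger}\hat{Z}\hat{\tau})^k\hat{\mathcal{K}}(\hat{\tau}^{\dagger}\hat{Z}\hat{\tau})^k\rho\,(\cdots)^{\dagger}\big]$. I then insert $\hat{\tau}\hat{\tau}^{\dagger}$ throughout and use cyclicity of the trace. This turns the expression into the Pauli-$Z$ twirl of the channel with Kraus operators $\hat{\tau}\hat{\mathcal{K}}\hat{\tau}^{\dagger}$, applied to the state $\hat{\tau}\rho\hat{\tau}^{\dagger}$ and followed by a computational-basis measurement.

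Next, I expand each $\hat{\tau}\hat{\mathcal{K}}\hat{\tau}^{\dagger}$ in the Pauli basis. Its coefficients equal those of $\hat{\mathcal{K}}$ in the $\Gamma_{\hat{\tau}}$ basis, and Lemma~\ref{MostGeneralTwirlLemma} confirms $\Gamma_{\hat{\tau}}$ inherits the required orthogonality $\mathbf{Q1}$. Averaging over $k$ kills the cross terms between $\{\hat{I},\hat{Z}\}$ and $\{\hat{X},\hat{Y}\}$, which is exactly the $Z$-twirl computation in Ref.~\cite{Ferracin_2019}. What remains is $\sum_P c_P\,P(\cdot)P^{\dagger}$ with $P\in\{\hat{I},\hat{X},\hat{Y},\hat{Z}\}$ and $c_P\ge 0$ summing to $1$ by trace preservation, plus leftover coherent terms inside the $\{\hat{I},\hat{Z}\}$ and $\{\hat{X},\hat{Y}\}$ blocks.

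The main obstacle is that a single $Z$-twirl does not fully diagonalize those leftover blocks. The key observation is that they are invisible to a computational-basis measurement. Terms like $\hat{I}\rho\hat{Z}$ contribute $\mathrm{Tr}[\vert b\rangle\langle b\vert\rho\hat{Z}]$, which is the same as a diagonal contribution up to sign, and pairing with the conjugate term gives a real quantity that can be absorbed. More cleanly, the $Z$ after the channel acts trivially on the measurement. The whole effect can therefore be rewritten as $\hat{I}$ or $\hat{X}$ applied stochastically before the measurement, with probabilities $\alpha_0=c_I+c_Z$ and $\alpha_1=c_X+c_Y$, up to the absorbed coherent parts. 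I would verify this by writing $p(b)$ explicitly and checking that it equals $\sum_k\alpha_k\langle b\vert\hat{X}^k\hat{\tau}\rho\hat{\tau}^{\dagger}\hat{X}^k\vert b\rangle$.

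Finally I conjugate back: $\hat{X}^k$ becomes $(\hat{\tau}^{\dagger}\hat{X}\hat{\tau})^k\in\Gamma_{\hat{\tau}}$. The measurement error is thus equivalent to stochastic error from $\Gamma^*_{\hat{\tau}}\subseteq\Gamma_{\hat{\tau}}$ acting before an ideal $\beta$-basis measurement, as claimed.
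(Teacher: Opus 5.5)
You follow the paper's route: conjugate by $\hat{\tau}$ to reduce to a $Z$-twirled computational-basis measurement, invoke the twirling computation of Ref.~\cite{Ferracin_2019}, and conjugate back. You also correctly identify what this route must handle: a single $\{\hat I,\hat Z\}$ twirl leaves coherent cross terms inside the $\{\hat I,\hat Z\}$ and $\{\hat X,\hat Y\}$ blocks. Equivalently, Lemma~\ref{MostGeneralTwirlLemma} does not apply with $\Lambda=\{\hat I,\hat\tau^{\dagger}\hat Z\hat\tau\}$, because the $\Lambda$ summation condition fails.

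The gap is your claim that these terms are invisible to the measurement or can be absorbed. Write $\hat L=\hat\tau\hat{\mathcal K}\hat\tau^{\dagger}=a_I\hat I+a_X\hat X+a_Y\hat Y+a_Z\hat Z$ and $\sigma=\hat\tau\rho\hat\tau^{\dagger}$. The $Z$-average then gives exactly
\begin{align*}
p(b)=\sum_{b'\in\{0,1\}}M_{bb'}\,\langle b'\vert\sigma\vert b'\rangle,
\qquad
M_{bb'}=\sum_{\hat{\mathcal K}}\big\vert\langle b\vert\hat L\vert b'\rangle\big\vert^{2},
\end{align*}
with $\langle b'\vert\hat L\vert b'\rangle=a_I+(-1)^{b'}a_Z$ and $\langle 1-b'\vert\hat L\vert b'\rangle=a_X+i(-1)^{b'}a_Y$. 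The cross terms carry the sign $(-1)^{b'}$. Hence the flip probability $M_{1-b',b'}$ depends on which basis state reaches the measurement. By contrast, any stochastic mixture of elements of $\Gamma_{\hat\tau}$ produces a symmetric $M$; this includes your $\{\hat I,\hat X\}$ mixture with weights $\alpha_0,\alpha_1$.

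A concrete counterexample: take $\hat\tau=\hat I$ and amplitude damping before the measurement, with $\hat{\mathcal K}_0=\vert 0\rangle\langle 0\vert+\sqrt{1-\gamma}\,\vert 1\rangle\langle 1\vert$ and $\hat{\mathcal K}_1=\sqrt{\gamma}\,\vert 0\rangle\langle 1\vert$. This channel is invariant under conjugation by $\hat Z$, so the twirl does nothing, and $M_{10}=0\neq\gamma=M_{01}$. The identity you plan to verify, $p(b)=\sum_k\alpha_k\langle b\vert\hat X^k\sigma\hat X^k\vert b\rangle$, therefore fails for every choice of $\alpha_k$ once $\gamma>0$.

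So the statement cannot be proven as posed. The gate $\hat\tau^{\dagger}\hat Z\hat\tau$ commutes with the measurement projectors and so cannot symmetrize asymmetric readout noise. The paper's own proof makes the same move you do. It keeps only the diagonal weights $d_{\hat P_1}=\sum_{\hat{\mathcal K}}\vert\alpha^{(\hat{\mathcal K})}_{\hat P_1}\vert^2$, appealing to the state-preparation derivation of Ref.~\cite{Ferracin_2019}. That argument works for state preparation only because the error there acts on a single fixed input ($\vert 0\rangle$ in the rotated frame). A single flip probability then suffices, and the $b'$-dependence above never arises.

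A correct version needs an extra ingredient. One option is to also apply $\hat\tau^{\dagger}\hat X\hat\tau$ with probability $0.5$ before the measurement, flipping the recorded outcome whenever it is applied. This is a full $\Gamma_{\hat\tau}$ twirl with classical correction. It replaces $M$ by its symmetrization and yields genuine stochastic $\{\hat I,\hat\tau^{\dagger}\hat X\hat\tau\}$ error with flip probability $\tfrac12(M_{01}+M_{10})$. The other option is to restrict the error model to symmetric readout noise.
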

\begin{proof}
    Define the probability of a single measurement measuring $\vert \hat{\tau}^{\dagger} \rangle$ as $\mathbb{P} \left( \vert \hat{\tau}^{\dagger} \rangle \right)$ and define $\rho_{\textit{nd}}$ as an arbitrary state the qubits are in immediately before the twirl is applied.
    Therefore, $\mathbb{P} \left( \vert \hat{\tau}^{\dagger} \rangle \right)$ can be evaluated (using the the Kraus decomposition of any CPTP error that may occur) as:
    \begin{align}
        \mathbb{P} \left( \vert \hat{\tau}^{\dagger} \rangle \right)
        &=
        \sum_{\substack{k \in \{0,1\}\\ \hat{\mathcal{K}}}} \left( \textit{Tr} \bigg[ \vert \hat{\tau}^{\dagger} \rangle \langle \hat{\tau}^{\dagger} \vert \hat{\mathcal{K}} \hat{\tau}^{\dagger} \hat{Z}^k \hat{\tau} \rho_{\textit{nd}} \hat{\tau}^{\dagger} \hat{Z}^k \hat{\tau} \hat{\mathcal{K}}^{\dagger} \bigg] \right)
        =
        \sum_{k \in \{0,1\}} \left( \textit{Tr} \bigg[ \sum_{\hat{\mathcal{K}}} \left(\sum_{\hat{P}_1, \hat{P}_2 \in \mathbb{P}_1} \left( \alpha^{(\hat{\mathcal{K}})}_{\hat{P}_1} \left( \alpha^{(\hat{\mathcal{K}})}_{\hat{P}_2} \right)^* \hat{\tau}^{\dagger} \vert 0 \rangle \langle 0 \vert \hat{P}_1 \hat{Z}^k \hat{\tau} \rho_{\textit{nd}} \hat{\tau}^{\dagger} \hat{Z}^k \hat{P}_2^{\dagger} \hat{\tau} \right) \bigg] \right) \right),
    \end{align}
     where the second equality follows from the fact that each $\hat{\mathcal{K}}$ can be decomposed into the $\Gamma^*_{\hat{\tau}}$ basis and -- similarly to before $\alpha_{\hat{P}_j}^{(\mathcal{K})} = \text{Tr}\left( \hat{P}_j \hat{\mathcal{K}}\right) / 2^{N}$. As $\rho_{\textit{nd}}$ is the result of a circuit that ends in $\hat{\tau}^{\dagger}$, define $\Bar{\rho}$ by $\rho_{\textit{nd}} = \hat{\tau}^{\dagger} \Bar{\rho} \hat{\tau}$
    \begin{align}
        \mathbb{P} \left( \vert \hat{\tau}^{\dagger} \rangle \right)
        &=
        \sum_{\substack{k \in \{0,1\}\\ \hat{\mathcal{K}}}} \left( \textit{Tr} \bigg[ \sum_{\hat{P}_1, \hat{P}_2 \in \mathbb{P}_1} \left( \alpha^{(\hat{\mathcal{K}})}_{\hat{P}_1} \left( \alpha^{(\hat{\mathcal{K}})}_{\hat{P}_2} \right)^* \vert 0 \rangle \langle 0 \vert \hat{P}_1 \hat{Z}^k \hat{\tau} \hat{\tau}^{\dagger} \Bar{\rho} \hat{\tau} \hat{\tau}^{\dagger} \hat{Z}^k \hat{P}_2^{\dagger} \right) \bigg] \right)
        =
        \sum_{\substack{k \in \{0,1\}\\ \hat{\mathcal{K}}}} \left( \textit{Tr} \bigg[ \Bar{\rho} \sum_{\hat{P}_1, \hat{P}_2 \in \mathbb{P}_1} \left( \alpha^{(\hat{\mathcal{K}})}_{\hat{P}_1} \left( \alpha^{(\hat{\mathcal{K}})}_{\hat{P}_2} \right)^* \hat{Z}^k \hat{P}_2 \hat{Z}^k \vert 0 \rangle \langle 0 \vert \hat{Z}^k \hat{P}_1 \hat{Z}^k \right) \bigg] \right).
    \end{align}
    Using the derivation of Pauli twirling state preparation in Ref.~\cite[Appendix A]{Ferracin_2019}, and defining $d_{\hat{P}_1} = \sum_{\hat{\mathcal{K}}} \left( \alpha^{(\hat{\mathcal{K}})}_{\hat{P}_1} \left( \alpha^{(\hat{\mathcal{K}})}_{\hat{P}_1} \right)^* \right) \in \mathbb{R}^+$,
    \begin{align}
        \mathbb{P} \left( \vert \hat{\tau}^{\dagger} \rangle \right)
        &=
        \textit{Tr} \bigg[ \Bar{\rho} \sum_{\hat{P}_1 \in \mathbb{P}_1} \left( d_{\hat{P}_1} \hat{P}_1 \vert 0 \rangle \langle 0 \vert \hat{P}_1 \right) \bigg]
        =
        \label{stochasError}
        \textit{Tr} \bigg[ \vert 0 \rangle \langle 0 \vert \sum_{\hat{\gamma}_1 \in \Gamma_{\hat{\tau}}} \left( d_{\hat{\gamma}_1} \hat{\gamma}_1 \rho_{\textit{nd}} \hat{\gamma}_1^{\dagger} \right) \bigg],
    \end{align}
    where $ \forall \hat{\gamma}_1 \in \Gamma_{\hat{\tau}}$ and corresponding $\hat{P}_1 \in \mathbb{P}_1^{\otimes 2}$, $d_{\hat{\gamma}_1} = d_{\hat{P}_1}$.
    Eqn.~\eqref{stochasError} represents measurement experiencing solely stochastic $\Gamma_{\hat{\tau}}$ error immediately preceeding it.
\end{proof}

\section{Lemmas Used in the Proof of Theorem~\ref{accreditationTheorem}}
\label{sec:theorem3usedLemmas}
\begin{lemma}
\label{simulable}
    All trap circuits generated via Algorithm \ref{trapBuildingAlg} output a classically and efficiently computable output with certainty if no error occurs (i.e., in the ideal case).
\end{lemma}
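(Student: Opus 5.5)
The plan is to show that, with no error, every trap circuit from Algorithm~\ref{trapBuildingAlg} collapses to a Clifford-only circuit acting on a product stabilizer state. We then read off the deterministic output. The argument tracks the construction step by step.

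First, after step~2 of Algorithm~\ref{trapBuildingAlg}, the circuit consists only of state preparations, $\hat{\mathcal{G}}$ gates and measurements. Applying Algorithm~\ref{targBuildingAlg} changes the preparations to $\vert \hat{\tau}_1 \rangle$ and the measurements to the basis $\beta_{\hat{\tau}_1}$. It also inserts the twirling gates of Def.~\ref{twirlDef} and, with probability $1/2$, the layers of $\hat{\tau}_1^{\dagger}\hat{Z}\hat{\tau}_1$ gates. Since the input single-qubit gates were replaced by identities, the basis-change gates added there are identities (or can be absorbed).

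Next, we rewrite each $\hat{\mathcal{G}}$ via its $\hat{\tau}$-decomposition (Eqn.~\eqref{GDef}). By Lemma~\ref{GgateCommuteLemma}, each twirling pair $(\hat{\tau}_1^{\dagger}\hat{P}_j\hat{\tau}_1\otimes\hat{\tau}_2^{\dagger}\hat{P}_k\hat{\tau}_2)\,\hat{\mathcal{G}}\,(\hat{P}'_{j,k}\otimes\hat{Q}'_{j,k})$ equals $\hat{\mathcal{G}}$ exactly. So in the ideal case the twirl gates may be deleted. What remains on each wire is a sequence of $\hat{\tau}_a^{\dagger}\,\hat{\mathcal{M}}\,\hat{\tau}_b$ factors separated by the $\hat{\Delta}$ or $\hat{\Delta}^{\dagger}$ gates that Algorithm~\ref{addDeltaAlg} inserted exactly where consecutive $\hat{\mathcal{G}}$ gates on that wire use different $\hat{\tau}$ indices.

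We then apply Lemma~\ref{DeltaCommute} between consecutive two-qubit gates: each adjacent $\hat{\tau}_b\,\hat{\Delta}^{(\dagger)}\,\hat{\tau}_c^{\dagger}$ (or $\hat{\tau}_b\hat{\tau}_b^{\dagger}$ when indices match) cancels to $\hat{I}$. At the beginning of each wire, the preparation $\hat{\tau}_1\vert 0\rangle$ meets the first $\hat{\tau}_a^{\dagger}$. If $a=1$, they cancel directly to $\vert 0\rangle$. If $a=2$, we need a $\hat{\Delta}$-type correction; I would check that Algorithm~\ref{addDeltaAlg} handles this in the same way its final loop handles the measurement end, where a $\hat{\Delta}$ is added so the last $\hat{\tau}$ converts to $\hat{\tau}_1$ before the $\beta_{\hat{\tau}_1}$ measurement. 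Treating these boundary cases is the step I expect to be the main obstacle, since the algorithm only explicitly patches the measurement end, so it may need a symmetric fix at preparation. The optional layers $\hat{\tau}_1^{\dagger}\hat{Z}\hat{\tau}_1$ and $\hat{\tau}_1^{\dagger}H\hat{\tau}_1$ likewise conjugate through the boundary $\hat{\tau}_1$ gates to plain $\hat{Z}$ and $H$.

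After these cancellations, the ideal trap is equivalent to preparing $\vert 0\rangle^{\otimes N}$, optionally applying $H^{\otimes N}$ and/or $\hat{Z}$ layers, applying a circuit of $\hat{\mathcal{M}}$ gates only, undoing the layers, and measuring in the computational basis. By Def.~\ref{tauDecompDef}, $\hat{\mathcal{M}}$ fixes $\vert 0\rangle^{\otimes 2}$ and $H^{\otimes 2}\vert 0\rangle^{\otimes 2}$ up to phase, so the state after all $\hat{\mathcal{M}}$ gates is $\vert 0\rangle^{\otimes N}$ or $\vert +\rangle^{\otimes N}$ up to phase. The $\hat{Z}$ layers cancel in pairs, or act trivially on $\vert 0\rangle$; on $\vert +\rangle$ they are undone by the matching layer before measurement. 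Hence the output is the all-zero string with certainty. Since the mapping to this string is determined by the classical choices made while building the trap, it is efficiently computable classically.
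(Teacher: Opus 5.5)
Your route is the paper's: drop each twirl pair via Lemma~\ref{GgateCommuteLemma}, expand every $\hat{\mathcal{G}}$ into its $\hat{\tau}$-decomposition, cancel the $\hat{\tau}$'s against the $\hat{\Delta}$'s with Lemma~\ref{DeltaCommute}, and use that $\hat{\mathcal{M}}$ fixes $\vert 0\rangle^{\otimes 2}$ and $\hat{H}^{\otimes 2}\vert 0\rangle^{\otimes 2}$. The step you left open is a genuine gap, and the paper's proof has it too. The paper asserts that after Algorithm~\ref{addDeltaAlg} every $\hat{\tau}_j$ sits next to a $\hat{\tau}_j^{\dagger}$, but step 3 of that algorithm patches only the measurement end.

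Checking cannot close this, because the construction as written fails. Take $\hat{\tau}_1=\hat{T}$, $\hat{\tau}_2=\hat{H}$, $\hat{\mathcal{M}}=\textit{cNOT}$ (control on the $\hat{\tau}_1$ slot), and a circuit with a single $\hat{\mathcal{G}}$. The $\hat{\tau}_2$ wire is $\vert 0\rangle$ after the gate. The end-patch $\hat{\Delta}=\hat{T}^{\dagger}\hat{H}$ turns it into $\hat{T}^{\dagger}\vert +\rangle$, and in the trap instance without the optional layers its measurement is a fair coin.

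You must amend the construction in two ways. First, whenever the first $\hat{\mathcal{G}}$ on a wire carries $\hat{\tau}_2$, insert $\hat{\Delta}^{\dagger}$ immediately before it, so that $\hat{\tau}_2\hat{\Delta}^{\dagger}\hat{\tau}_1^{\dagger}=\hat{I}$ by Lemma~\ref{DeltaCommute}. Second, the final $\hat{\Delta}$ must sit immediately after the last $\hat{\mathcal{G}}$, inside the $\hat{\tau}_1^{\dagger}\hat{Z}\hat{\tau}_1$ and $\hat{\tau}_1^{\dagger}\hat{H}\hat{\tau}_1$ layers. Only then do those layers conjugate to plain $\hat{Z}$ and $\hat{H}$, as you claim.

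Your boundary orientation is also reversed. In $(\hat{\tau}_1\otimes\hat{\tau}_2)^{\dagger}\hat{\mathcal{M}}(\hat{\tau}_1\otimes\hat{\tau}_2)$ each wire first receives $\hat{\tau}_a$ and last receives $\hat{\tau}_a^{\dagger}$, which your interior step $\hat{\tau}_b\hat{\Delta}^{(\dagger)}\hat{\tau}_c^{\dagger}$ correctly assumes. So the preparation must be $\hat{\tau}_1^{\dagger}\vert 0\rangle$, and the measurement must apply $\hat{\tau}_1$ before a $Z$-measurement, as in Lemmas~\ref{statePrepTwirlLemma} and~\ref{measTwirlError}. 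With your reading (``$\hat{\tau}_1\vert 0\rangle$ meets the first $\hat{\tau}_a^{\dagger}$''), the measurement-end correction would have to be $\hat{\tau}_1\hat{\tau}_2^{\dagger}$, which is neither $\hat{\Delta}$ nor $\hat{\Delta}^{\dagger}$.

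Separately, your claim that on $\vert +\rangle^{\otimes N}$ the $\hat{Z}$ layers are ``undone by the matching layer before measurement'' is false in general; the paper's proof does not treat these layers at all. Step 5 of Algorithm~\ref{trapBuildingAlg} places the Hadamard layers adjacent to preparation and measurement. So the $\hat{Z}^{\otimes N}$ layers act on $\vert +\rangle^{\otimes N}$ and must pass through the $\hat{\mathcal{M}}$-circuit, which need not commute with $\hat{Z}^{\otimes N}$. For example, $\textit{cNOT}$ maps $\hat{Z}\otimes\hat{Z}$ to $\hat{I}\otimes\hat{Z}$, and the two-qubit trap then outputs $10$, not $00$.

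The lemma survives because a Clifford $\hat{\mathcal{M}}$ fixing $\vert 00\rangle$ up to phase maps $Z$-type Pauli strings to $Z$-type Pauli strings under conjugation. The output is therefore the deterministic string $w\oplus 1^N$, where $\hat{Z}^{w}=\bigotimes_i \hat{Z}^{w_i}$ is the image of $\hat{Z}^{\otimes N}$ under the $\hat{\mathcal{M}}$-circuit, and $w$ is efficiently computable by stabilizer tracking. Replace ``all-zero'' by this, and with the amendments above your argument proves the lemma.
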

\begin{proof}
    Consider a single trap circuit, $\mathcal{C}_{\textit{tr}}$. All gates in the traps generated by Algorithm \ref{trapBuildingAlg} are either $\hat{\mathcal{G}}$, $\hat{\Delta}$, $\hat{\Delta}^{\dagger}$, $\hat{\tau}^{\dagger}_1 H \hat{\tau}_1$ gates or involved in twirling the $\hat{\mathcal{G}}$ gates.
    In the error-free case, the twirling has no effect on the output of the traps, so the gates involved in twirling can be disregarded and we act as if they are not there. Consider every $\hat{\mathcal{G}}$ gate as its $\hat{\tau}$-decomposition.
    
    Then $\mathcal{C}_{\textit{tr}}$ can be considered as consisting of Clifford gates, $\mathcal{M}$, as in the definition of $\hat{\tau}$-decomposible gates surrounded by $\hat{\tau}_1$, $\hat{\tau}_2$ gates, and their Hermitian conjugates. If the state prep and measurement are also decomposed into $\hat{Z}$ basis measurements and $\hat{\tau}_1$ gates, then, due to Algorithm~\ref{addDeltaAlg}'s  addition of the $\hat{\Delta}$, $\hat{\Delta}^{\dagger}$ gates (which can also be decomposed into $\hat{\tau}_j$ gates and their conjugates), every $\hat{\tau}_j$ gate is immediately neighbours with a $\hat{\tau}_j^{\dagger}$ gate, and vice versa. Due to this, all $\hat{\tau}_1$, $\hat{\tau}_2$, $\hat{\tau}^{\dagger}_1$, and $\hat{\tau}^{\dagger}_2$ gates cancel in $\mathcal{C}_{\textit{tr}}$, leaving just $\mathcal{M}$ gates acting on an initial state which it has no effect on (as assumed in the definition of $\hat{\tau}$-decompositions), then measured in the $Z$-basis. Hence, the trap circuit is equivalent to preparing a state, $\vert 0 \rangle^{\otimes N}$, and them immediately measuring in the basis the state was prepared in, the $Z$-basis. This gives a certain output.
\end{proof}

\begin{lemma}
    \label{lem:SameInIdealCase}
    Algorithm~\ref{targBuildingAlg} produces a circuit with exactly the same probabilities of each measurement outcome as in the input circuit, in the ideal case.
\end{lemma}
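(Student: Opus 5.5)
The plan is to go through the steps of Algorithm~\ref{targBuildingAlg} one at a time. For each step I will show that, with no error present, the new circuit computes the same overall operator as the old one, up to conjugations that leave measurement statistics unchanged. The probabilities of all measurement outcomes are then unchanged.

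Steps 2 and 3 of Algorithm~\ref{targBuildingAlg}, which rewrite state preparation and measurement, come first. A qubit originally prepared in $\vert i \rangle$ is now prepared in $\vert \hat{\tau}_1 \rangle$, followed by a unitary $\hat{V}$ with $\hat{V} \vert \hat{\tau}_1 \rangle = \vert i \rangle$. This prepares exactly the same state. For measurement, suppose the original measurement is in a basis $\{\vert b_0\rangle, \vert b_1 \rangle\}$. It is replaced by a unitary $\hat{W}$ with $\hat{W}\vert b_\ell \rangle = \hat{\tau}_1 \vert \ell \rangle$, followed by a measurement in $\beta_{\hat{\tau}_1}$. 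By Def.~\ref{def:taubasis} the outcome probabilities are unchanged:
\begin{align}
\vert \langle \ell \vert \hat{\tau}_1^{\dagger} \hat{W} \vert \psi \rangle \vert^2 = \vert \langle b_\ell \vert \psi \rangle \vert^2 .
\end{align}
The outcome labels are identified in the obvious way.

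Step 4 is next. Each $\hat{\mathcal{G}}$ is replaced by $(\hat{\tau}_1^{\dagger}\hat{P}_j\hat{\tau}_1 \otimes \hat{\tau}_2^{\dagger}\hat{P}_k\hat{\tau}_2)\,\hat{\mathcal{G}}\,(\hat{P}'_{j,k}\otimes \hat{Q}'_{j,k})$. By Lemma~\ref{GgateCommuteLemma}, together with the fact that every element of $\Gamma_{\hat{\tau}_1}$ and $\Gamma_{\hat{\tau}_2}$ is self-inverse (they are conjugates of Paulis), this product equals $\hat{\mathcal{G}}\,(\hat{P}'_{j,k}\hat{P}'_{j,k}\otimes \hat{Q}'_{j,k}\hat{Q}'_{j,k}) = \hat{\mathcal{G}}$. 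The global phase that Lemma~\ref{GgateCommuteLemma} may carry, coming from the Pauli commutation inside $\hat{\mathcal{M}}$, is irrelevant. So for every random choice of $j,k$ the twirled gate is exactly $\hat{\mathcal{G}}$, and the circuit operator is unchanged.

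Step 5 inserts $\hat{\tau}_1^{\dagger}\hat{Z}\hat{\tau}_1$ after every state preparation and before every measurement, each with probability $0.5$. The prepared state is $\vert \hat{\tau}_1 \rangle = \hat{\tau}_1\vert 0\rangle$, and applying $\hat{\tau}_1^{\dagger}\hat{Z}\hat{\tau}_1$ to it should return it up to phase. The same holds on the measurement side, where this gate should be diagonal in $\beta_{\hat{\tau}_1}$ and so leave every outcome probability unchanged. This is where I expect the main obstacle: the convention clash between $\hat{\tau}_1$ and $\hat{\tau}_1^{\dagger}$, since the paper's SPAM twirl lemmas (Lemmas~\ref{statePrepTwirlLemma} and~\ref{measTwirlError}) are phrased for the basis $\hat{\tau}^{\dagger}\vert 0\rangle$. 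I would fix the convention so that the inserted gate is $\hat{\tau}_1\hat{Z}\hat{\tau}_1^{\dagger}$ relative to $\vert \hat{\tau}_1\rangle$. Equivalently, I would read Def.~\ref{def:taubasis} with $\hat{\tau}_1^{\dagger}$, so that it agrees with Lemma~\ref{statePrepTwirlLemma}. With either reading the gate $\hat{Z}$ is conjugated into the prepared and measured basis, and it then acts as a phase only: $\hat{Z}\vert 0\rangle=\vert 0\rangle$, and $\hat{Z}$ commutes with the $Z$-basis projectors.

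Since each random choice in Steps 4 and 5 yields the same output distribution, so does their mixture, which gives the lemma.
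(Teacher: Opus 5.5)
Your proposal is correct and follows the same route as the paper's proof. The re-expressed state preparations and measurements reproduce the original ones, the $\Gamma_{\hat{\mathcal{G}}}$ twirling gates cancel (you make this precise via Lemma~\ref{GgateCommuteLemma}, the involutivity of elements of $\Gamma_{\hat{\tau}}$, and an irrelevant global phase), and the SPAM-twirl gates act trivially. The convention clash you flag in Step~5 is genuine, and the paper's proof passes over it: with Def.~\ref{def:taubasis} read literally, $\hat{\tau}_1^{\dagger}\hat{Z}\hat{\tau}_1$ need not fix $\vert \hat{\tau}_1 \rangle$ (for $\hat{\tau}_1 = e^{-i\pi \hat{Y}/8}$ it maps it to $\vert \lnot \hat{\tau}_1 \rangle$ up to a phase), so your re-reading of the basis as $\hat{\tau}_1^{\dagger}\vert 0\rangle$, consistent with Lemmas~\ref{statePrepTwirlLemma} and~\ref{measTwirlError}, is needed for the statement to hold.
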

\begin{proof}
    Algorithm~\ref{targBuildingAlg} replaces each single-qubit state preparation with an alternate state but then adds a single-qubit gate that maps the new prepared state to the state prepared in the input circuit. Therefore, in the ideal case -- where no error acts -- the alternative state preparation and added single-qubit gate is indistinguishable from just preparing the state in the input circuit. A similar argument means that the changes to measurement made in  Algorithm~\ref{targBuildingAlg} to the input circuit have no effect, in the ideal case, either.

    Therefore, all that remains to show is that the gates added by Algorithm~\ref{targBuildingAlg} to twirl the error to stochastic error are effectively not there when there is no error. This follows from the gates used to twirl the error canceling with each other (in the case of the gates that twirl the error in other gates) or have no effect (in the case of the gates that twirl state preparation and measurement error).
\end{proof}

\begin{lemma}
    \label{toPauli}
    All error conforming to the error assumptions $\mathbf{N1}$ and $\mathbf{N2}$ in trap circuits can be considered as stochastic $\Gamma_{\hat{\tau}_1}$ error acting immediately before measurement.
\end{lemma}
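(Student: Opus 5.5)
We first reduce every error source in a trap circuit to stochastic error from a known set. Under $\mathbf{N1}$, errors in state preparation, in each $\hat{\mathcal{G}}$ gate and in measurement are CPTP maps adjacent to the ideal operation. Under $\mathbf{N2}$, error in a single-qubit gate does not depend on which single-qubit gate is applied. This lets us absorb it into the CPTP map of the neighbouring two-qubit gate, state preparation or measurement, as in Lemma~1 of Ref.~\cite{jackson2023accreditation}. The $(\hat{\tau}_1,\hat{\tau}_2)$-twirl applied to every $\hat{\mathcal{G}}$ gate by Algorithm~\ref{targBuildingAlg} (called inside Algorithm~\ref{trapBuildingAlg}) then lets us invoke Lemma~\ref{twirlLemma} and Theorem~\ref{TwirlingTheorem}. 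These replace each gate error by $\Gamma_{\hat{\mathcal{G}}}=\Gamma_{\hat{\tau}_1}\otimes\Gamma_{\hat{\tau}_2}$-stochastic error placed immediately after the ideal $\hat{\mathcal{G}}$. The random $\hat{\tau}_1^{\dagger}\hat{Z}\hat{\tau}_1$ layers after preparation and before measurement let Lemmas~\ref{statePrepTwirlLemma} and~\ref{measTwirlError} replace SPAM error by stochastic $\Gamma_{\hat{\tau}_1}$ error. By linearity the whole trap becomes a probabilistic mixture over error patterns, each an ideal trap with inserted fixed operators from the $\Gamma$ sets. It therefore suffices to show that each such operator can be propagated to the end of the circuit and that it arrives as an element of $\Gamma_{\hat{\tau}_1}$ on each qubit.

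Fix one inserted error operator $\hat{\tau}_a^{\dagger}\hat{P}\hat{\tau}_a$ on some qubit, and push it forward gate by gate. Through a $\hat{\mathcal{G}}$ gate whose $\hat{\tau}$-index on that wire matches, the Note after Lemma~\ref{GgateCommuteLemma} (the converse direction of Eqn.~\eqref{PprimeQprimeMainProperty}) shows that an element of $\Gamma_{\hat{\mathcal{G}}}$ placed before $\hat{\mathcal{G}}$ becomes another element of $\Gamma_{\hat{\mathcal{G}}}$ after it. The outcome is determined by conjugating the underlying Pauli through the Clifford $\hat{\mathcal{M}}$. Where consecutive $\hat{\mathcal{G}}$ gates on a wire carry different indices, Algorithm~\ref{addDeltaAlg} has inserted $\hat{\Delta}$ or $\hat{\Delta}^{\dagger}$. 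Lemma~\ref{errorChangeLemma} says that pushing through these converts $\Gamma_{\hat{\tau}_1}$ to $\Gamma_{\hat{\tau}_2}$ or back, keeping the same central Pauli, which is exactly the index the next gate expects. The twirling single-qubit gates and the optional $\hat{\tau}_1^{\dagger}H\hat{\tau}_1$ and $\hat{\tau}_1^{\dagger}\hat{Z}\hat{\tau}_1$ layers are themselves $\hat{\tau}_1$-conjugates of Cliffords. Pushing a $\Gamma_{\hat{\tau}_1}$ element through them gives another $\Gamma_{\hat{\tau}_1}$ element up to phase; for the twirling gates it suffices that they are products of elements of the relevant $\Gamma$ group, which commute with it up to sign. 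We proceed by induction on the number of gates after the insertion point. The invariant is that on every wire the propagated error lies in $\Gamma_{\hat{\tau}_{\mathrm{index}}}$, where the index is that of the next $\hat{\mathcal{G}}$ on that wire.

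At the end of the circuit, step 3 of Algorithm~\ref{addDeltaAlg} places a $\hat{\Delta}$ before measurement on any wire whose last index is $\hat{\tau}_2$. Lemma~\ref{errorChangeLemma} then converts the residual error to $\Gamma_{\hat{\tau}_1}$ on every qubit. Global phases are irrelevant to the superoperators, so each error pattern is equivalent to the ideal trap followed by a tensor product of $\Gamma_{\hat{\tau}_1}$ elements just before measurement. Summing over patterns with their probabilities gives stochastic $\Gamma_{\hat{\tau}_1}$ error before measurement.

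The main obstacle is the bookkeeping in the propagation step. We must check that the twirl gates, which sit between $\hat{\mathcal{G}}$ and the $\hat{\Delta}$ gates, do not break the index invariant. We must also check that Algorithm~\ref{addDeltaAlg} chooses $\hat{\Delta}$ versus $\hat{\Delta}^{\dagger}$ consistently with the direction of propagation. The first thing to try is to rewrite each $\hat{\mathcal{G}}$ as its $\hat{\tau}$-decomposition, as in the proof of Lemma~\ref{simulable}. Then all $\hat{\tau}$ and $\hat{\Delta}$ gates cancel, and the claim reduces to the Ferracin-type statement that Pauli errors pushed through Clifford $\hat{\mathcal{M}}$ circuits remain Pauli, conjugated globally by $\hat{\tau}_1$.
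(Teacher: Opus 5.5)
Your proposal is correct and follows the paper's own proof. The steps are the same: twirl each $\hat{\mathcal{G}}$ error to $\Gamma_{\hat{\mathcal{G}}}$-stochastic error (Lemma~\ref{twirlLemma}) and the SPAM errors to $\Gamma_{\hat{\tau}_1}$ error (Lemmas~\ref{statePrepTwirlLemma} and~\ref{measTwirlError}), push everything to the measurements using that each $\hat{\mathcal{G}}$ normalizes $\Gamma_{\hat{\mathcal{G}}}$, and let the final $\hat{\Delta}$ of Algorithm~\ref{addDeltaAlg} convert what remains to $\Gamma_{\hat{\tau}_1}$; your use of Lemma~\ref{errorChangeLemma} at the intermediate $\hat{\Delta}$, $\hat{\Delta}^{\dagger}$ gates only makes explicit what the paper leaves implicit. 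One point neither you nor the paper checks is the start of a wire whose first $\hat{\mathcal{G}}$ has index $\hat{\tau}_2$: Algorithm~\ref{addDeltaAlg} as written inserts no $\hat{\Delta}^{\dagger}$ after the $\hat{\tau}_1$-basis state preparation, so the base case of your invariant tacitly assumes one is there.
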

\begin{proof}
    By Lemma \ref{twirlLemma} all error occurring in a $\hat{\mathcal{G}}$ gate can be considered as stochastic $\Gamma_{\hat{\tau}_1}$ or $\Gamma_{\hat{\tau}_2}$ error. Similarly, by Lemmas \ref{statePrepTwirlLemma} and \ref{measTwirlError} (in Appendix \ref{TwirlingTauDeets}), all error in state preparation and measurement can be considered as stochastic $\Gamma_{\hat{\tau}_1}$ or $\Gamma_{\hat{\tau}_2}$ error.\\
    Via the application of the twirls showcased in Fig.~\ref{surroundError}, in trap circuits, the (now stochastic) error can be ``pushed\footnote{Meaning that the circuit is rewritten as an equivalent circuit (i.e., giving the exact same measurement probabilities in the ideal case) but with all the error appearing at the end of the circuit, immediately before the measurement.}"~\cite{https://doi.org/10.48550/arxiv.2206.00215} to the end of the trap circuit, to immediately before the measurements whilst remaining elements of $\Gamma_{\hat{\mathcal{G}}}$ -- as all two-qubit gates in the circuit normalize the gates in $\Gamma_{\hat{\tau}_1}$ or $\Gamma_{\hat{\tau}_2}$ incident on them. The remaining stochastic error will be $\Gamma_{\hat{\tau}_1}$ error due to the last step in Algorithm~\ref{addDeltaAlg}.
\end{proof}
\begin{lemma}
\label{allDetectLemma}
    If the $\hat{\tau}^{\dagger}_1 H \hat{\tau}_1$ gates are applied in Algorithm~\ref{trapBuildingAlg},
    all stochastic errors that could not be detected by the traps without the $\hat{\tau}^{\dagger}_1 H \hat{\tau}_1$ gates are mapped to errors that can be detected by the traps. 
\end{lemma}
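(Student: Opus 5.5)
The plan is to conjugate everything by the single-qubit frame $\hat{\tau}_1$, which reduces the claim to the standard Pauli-error detection argument for the traps of Ref.~\cite{Ferracin_2021}. By Lemma~\ref{toPauli}, all error in a trap can be taken to be stochastic $\Gamma_{\hat{\tau}_1}$ error acting immediately before measurement. Each trap run therefore applies a random element $\hat{\tau}_1^{\dagger\otimes N}\hat{P}\,\hat{\tau}_1^{\otimes N}$, with $\hat{P}\in\mathbb{P}_1^{\otimes N}$, to the ideal pre-measurement state.

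From the proof of Lemma~\ref{simulable}, once all $\hat{\tau}$, $\hat{\Delta}$ and $\hat{\Delta}^{\dagger}$ gates cancel, the ideal pre-measurement state is $\hat{\tau}_1^{\dagger\otimes N}\vert 0\rangle^{\otimes N}$ (up to the twirl's $\hat{\tau}_1^{\dagger}\hat{Z}\hat{\tau}_1$ layers). This state is measured in the corresponding $\hat{\tau}_1$-rotated basis. Stripping off the common frame $\hat{\tau}_1^{\otimes N}$, the problem becomes the following: the state $\vert 0\rangle^{\otimes N}$ is hit by the Pauli error $\hat{P}$ and measured in the $Z$ basis. The error is then detected if and only if $\hat{P}$ contains an $X$ or $Y$ on some qubit. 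The undetectable errors are exactly the nontrivial $\hat{P}$ built only from $\hat{I}$ and $\hat{Z}$.

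Next, I would handle the branch in which the $\hat{\tau}_1^{\dagger}H\hat{\tau}_1$ layers are inserted after preparation and before measurement. Pushing the accumulated error through the final layer uses $H\hat{Z}H=\hat{X}$ and $H\hat{X}H=\hat{Z}$, and, in the $\hat{\tau}_1$ frame, $\hat{\tau}_1^{\dagger}H\hat{\tau}_1\,\hat{\tau}_1^{\dagger}\hat{P}\hat{\tau}_1\,\hat{\tau}_1^{\dagger}H\hat{\tau}_1=\hat{\tau}_1^{\dagger}(H\hat{P}H)\hat{\tau}_1$. Hence an error $\hat{P}\in\{\hat{I},\hat{Z}\}^{\otimes N}\setminus\{\hat{I}^{\otimes N}\}$ becomes an element of $\{\hat{I},\hat{X}\}^{\otimes N}$ with at least one $\hat{X}$, which flips the deterministic outcome and is detected.

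The step I expect to be the main obstacle is showing that the ideal trap stays deterministic in the $H$ branch, and that the pushed error in that branch is the $H$-conjugate of the same error as in the other branch. The first fact requires that, after cancelling the $\hat{\tau}$ and $\hat{\Delta}$ gates, the $\hat{\mathcal{M}}$ gates act on $H^{\otimes N}\vert 0\rangle^{\otimes N}$. I would prove it by inserting $\hat{\tau}_1\hat{\tau}_1^{\dagger}$ pairs and applying Lemma~\ref{DeltaCommute} exactly as in Lemma~\ref{simulable}. Then the assumption in Def.~\ref{tauDecompDef} that $\hat{\mathcal{M}}$ fixes $H^{\otimes 2}\vert0\rangle^{\otimes 2}$ gives a deterministic $\vert+\rangle$-type outcome, and the final $H$ layer maps it back to a fixed string. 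The second fact follows from $\mathbf{N2}$: the inserted single-qubit gates do not change the error, and Lemmas~\ref{GgateCommuteLemma} and~\ref{errorChangeLemma} propagate it identically in both branches. Since the $H$ branch occurs with probability $0.5$, every error undetectable without the $H$ layers is detected in that branch, which gives the claim and the $k=1/2$ used in Theorem~\ref{accreditationTheorem}.
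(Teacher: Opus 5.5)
Your route matches the paper's. In the $\hat{\tau}_1$ frame the only undetectable errors are of $\hat{\tau}_1^{\dagger}\hat{Z}\hat{\tau}_1$ type, and pushing them through the final $\hat{\tau}_1^{\dagger}H\hat{\tau}_1$ layer turns them into $\hat{\tau}_1^{\dagger}\hat{X}\hat{\tau}_1$ errors that flip the outcome. Your explicit check that the $H$ branch stays deterministic, because $\hat{\mathcal{M}}$ fixes $H^{\otimes 2}\vert 0\rangle^{\otimes 2}$, is a useful addition that the paper leaves implicit.

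\textbf{The gap.} Your ``second fact'' fails as stated. $\mathbf{N2}$ together with Lemmas~\ref{GgateCommuteLemma} and~\ref{errorChangeLemma} gives identical propagation in the two branches only for errors located strictly between the two layers. A state-preparation error is conjugated by the first layer before it meets any $\hat{\mathcal{M}}$, and $X$-type and $Z$-type errors propagate differently. Take $\hat{\mathcal{M}}=$ cNOT, which Def.~\ref{tauDecompDef} allows. An $\hat{X}$ preparation error on the control reaches the end of the layer-free branch as $\hat{X}\otimes\hat{X}$. In the $H$ branch it reaches the final layer as $\hat{Z}\otimes\hat{I}$, not as $\hat{X}\otimes\hat{X}$.

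So your claim that every accumulated $\hat{P}\in\{\hat{I},\hat{Z}\}^{\otimes N}\setminus\{\hat{I}^{\otimes N}\}$ of a run becomes detectable is false. Add a gate error $\hat{Y}\otimes\hat{X}$ after that cNOT. Up to phases, this gives $\hat{Z}\otimes\hat{I}$ at measurement without the layers and $\hat{Z}\otimes\hat{Z}$ with them, so it is undetected in both branches.

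\textbf{The fix.} You are missing the sentence the paper's proof rests on, and the argument has to be run error by error, as the paper's proof does. By Lemmas~\ref{statePrepTwirlLemma} and~\ref{measTwirlError}, SPAM error is twirled to $\hat{\tau}_1^{\dagger}\hat{X}\hat{\tau}_1$ type only. Because $\hat{\mathcal{M}}$ fixes both $\vert 0\rangle^{\otimes 2}$ and $H^{\otimes 2}\vert 0\rangle^{\otimes 2}$, conjugation by it maps $X$-type Paulis to $X$-type and $Z$-type to $Z$-type. Hence a SPAM error arrives at the measurement as a nontrivial $X$-type error in both branches and is always detected. Any single error that is undetectable without the layers must therefore be of $\hat{\tau}_1^{\dagger}\hat{Z}\hat{\tau}_1$ type and originate in a gate, i.e.\ between the two layers. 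That is exactly the case where your pushing argument is valid.
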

\begin{proof}
    Due to Lemma~\ref{toPauli}, all error can then be considered as stochastic $\hat{\tau}^{\dagger}_1 \hat{X} \hat{\tau}_1$ error and $\hat{\tau}^{\dagger}_1 \hat{Z} \hat{\tau}_1$ error. The only one of these that does not change the outcome of a measurement in the $\vert \hat{\tau}_1 \rangle$ basis is $\hat{\tau}^{\dagger}_1 \hat{Z} \hat{\tau}_1$. All error in the state preparation and measurement is twirled to exclusively stochastic $\hat{\tau}^{\dagger}_1 \hat{X} \hat{\tau}_1$ error, hence all $\hat{\tau}^{\dagger}_1 \hat{Z} \hat{\tau}_1$ error in a circuit occurs in the gates. As per Algorithm~\ref{trapBuildingAlg}, all gates in the circuit are contained within a two layers of $\hat{\tau}^{\dagger}_1 H \hat{\tau}_1$ on each qubit. $\hat{\tau}^{\dagger}_1 \hat{Z} \hat{\tau}_1$ error is converted to $\hat{\tau}^{\dagger}_1 \hat{X} \hat{\tau}_1$ error as it is pushed through the $\hat{\tau}^{\dagger}_1 H \hat{\tau}_1$ gates as:
    \begin{align}
        \left( \hat{\tau}^{\dagger}_1 H \hat{\tau}_1 \right) \left( \hat{\tau}^{\dagger}_1 \hat{Z} \hat{\tau}_1 \right)
        &=
        \hat{\tau}^{\dagger}_1 H \hat{Z} \hat{\tau}_1
        =
        \hat{\tau}^{\dagger}_1 \hat{X} H \hat{\tau}_1
        =
        \left(\hat{\tau}^{\dagger}_1  \hat{X} \hat{\tau}_1 \right) \left( \hat{\tau}^{\dagger}_1 H \hat{\tau}_1 \right).
    \end{align}
\end{proof}
\begin{Note}
    Due to Lemma~\ref{toPauli}, given assumptions N1 and N2, all error can be considered to be stochastic Pauli error. So Lemma~\ref{allDetectLemma} applies to all error occurring in trap and target circuits.
\end{Note}

\section{Proof of Theorem~\ref{TwoFoldTwirlTheorem}}
\label{LargerTwirlAppendix}
\label{ProofXyTwirlTheoremAppendix}

Before proving Theorem~\ref{TwoFoldTwirlTheorem}, we first require several lemmas and a corollary that will be used in the proof of Theorem~\ref{TwoFoldTwirlTheorem}. These begin with Corollary~\ref{twirlconditionsLemma}.
\begin{corollary}
    \label{twirlconditionsLemma}
    $\forall N \in \mathbb{N}$, $\forall \Lambda, Q \in \mathbb{P}_1^{\otimes N}$, conjugation by uniformly random elements from $\Lambda$ effectively reduces any CPTP error, where all terms with non-zero coefficients in the error's Kraus operators Pauli decompositions are in Q, to stochastic Pauli error, if: $\forall \hat{\gamma}_1, \hat{\gamma}_2 \in Q$,
    \begin{align}
        \sum_{\hat{\lambda} \in \Lambda} \left( \xi \left( \hat{\gamma}_1 \circ \hat{\gamma}_2, \hat{\lambda} \right) \right)
        &=
        \delta_{\hat{\gamma}_1, \hat{\gamma}_2} \big \vert \Lambda \big \vert.
    \end{align}
\end{corollary}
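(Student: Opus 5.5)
The plan is to derive Corollary~\ref{twirlconditionsLemma} directly from Lemma~\ref{MostGeneralTwirlLemma}, taking $\mathbb{S} = \mathbb{P}_1^{\otimes N}$ and keeping the given $\Lambda, Q \subseteq \mathbb{P}_1^{\otimes N}$. The $\Lambda$ summation condition is exactly the hypothesis of the corollary, so it needs no work. It then remains to check conditions $\mathbf{S1}$, $\mathbf{S2}$, $\mathbf{Q1}$ and the $\Lambda$ self-adjoint condition for Pauli strings. Once these hold, Lemma~\ref{MostGeneralTwirlLemma} gives that twirling with uniform elements of $\Lambda$ reduces the error to stochastic error with elements from $Q \subseteq \mathbb{P}_1^{\otimes N}$, which is stochastic Pauli error.

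The individual checks are as follows.

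For $\mathbf{S1}$: products of Pauli strings are Pauli strings up to a phase in $\{\pm 1, \pm i\}$. Global phases disappear under the superoperator map $\overbracket{A}\rho = A\rho A^\dagger$. So I will either interpret $\mathbb{S}$ as the Pauli group with phases, or state that closure holds up to phase and note that the proof of Lemma~\ref{MostGeneralTwirlLemma} only uses $\mathbb{S}$ through commutation signs, which are phase-insensitive.

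For $\mathbf{S2}$: any two Pauli strings either commute or anticommute, so $\xi(\hat a,\hat b)\in\{\pm1\}$, a multiple of $\hat I$. Lemma~\ref{lem:reverseOrderXi} gives the symmetry $\xi(\hat a,\hat b)=\xi(\hat b,\hat a)$.

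For the self-adjoint condition: every element of $\mathbb{P}_1^{\otimes N}$ is a tensor product of Hermitian Paulis, hence Hermitian.

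For $\mathbf{Q1}$: use orthogonality of Pauli strings, $\mathrm{Tr}[\hat a \hat b] = 2^N\delta_{\hat a,\hat b}$ for $\hat a,\hat b\in\mathbb{P}_1^{\otimes N}$. This holds because $\hat a\hat b$ is, up to phase, a non-identity Pauli string (which is traceless) unless $\hat a=\hat b$, and $\hat a^2=\hat I$. So $\Xi=2^N\in\mathbb{R}$.

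The main obstacle is a subtle point in applying Lemma~\ref{MostGeneralTwirlLemma}. Its proof expands each Kraus operator as $\hat{\mathcal K}=\sum_{\gamma\in Q}\mathrm{Tr}[\gamma\hat{\mathcal K}]\gamma/\Xi$. That expansion requires the support of $\hat{\mathcal K}$ in the Pauli basis to lie within $Q$, which is exactly the hypothesis on the error here. Because the full Pauli set is orthogonal, the coefficients on $Q$ are the true Pauli coefficients, so the expansion is valid even though $Q$ need not span the whole operator space. I will state this explicitly.

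I will also note that the resulting weights $\sum_{\hat{\mathcal K}}|\mathrm{Tr}[\gamma\hat{\mathcal K}]|^2/\Xi^2$ are nonnegative. They sum to one by trace preservation, using $\sum_{\hat{\mathcal K}} \hat{\mathcal K}^\dagger\hat{\mathcal K}=\hat I$ and orthogonality. Hence they form a genuine probability distribution and the twirled map is stochastic Pauli error.
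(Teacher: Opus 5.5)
Your proposal is correct and follows the same route as the paper. You instantiate Lemma~\ref{MostGeneralTwirlLemma} with $\mathbb{S}=\mathbb{P}_1^{\otimes N}$, observe that the $\Lambda$ summation condition is exactly the corollary's hypothesis, and check $\mathbf{S1}$, $\mathbf{S2}$ and self-adjointness for Pauli strings. Your version is also slightly more complete than the paper's: the paper never verifies $\mathbf{Q1}$, which your orthogonality check with $\Xi = 2^N$ supplies, and it asserts closure under composition without noting that this holds only up to a phase.
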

\begin{proof}
    The core of this proof is an application of Lemma~\ref{MostGeneralTwirlLemma}. Therefore, this proof, in essence, consists of showing that the conditions of Lemma~\ref{MostGeneralTwirlLemma} are met.
    
    Conditions $\mathbf{S1}$ and $\mathbf{S2}$ of Lemma \ref{MostGeneralTwirlLemma} are met as $\forall N \in \mathbb{N}$, $\mathbb{P}_1^{\otimes N}$ is closed under composition and any two elements in $\mathbb{P}_1^{\otimes N}$ either commute or anti-commute.

    Similarly, the $\Lambda$ self-adjoint condition is met as $\Lambda \subseteq \mathbb{P}_1^{\otimes N}$ and all elements of $\mathbb{P}_1^{\otimes N}$ are self-adjoint.
    
    Finally, if $\forall \hat{\gamma}_1, \hat{\gamma}_2 \in Q$,
    \begin{align}
        \sum_{\hat{\lambda} \in \Lambda} \left( \xi \left( \hat{\gamma}_1 \circ \hat{\gamma}_2, \hat{\lambda} \right) \right)
        &=
        \delta_{\hat{\gamma}_1, \hat{\gamma}_2} \big \vert \Lambda \big \vert,
    \end{align}
    then Lemma \ref{MostGeneralTwirlLemma} implies the CPTP 
error conjugated by uniformly random elements of $\Lambda$ is reduced to stochastic Pauli error.
\end{proof}

We now begin the lemmas of this appendix. The first of these is Lemma~\ref{identLemma}, which will be then used in Lemma~\ref{compositionLemma}.

\begin{lemma}
    \label{identLemma}
    For any $\hat{\gamma}_1, \hat{\gamma}_2 \in \mathbb{K}$, where $\mathbb{K}$ is an unflipable set as in the definition of XY-twirlable error (Def.~\ref{XYTwirlableDef}),
    \begin{align}
        \hat{\gamma}_1 \circ \hat{\gamma}_2 = \hat{I} \iff \hat{\gamma}_1 = \hat{\gamma}_2.
    \end{align}
\end{lemma}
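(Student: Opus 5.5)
The proof will rely only on the algebra of $\mathbb{P}_1^{\otimes 2}$. The unflippability condition of Def.~\ref{XYTwirlableDef} should not be needed, only the containment $\mathbb{K} \subset \mathbb{P}_1^{\otimes 2}$. Write $\hat{\gamma}_1 = \hat{A}_1 \otimes \hat{B}_1$ and $\hat{\gamma}_2 = \hat{A}_2 \otimes \hat{B}_2$ with $\hat{A}_i, \hat{B}_i \in \mathbb{P}_1$. Then
\begin{align}
\hat{\gamma}_1 \circ \hat{\gamma}_2 = \left( \hat{A}_1 \hat{A}_2 \right) \otimes \left( \hat{B}_1 \hat{B}_2 \right).
\end{align}
The argument therefore reduces to a single-qubit statement applied to each tensor factor.

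For the direction ($\Leftarrow$), suppose $\hat{\gamma}_1 = \hat{\gamma}_2$. Every element of $\mathbb{P}_1$ is Hermitian and unitary, so it squares to $\hat{I}$. Hence $\hat{A}_1^2 = \hat{I}$ and $\hat{B}_1^2 = \hat{I}$, which gives $\hat{\gamma}_1 \circ \hat{\gamma}_1 = \hat{I} \otimes \hat{I}$. Equivalently, one can cite the $\Lambda$ self-adjoint property used in Corollary~\ref{twirlconditionsLemma}: elements of $\mathbb{P}_1^{\otimes N}$ are self-adjoint unitaries.

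For the direction ($\Rightarrow$), suppose $\hat{\gamma}_1 \circ \hat{\gamma}_2 = \hat{I}$. Since $\hat{\gamma}_1$ is unitary and self-adjoint, multiplying on the left by $\hat{\gamma}_1$ gives $\hat{\gamma}_2 = \hat{\gamma}_1^{-1} = \hat{\gamma}_1^{\dagger} = \hat{\gamma}_1$. This step is a one-line group-theoretic fact.

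If the paper intends equality "up to an overall phase", as in its Clifford definition, then I would add a short factorwise check. For distinct $\hat{A}_1, \hat{A}_2 \in \mathbb{P}_1$, the product $\hat{A}_1 \hat{A}_2$ equals $\pm i$ times the third non-identity Pauli, or equals the non-identity factor itself when one of the two is $\hat{I}$. This follows from the multiplication table $\hat{X}\hat{Y} = i\hat{Z}$ and its cyclic versions. In every such case the product is traceless, so it is not proportional to $\hat{I}$. Consequently, a tensor product whose two factors are each proportional to $\hat{I}$ forces $\hat{A}_1 = \hat{A}_2$ and $\hat{B}_1 = \hat{B}_2$.

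The only point needing care, and the step I expect to be the main obstacle, is this phase convention. Under literal equality the result is immediate. Under equality up to phase, the trace argument above does the work: the trace of a product of distinct Pauli strings vanishes, which is the same orthogonality that underlies condition $\mathbf{Q1}$ in Lemma~\ref{MostGeneralTwirlLemma}. I would remark after the proof that the lemma holds for all of $\mathbb{P}_1^{\otimes 2}$. Its role in Lemma~\ref{compositionLemma} is only to identify when $\hat{\gamma}_1 \circ \hat{\gamma}_2$ is trivial inside $\mathbb{K}$.
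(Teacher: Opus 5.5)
Your proposal is correct and takes essentially the same route as the paper. Both directions rest only on the fact that elements of $\mathbb{P}_1^{\otimes 2}$ are self-inverse, which you rightly justify by Hermiticity as well as unitarity, and ($\Rightarrow$) follows by left-multiplying by $\hat{\gamma}_1$; unflippability is never used. Your additional up-to-phase argument via tracelessness of products of distinct Pauli strings is sound, and it is an optional strengthening that the paper's proof does not address.
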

\begin{proof}\( \\ \)
    \underline{$\Leftarrow$}:
    $\mathbb{K} \subset \mathbb{P}_1^{\otimes 2} \subset \mathbb{U}(4)$, therefore, $\forall \hat{\gamma}_1 \in \mathbb{K}$, $\hat{\gamma}_1 = \hat{\gamma}_1^{-1}$. Hence, if $\hat{\gamma}_1 = \hat{\gamma}_2$, then $
        \hat{\gamma}_1 \circ \hat{\gamma}_2
        =
        \hat{\gamma}_1^{-1} \circ \hat{\gamma}_1
        =
        \hat{I}
        $.\\
    \underline{$\Rightarrow$}:
    Using the same properties of $\mathbb{K}$ as above:
    $
        \hat{\gamma}_1 \circ \hat{\gamma}_2 = \hat{I}
        \Rightarrow
        \hat{\gamma}_1^{-1} \circ \hat{\gamma}_1 \circ \hat{\gamma}_2 = \hat{\gamma}_1^{-1} \circ \hat{I}
        \Rightarrow
        \hat{\gamma}_2 = \hat{\gamma}_1^{-1} = \hat{\gamma}_1
    $.
\end{proof}

Before proceeding to Lemma~\ref{compositionLemma}, we first define notation for a -- unwieldily large -- set that we will use in Lemma~\ref{compositionLemma}.
\begin{definition}
    Define the set:
        $\underline{\mathbb{D}}
        =
        \big \{\hat{X} \otimes \hat{Y}, \hat{X} \otimes \hat{Z}, \hat{Y} \otimes \hat{X}, \hat{Y} \otimes \hat{Z}, \hat{Z} \otimes \hat{X}, \hat{Z} \otimes \hat{Y}, \hat{I} \otimes \hat{X}, \hat{I} \otimes \hat{Y}, \hat{I} \otimes \hat{Z}, \hat{X} \otimes \hat{I}, \hat{Y} \otimes \hat{I}, \hat{Z} \otimes \hat{I}, \hat{I} \otimes \hat{I} \big \} \subset \mathbb{P}_1^{\otimes 2}$.
\end{definition}

\begin{lemma}
    \label{compositionLemma}
    For any unflippable set, $\mathbb{K}$, $\forall \hat{a}, \hat{b} \in \mathbb{K}$, 
    \begin{align}
        \hat{a} \circ \hat{b} \in \mathbb{D}.
    \end{align}
\end{lemma}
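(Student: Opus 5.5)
The plan is to reduce everything to the single-qubit Pauli group modulo phases. Modulo phases this group is the Klein four-group $\{\hat I,\hat X,\hat Y,\hat Z\}\cong\mathbb{Z}_2^2$. The set $\mathbb{D}$ is exactly $\mathbb{P}_1^{\otimes 2}$ with the three elements $\hat X\otimes\hat X$, $\hat Y\otimes\hat Y$, $\hat Z\otimes\hat Z$ removed. So the claim is equivalent to saying that, for $\hat a,\hat b\in\mathbb{K}$, the product $\hat a\circ\hat b$ is never proportional to $\hat P\otimes\hat P$ with $\hat P\neq\hat I$.

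First I would write $\hat a=\hat A_1\otimes\hat A_2$ and $\hat b=\hat B_1\otimes\hat B_2$, so that $\hat a\circ\hat b\propto(\hat A_1\hat B_1)\otimes(\hat A_2\hat B_2)$. Second, I would suppose for contradiction that $\hat A_1\hat B_1\propto\hat P\propto\hat A_2\hat B_2$ with $\hat P\neq\hat I$. In $\mathbb{Z}_2^2$ this forces $\hat B_1=\hat A_1\hat P$ and $\hat B_2=\hat A_2\hat P$. Third, I would try to show this makes $\hat b$ the flip $\hat A_2\otimes\hat A_1$ of $\hat a$ (or forces $\hat a=\hat b=\hat I\otimes\hat I$), contradicting the unflippable condition of Def.~\ref{XYTwirlableDef}. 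The case $\hat a=\hat b$ is already handled by Lemma~\ref{identLemma}, since then the product is $\hat I\otimes\hat I\in\mathbb{D}$.

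The third step is the main obstacle, and on checking it I expect it to fail. The flip condition $\hat B_1=\hat A_2$ requires $\hat A_1\hat P=\hat A_2$, i.e.\ $\hat P=\hat A_1\hat A_2$. Nothing forces this, because $\hat P$ can be any nontrivial element. A concrete instance is already in the paper's second example of an unflippable set. There $\hat Z\otimes\hat Y$ and $\hat I\otimes\hat X$ both lie in $\mathbb{K}$, yet
\begin{align}
(\hat Z\otimes\hat Y)\circ(\hat I\otimes\hat X)=\hat Z\otimes\hat Y\hat X\propto\hat Z\otimes\hat Z\notin\mathbb{D}.
\end{align}
So the lemma as stated cannot be proven, and my proof attempt would instead expose this counterexample.

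For what Theorem~\ref{TwoFoldTwirlTheorem} actually needs via Corollary~\ref{twirlconditionsLemma}, the right repair is to strengthen Def.~\ref{XYTwirlableDef}. The new condition would be that no two elements of $\mathbb{K}$ satisfy $\hat a\circ\hat b\propto\hat P\otimes\hat P$ for $\hat P\neq\hat I$, equivalently $\hat A_1\hat A_2\neq\hat B_1\hat B_2$ (mod phase) for distinct $\hat a,\hat b$. This is a genuine restriction, since the elements $\hat P\otimes\hat P$ commute with every $\Lambda$ element $\hat Q\otimes\hat Q$ and would spoil the $\Lambda$ summation condition. Under that strengthened definition, the argument above goes through in one line: distinct $\hat a,\hat b$ then give $\hat a\circ\hat b\in\mathbb{D}\setminus\{\hat I\otimes\hat I\}$.
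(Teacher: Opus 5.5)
Your analysis is correct. The statement is false, and the gap is in the paper's proof, not in your attempt. The set $\mathbb{K}=\{\hat{I}\otimes\hat{I},\hat{X}\otimes\hat{Y},\hat{Z}\otimes\hat{Y},\hat{Z}\otimes\hat{X},\hat{I}\otimes\hat{X}\}$ satisfies Def.~\ref{XYTwirlableDef}, since none of $\hat{Y}\otimes\hat{X}$, $\hat{Y}\otimes\hat{Z}$, $\hat{X}\otimes\hat{Z}$, $\hat{X}\otimes\hat{I}$ is present. Yet $(\hat{Z}\otimes\hat{Y})\circ(\hat{I}\otimes\hat{X})=-i\,\hat{Z}\otimes\hat{Z}$, which lies outside $\mathbb{D}$ even up to phase.

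The paper follows your first three steps: assume $\hat{a}\circ\hat{b}\in\{\hat{X}\otimes\hat{X},\hat{Y}\otimes\hat{Y},\hat{Z}\otimes\hat{Z}\}$ and deduce that $\hat{b}$ is the flip of $\hat{a}$. It writes $\hat{\sigma}_j\hat{\sigma}_k=\delta_{jk}\hat{I}+i\epsilon_{jkl}\hat{\sigma}_l$ and concludes $\{j_a,j_b\}=\{k_a,k_b\}$. That identity and the inference are valid only when all four indices lie in $\{1,2,3\}$, because then both products single out the same missing index $l$. Once an identity factor $\hat{\sigma}_0$ is allowed, the inference fails, e.g.\ $\hat{\sigma}_3\hat{\sigma}_0\propto\hat{\sigma}_2\hat{\sigma}_1$. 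This is exactly your counterexample and exactly the failure you predicted at your third step.

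The error carries over to Theorem~\ref{TwoFoldTwirlTheorem}. Since $\Lambda=\{\hat{I}\otimes\hat{I},\hat{X}\otimes\hat{X},\hat{Y}\otimes\hat{Y},\hat{Z}\otimes\hat{Z}\}$ is abelian, every $\hat{\lambda}\in\Lambda$ commutes with $\hat{Z}\otimes\hat{Z}$. So the $\Lambda$ sum for $\hat{\gamma}_1=\hat{Z}\otimes\hat{Y}$, $\hat{\gamma}_2=\hat{I}\otimes\hat{X}$ equals $4$ rather than $0$, and the corresponding cross term in a Kraus operator survives the twirl. Table~\ref{signsTable} never detects this because it only tabulates products already in $\mathbb{D}$.

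Your strengthened condition is the right repair, and it can be stated more sharply. The map $\hat{A}_1\otimes\hat{A}_2\mapsto\hat{A}_1\hat{A}_2$ (mod phase) is a homomorphism onto $\mathbb{P}_1$ with kernel $\Lambda$. Requiring $\hat{A}_1\hat{A}_2\neq\hat{B}_1\hat{B}_2$ for distinct elements therefore means $\mathbb{K}$ contains at most one element of each coset of $\Lambda$, so $|\mathbb{K}|\le 4$.
\begin{itemize}
\item The paper's second example has five elements, so it is ruled out.
\item The third example is also ruled out, since $(\hat{Z}\otimes\hat{X})\circ(\hat{Y}\otimes\hat{I})\propto\hat{X}\otimes\hat{X}$.
\item The first example has one element per coset and is fine.
\end{itemize}
Your condition also implies the original no-flip condition for distinct elements, since $\hat{A}\hat{B}\propto\hat{B}\hat{A}$.
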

\begin{proof}
    Let $\hat{a}, \hat{b} \in \mathbb{K}$, then there exists $j_a, k_a, j_b, k_a \in \{ 0, 1, 2, 3\}$, such that:
    \begin{align}
        \label{eqn:reexpressAoB}
         a = \hat{\sigma}_{j_a} \otimes \hat{\sigma}_{k_a} &\text{ and } b = \hat{\sigma}_{j_b} \otimes \hat{\sigma}_{k_b},\\
         &\text{where} \nonumber\\
         j_a \not = k_a &\text{ and } j_b \not = k_b. \label{jBkBInequality}
    \end{align}
    Eqn.~\eqref{eqn:reexpressAoB} and  Eqn.~\eqref{jBkBInequality} can then be used to deduce:
    \begin{align}
        \label{eqn:AcircBMultipliedOut}
        a \circ b
        =
        \left( \hat{\sigma}_{j_a} \otimes \hat{\sigma}_{k_a} \right) \circ \left( \hat{\sigma}_{j_b} \otimes \hat{\sigma}_{k_b} \right)
        =
        \left( \hat{\sigma}_{j_a} \circ \hat{\sigma}_{j_b} \right) \otimes \left( \hat{\sigma}_{k_a} \circ \hat{\sigma}_{k_b} \right)
        =
        \left( \delta_{j_a, j_b} I + i \mathcal{E}_{j_a, j_b, l} \hat{\sigma}_l \right) \otimes \left( \delta_{k_a, k_b} I + i \mathcal{E}_{k_a, k_b, q} \hat{\sigma}_q \right).
    \end{align}
    Assume, for the sake of contradiction, that Lemma~\ref{compositionLemma} is false, i.e., $\hat{a} \circ \hat{b} \not \in  \mathbb{D}$. This is equivalent to: $\hat{a} \circ \hat{b} \in \big \{ \hat{X} \otimes \hat{X}, \hat{Y} \otimes \hat{Y}, \hat{Z} \otimes \hat{Z} \big \}$.
    
    This assumption and Eqn.~\eqref{eqn:AcircBMultipliedOut} imply:
        \begin{align}
        \label{eqn:derivingSetofOptions}
        \mathcal{E}_{j_a, j_b, l} \hat{\sigma}_l
        &=
        \mathcal{E}_{k_a, k_b, q} \hat{\sigma}_q
        \Rightarrow 
        \big \{ j_a, j_b \big \} = \big \{ k_a, k_b \big \}.
        \end{align}
         Eqn.~\eqref{jBkBInequality} and Eqn.~\eqref{eqn:derivingSetofOptions} can only be consistent with each other if $j_a = k_b \hspace{0.1cm} \textit{and} \hspace{0.1cm} j_b = k_a$. Returning to Eqn.~\eqref{eqn:reexpressAoB} with this newfound equality:
        \begin{align}
        \hat{a} = \hat{\sigma}_{j_a} \otimes \hat{\sigma}_{j_b} \hspace{0.1cm} \textit{and} \hspace{0.1cm} \hat{b} = \hat{\sigma}_{j_b} \otimes \hat{\sigma}_{j_a}.
        \end{align}
    As both $\hat{a}$ and $\hat{b}$ are defined to be in the unflippable set, $\mathbb{K}$, this is a contradiction as it implies $\mathbb{K}$ is not unflippable (which it is defined to be).
    Hence, $\hat{a} \circ \hat{b} \in \mathbb{D}$.
\end{proof}

We have finally reached the point where we are ready to present the proof of Theorem~\ref{TwoFoldTwirlTheorem}. For convenience, we restate Theorem~\ref{TwoFoldTwirlTheorem} before the proof.
\begin{theoremRestate}[Theorem~\ref{TwoFoldTwirlTheorem}]
    All XY-twirlable error occurring in a XY-decomposable gate can be twirled to stochastic Pauli error by surrounding it by Pauli gates.
\end{theoremRestate}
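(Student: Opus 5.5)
The plan is to apply Corollary~\ref{twirlconditionsLemma} with $N=2$, $Q=\mathbb{K}$ (the unflippable set containing all Pauli terms of the error's Kraus operators), and $\Lambda = \{\hat{P}\otimes\hat{P} \mid \hat{P}\in\mathbb{P}_1\}$. The first step is to justify that conjugating the error by elements of this $\Lambda$ is physically realisable using only single-qubit Paulis. By Lemma~\ref{commutingLemmaRestate}, each $\hat{P}\otimes\hat{P}$ commutes with $\exp(-i\hat{H}t)$ for every $t$ (the Lemma is stated for $t/2$, but $t$ is arbitrary). Hence placing $\hat{P}\otimes\hat{P}$ before and after the erroneous gate $\mathcal{E}\circ\hat{\mathcal{G}}$ gives $(\hat{P}\otimes\hat{P})\,\mathcal{E}\,(\hat{P}\otimes\hat{P})\circ\hat{\mathcal{G}}$, exactly as in Fig.~\ref{surroundErroreXY}. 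Choosing $\hat{P}$ uniformly from $\mathbb{P}_1$ therefore conjugates the error by a uniformly random element of $\Lambda$. Since $\Lambda\subseteq\mathbb{P}_1^{\otimes 2}$, the corollary applies once its summation condition is verified.

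The second step is the summation condition. We need, for all $\hat{\gamma}_1,\hat{\gamma}_2\in\mathbb{K}$,
\begin{align}
\sum_{\hat{P}\in\mathbb{P}_1}\xi\left(\hat{\gamma}_1\circ\hat{\gamma}_2,\hat{P}\otimes\hat{P}\right)=4\,\delta_{\hat{\gamma}_1,\hat{\gamma}_2}.
\end{align}
If $\hat{\gamma}_1=\hat{\gamma}_2$, Lemma~\ref{identLemma} gives $\hat{\gamma}_1\circ\hat{\gamma}_2=\hat{I}$, so every term equals $1$ and the sum is $4$. If $\hat{\gamma}_1\neq\hat{\gamma}_2$, Lemma~\ref{identLemma} gives $\hat{\gamma}_1\circ\hat{\gamma}_2\neq\hat{I}$, and Lemma~\ref{compositionLemma} gives $\hat{\gamma}_1\circ\hat{\gamma}_2\in\mathbb{D}$ up to phase. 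So $\hat{\gamma}_1\circ\hat{\gamma}_2$ is a non-identity element of $\mathbb{D}$, of the form $\hat{\sigma}_a\otimes\hat{\sigma}_b$ with $a\neq b$ (allowing one of them to be the identity). Phases can be ignored throughout, since $\xi$ is insensitive to scalar factors.

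The key computation is then
\begin{align}
\xi\left(\hat{\sigma}_a\otimes\hat{\sigma}_b,\hat{P}\otimes\hat{P}\right)=\xi(\hat{\sigma}_a,\hat{P})\,\xi(\hat{\sigma}_b,\hat{P}),
\end{align}
summed over $\hat{P}\in\{\hat{I},\hat{X},\hat{Y},\hat{Z}\}$. For a single non-identity Pauli $\hat{\sigma}_a$, the signs $\xi(\hat{\sigma}_a,\hat{P})$ over the four Paulis are $(1,\pm1,\pm1,\pm1)$, with $+1$ exactly at $\hat{P}\in\{\hat{I},\hat{\sigma}_a\}$. The vectors $v_a=(\xi(\hat{\sigma}_a,\hat{P}))_{\hat{P}}$ for $a\in\{0,1,2,3\}$ are the characters of $\mathbb{Z}_2^2$. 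They are mutually orthogonal, so $\sum_{\hat{P}} v_a(\hat{P})v_b(\hat{P})=4\delta_{a,b}$, which vanishes for $a\neq b$. This is exactly the needed zero, and it covers the cases with an identity factor (e.g.\ $\hat{X}\otimes\hat{I}$ gives $1+1-1-1=0$) as well as cases like $\hat{X}\otimes\hat{Y}$ (giving $1-1-1+1=0$).

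Where the argument stands or falls is showing the product lands in $\mathbb{D}$. The diagonal elements $\hat{\sigma}_a\otimes\hat{\sigma}_a$ are precisely those with $a=b$, for which the sum is $4$ rather than $0$. Excluding them is exactly what the unflippable condition secures via Lemma~\ref{compositionLemma}. With both lemmas in hand, the orthogonality computation and the invocation of Corollary~\ref{twirlconditionsLemma} are routine. That corollary (through Lemma~\ref{MostGeneralTwirlLemma}) then yields that the twirled error is stochastic Pauli error with weights $\sum_{\hat{\mathcal{K}}}|\textit{Tr}[\hat{\gamma}\hat{\mathcal{K}}]|^2/16$ for $\hat{\gamma}\in\mathbb{K}$, which completes the proof.
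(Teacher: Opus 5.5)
\paragraph{Same route as the paper, and the same gap.}
Your argument follows the paper's proof step for step: the same $\Lambda=\{\hat I\otimes\hat I,\hat X\otimes\hat X,\hat Y\otimes\hat Y,\hat Z\otimes\hat Z\}$, Corollary~\ref{twirlconditionsLemma}, Lemma~\ref{identLemma} for $\hat\gamma_1=\hat\gamma_2$, and Lemma~\ref{compositionLemma} for $\hat\gamma_1\neq\hat\gamma_2$. Your identity $\sum_{\hat P}\xi(\hat\sigma_a,\hat P)\,\xi(\hat\sigma_b,\hat P)=4\delta_{a,b}$ is a cleaner replacement for checking Table~\ref{signsTable}.

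The gap, which the paper shares, is exactly the step you flag: Lemma~\ref{compositionLemma} is false. Unflippability does not keep $\hat\gamma_1\circ\hat\gamma_2$ away from $\hat X\otimes\hat X$, $\hat Y\otimes\hat Y$, $\hat Z\otimes\hat Z$. Take $\hat\gamma_1=\hat Z\otimes\hat Y$ and $\hat\gamma_2=\hat I\otimes\hat X$, both in the paper's own second example of an unflippable set. Then $\hat\gamma_1\hat\gamma_2=\hat Z\otimes\hat Y\hat X=-i\,\hat Z\otimes\hat Z$, so every term of your sum is $\xi(\hat Z,\hat P)^2=1$ and the sum is $4$, not $0$. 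Likewise $(\hat X\otimes\hat Y)(\hat I\otimes\hat Z)=i\,\hat X\otimes\hat X$ for the unflippable set $\{\hat X\otimes\hat Y,\hat I\otimes\hat Z\}$.

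The paper's proof of that lemma breaks because it applies $\hat\sigma_j\hat\sigma_k=\delta_{jk}\hat I+i\epsilon_{jkl}\hat\sigma_l$ when one index is $0$, where the formula is wrong. So $\{j_a,j_b\}=\{k_a,k_b\}$ does not follow; here the sets are $\{3,0\}$ and $\{2,1\}$.

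\paragraph{The statement itself fails, and what your computation actually proves.}
This is not only a proof defect: the statement fails for such $\mathbb K$. Take $\hat U=(\hat Z\otimes\hat Y+\hat I\otimes\hat X)/\sqrt2$. It is Hermitian with $\hat U^2=\hat I$ (since $\hat X\hat Y+\hat Y\hat X=0$), and its Pauli support lies in the paper's second example, so $\rho\mapsto\hat U\rho\hat U^\dagger$ is XY-twirlable error. But both Pauli components pick up the same sign under every $\hat P\otimes\hat P$. Hence $(\hat P\otimes\hat P)\hat U(\hat P\otimes\hat P)=\pm\hat U$, and the twirl of Fig.~\ref{surroundErroreXY} leaves this coherent, non-Pauli error unchanged.

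Your orthogonality computation actually establishes the correct criterion. The $\Lambda$-twirl turns every error supported on $\mathbb K$ into stochastic Pauli error if and only if no two distinct elements of $\mathbb K$ multiply to a multiple of $\hat X\otimes\hat X$, $\hat Y\otimes\hat Y$ or $\hat Z\otimes\hat Z$. Equivalently, $\mathbb K$ meets each coset of $\Lambda$ at most once. The cosets are $\Lambda$, $\{\hat X\otimes\hat I,\hat I\otimes\hat X,\hat Y\otimes\hat Z,\hat Z\otimes\hat Y\}$, $\{\hat Y\otimes\hat I,\hat I\otimes\hat Y,\hat Z\otimes\hat X,\hat X\otimes\hat Z\}$ and $\{\hat Z\otimes\hat I,\hat I\otimes\hat Z,\hat X\otimes\hat Y,\hat Y\otimes\hat X\}$. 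Unflippability only forbids taking both members of a single flip pair, whereas each non-trivial coset contains two flip pairs.

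If you replace the hypothesis by this coset condition (which the paper's first example satisfies), your proof goes through as written.
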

\begin{proof}[Proof of Theorem~\ref{TwoFoldTwirlTheorem}]
    Corollary \ref{twirlconditionsLemma} implies Theorem~\ref{TwoFoldTwirlTheorem} is true if for any unflippable set, $Q \subseteq \mathbb{P}_1^{\otimes 2}$ (as defined in Def.~\ref{XYTwirlableDef}), there exists a $\Lambda \subseteq \mathbb{P}_1^{\otimes 2}$ such that:
    $\forall \hat{\gamma}_1, \hat{\gamma}_2 \in Q$,
    \begin{align}
        \label{eqn:FinalLambdaConditionUse}
        \sum_{\hat{\lambda} \in \Lambda} \left( \xi \left( \hat{\gamma}_1 \circ \hat{\gamma}_2, \hat{\lambda} \right) \right)
        &=
        \delta_{\hat{\gamma}_1, \hat{\gamma}_2} \big \vert \Lambda \big \vert.
    \end{align}
    
    We now show that, for any unflippable set, $\Lambda = \{ \hat{I} \otimes \hat{I}, \hat{X} \otimes \hat{X}, \hat{Y} \otimes \hat{Y}, \hat{Z} \otimes \hat{Z} \}$ meets the required conditions (to be a valid $\Lambda$ in Eqn.~\eqref{eqn:FinalLambdaConditionUse}). We split proving Eqn.~\eqref{eqn:FinalLambdaConditionUse} holds into two cases:
    \begin{enumerate}
        \item $\hat{\gamma}_1 \not = \hat{\gamma}_2$
        \item $\hat{\gamma}_1 = \hat{\gamma}_2$
    \end{enumerate}
    In the first case, Lemma \ref{compositionLemma} shows: $\forall \hat{\gamma}_1, \hat{\gamma}_2 \in Q \subseteq \mathbb{P}^{\otimes 2}_1$,
    \begin{align} 
    \hat{\gamma}_1 \circ \hat{\gamma}_2 \in \big \{ \hat{X} \otimes \hat{Y}, \hat{X} \otimes \hat{Z}, \hat{Y} \otimes \hat{X}, \hat{Y} \otimes \hat{Z}, \hat{Z} \otimes \hat{X}, \hat{Z} \otimes \hat{Y}, \hat{I} \otimes \hat{X}, \hat{I} \otimes \hat{Y}, \hat{I} \otimes \hat{Z}, \hat{X} \otimes \hat{I}, \hat{Y} \otimes \hat{I}, \hat{Z} \otimes \hat{I} \big \}.
    \end{align}
    
    The summation in Eqn.~\eqref{eqn:FinalLambdaConditionUse} can be shown to equate to zero in this case by summing each column in Table~\ref{signsTable} and checking they \emph{all} sum to zero \footnote{To make Table~\ref{signsTable} easier to read, as all entries are either +1 or -1, we just show the sign of the resulting $\xi$. Therefore, the table can be easily checked by confirming every column in Table~\ref{signsTable} contains exactly two positive signs and exactly two minus signs}.
    \begin{table}
\centering
%\newcolumntype{c}{>{\centering\arraybackslash}X}
\begin{tabular}{c|cccccccccccc}
$\xi$         & $\hat{X} \otimes \hat{Y}$ & $\hat{X} \otimes \hat{Z}$ & $\hat{Y} \otimes \hat{X}$ & $\hat{Y} \otimes \hat{Z}$ & $\hat{Z} \otimes \hat{X}$ & $\hat{Z} \otimes \hat{Y}$ & $\hat{I} \otimes \hat{X}$ & $\hat{I} \otimes \hat{Y}$ & $\hat{I} \otimes \hat{Z}$ & $\hat{X} \otimes \hat{I}$ & $\hat{Y} \otimes \hat{I}$ & $\hat{Z} \otimes \hat{I}$\\ 
\hline
$\hat{I} \otimes \hat{I}$ & +                            & +                            & +                           & +                              & +                             & +  
& + & + & +
& + & + & +\\
$\hat{X} \otimes \hat{X}$ & -                             & -                              & -                             &+                             & -                              & +
& + & - & - 
& + & - & -\\
$\hat{Y} \otimes \hat{Y}$ & -                              & +                             & -                              & -                             & +                             & - 
& - & + & -
& - & + & - \\
$\hat{Z} \otimes \hat{Z}$ & +                              & -                              & +                              & -                              & -                              & - & - & - & +
& - & - & +
\end{tabular}\caption{Table showing the signs of $\xi \left( \hat{A}, \hat{B} \right)$, where $\hat{A} = \hat{\gamma}_1 \circ \hat{\gamma}_2$ for some $\hat{\gamma}_1, \hat{\gamma}_2 \in Q$ (where $\hat{\gamma}_1 \not = \hat{\gamma}_2$) and $\hat{B} \in \Lambda$.\\
The elements of $\Lambda$ are along the left-most column and the elements of $\mathbb{D}$ (as Lemma~\ref{compositionLemma} shows $\hat{A} = \hat{\gamma}_1 \circ \hat{\gamma}_2 \in \mathbb{D}$) are along the top row. By checking that all columns sum to zero (i.e., have an equal number of positive and negative signs) we can confirm that, regardless of the value of $\hat{A} = \hat{\gamma}_1 \circ \hat{\gamma}_2$, if $\hat{\gamma}_1 \not = \hat{\gamma}_2$ Eqn~\ref{eqn:FinalLambdaConditionUse} is satisfied.}
\label{signsTable}
\end{table}

All that remains is the second case: that the summation is $\big \vert \Lambda \big \vert$ when $\hat{\gamma}_1 = \hat{\gamma}_2$. Lemma~\ref{identLemma} shows in this case $\hat{\gamma}_1 \circ \hat{\gamma}_2 = \hat{I}$, hence $\forall \hat{\lambda} \in \Lambda$, $\xi \left( \hat{\gamma}_1 \circ \hat{\gamma}_2, \hat{\lambda} \right) = \xi \left( \hat{I}, \hat{\lambda} \right) = 1$. So the summation in Eqn.~\eqref{eqn:FinalLambdaConditionUse} equates to $\big \vert \Lambda \big \vert$, as required.
\end{proof}

\section{Lemmas Used in the Proof of Theorem~\ref{XYAccCentralTheorem}}
\label{sec:ProofOfTheo5Lemmas}
\begin{lemma}
\label{noErrorTrap}
    An error-free implementation of a trap circuit returns all $\vert 0 \rangle$ with probability one.
\end{lemma}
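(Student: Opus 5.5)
The plan is to follow an error-free trap circuit from Algorithm~\ref{TrapGenAlgorithm} gate by gate and show that the whole unitary acting between state preparation and measurement reduces to a product of Pauli $Z$ gates.

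\emph{Step 1 (vanishing blocks).} After steps 1--3 of Algorithm~\ref{TrapGenAlgorithm}, the circuit consists of $\vert 0 \rangle^{\otimes N}$ preparations, computational-basis measurements, and $1$-vanishing blocks. By Lemma~\ref{IdealJOneLemma}, every error-free $1$-vanishing block equals the identity, whatever its twirling Paulis $\hat{\mathcal{P}}_1,\hat{\mathcal{P}}_2$ are. The twirl of Sec.~\ref{subsec:ImprovedXYTwirling} replaces $\exp(-i\hat H t/2)$ by $\hat P\exp(-i\Xi(\hat P)t/2)\hat P$. By Eqn.~\eqref{useOfXiEquation}, this replacement is ideally equal to the original gate, so Lemma~\ref{IdealJOneLemma} still applies. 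Replacing every block by $\hat I$ therefore leaves only the single-qubit gates inserted in steps 4--7.

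\emph{Step 2 (what remains on each qubit).} Each qubit now carries $\hat{Z}^{b}\hat{H}^{h}$ just after preparation and $\hat{H}^{h}\hat{Z}^{b'}$ just before measurement. Here $h\in\{0,1\}$ is shared by all qubits, and $b,b'\in\{0,1\}$ are chosen independently for each qubit. The ordering within each pair depends on how step 7 merges the gates, but any ordering works, because $\hat{H}\hat{Z}\hat{H}=\hat{X}$ and Paulis can be moved past one another up to phase. The ideal state just before measurement is therefore $\hat{H}^{h}\hat{Z}^{b'}\hat{Z}^{b}\hat{H}^{h}\vert 0\rangle$ on each qubit, up to phase.

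\emph{Step 3 (case split on $h$).} If $h=0$, this state is $\hat{Z}^{b+b'}\vert 0\rangle=\pm\vert 0\rangle$. If $h=1$, it is $\hat{H}\hat{Z}^{b+b'}\vert +\rangle$, which is either $\hat{H}\vert+\rangle=\vert 0\rangle$ or $\hat{H}\vert-\rangle=\vert1\rangle$. A computational-basis measurement gives a deterministic outcome in both cases.

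\emph{Main obstacle.} The $h=1$ case with $b\neq b'$ is the difficulty, since it gives $\vert 1\rangle$ rather than $\vert 0\rangle$. This contradicts the literal statement unless the $Z$ gates are placed so that they cancel: either inside the Hadamard sandwich, giving $\hat H \hat Z \hat Z \hat H$, or with the same random bit $b=b'$ used for both layers. I would first try to argue that the merging in step 7 implements exactly such a Pauli-frame randomisation. If that fails, I would prove the weaker claim that the ideal output is a deterministic string $m$ computable from the recorded random bits $(h,b,b')$. Condition $\mathbf{2d}$ only requires the latter, and relabelling that string gives the ``all $\vert0\rangle$'' form of the lemma.
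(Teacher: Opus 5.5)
\textbf{There is a genuine gap, in Step 2: you put the random $\hat{Z}$ gates on the wrong side of the Hadamard layers.}

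In Algorithm~\ref{TrapGenAlgorithm}, step 4 first places $\hat{H}^{h}$ directly after preparation and directly before measurement. Steps 5 and 6 then insert $\hat{Z}$ \emph{immediately} before measurement and \emph{immediately} after preparation. That puts each $\hat{Z}$ between the preparation or measurement and the Hadamard. Step 7 only merges neighbouring gates; it does not reorder them.

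So on each qubit the circuit is:
\begin{itemize}
\item prepare $\vert 0\rangle$, then $\hat{Z}^{b}$, then $\hat{H}^{h}$;
\item the $1$-vanishing blocks;
\item $\hat{H}^{h}$, then $\hat{Z}^{b'}$, then measure.
\end{itemize}
This is exactly the paper's expression $\big\vert\langle 0\vert^{\otimes N}\bar{Z}\hat{H}^{h}\prod_{k}\tilde{\nu}_k\hat{H}^{h}\bar{Z}\vert 0\rangle^{\otimes N}\big\vert^2$. Since $\hat{Z}\vert 0\rangle=\vert 0\rangle$ and $\langle 0\vert\hat{Z}=\langle 0\vert$, the $\hat{Z}$ gates drop out whether or not $b=b'$. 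With every $1$-vanishing block equal to $\hat{I}$ (Lemma~\ref{IdealJOneLemma}, as in your Step 1, which is fine), you are left with $\vert\langle 0\vert\hat{H}^{h}\hat{H}^{h}\vert 0\rangle\vert^2=1$. That is the whole proof.

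Your state $\hat{H}^{h}\hat{Z}^{b'}\hat{Z}^{b}\hat{H}^{h}\vert 0\rangle$ instead has the $\hat{Z}$'s inside the Hadamard sandwich. That is why you obtain $\vert 1\rangle$ when $h=1$ and $b\neq b'$. If you meant ``$\hat{Z}^{b}\hat{H}^{h}$'' in left-to-right circuit order, then composing consistently gives $\hat{Z}^{b'}\hat{H}^{h}\hat{H}^{h}\hat{Z}^{b}\vert 0\rangle=\vert 0\rangle$, not what you wrote.

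Your justification that ``any ordering works, because $\hat{H}\hat{Z}\hat{H}=\hat{X}$ and Paulis can be moved past one another up to phase'' is false for this lemma. Moving $\hat{Z}$ through $\hat{H}$ turns it into $\hat{X}$, which flips a computational-basis outcome; your own Step 3 exhibits this. The ordering preserves determinism of the output, but not the all-zeros outcome the lemma asserts.

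Your ``main obstacle'' is therefore an artifact of the misordering, and neither proposed repair is what is needed:
\begin{itemize}
\item Putting the $\hat{Z}$'s ``inside the Hadamard sandwich, giving $\hat{H}\hat{Z}\hat{Z}\hat{H}$'' is backwards; the point is that they sit outside.
\item Correlating $b=b'$ is not required.
\item Your fallback, that the ideal output is a deterministic string computable from $(h,b,b')$, could serve condition $\mathbf{2d}$ if the expected string were tracked trap by trap. It does not prove the lemma as stated, and ``relabelling'' changes the statement rather than proving it.
\end{itemize}
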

\begin{proof}
    Let $\Tilde{\nu}_k$ denote the $k$th band of vanishing blocks in the target circuit and assume the circuit has $N$ qubits. Then the probability of measuring $\vert 0 \rangle^{\otimes N}$ is:
    \begin{align}
        \mathbb{P} \left( \vert 0 \rangle^{\otimes N} \right)
        &=
        \big \vert \langle 0 \vert^{\otimes N} \Bar{Z} H^h \prod_{k = B}^1 \left( \Tilde{\nu}_k \right) H^h \Bar{Z}\vert 0 \rangle^{\otimes N} \big \vert^2,
    \end{align}
    where $\Bar{Z}$ denotes a Pauli randomly and independently -- with probability $0.5$ -- on each qubit, $h$ is uniformly from $\{ 0, 1\}$, and $B$ is the number of bands in the circuit.
    But as all vanishing blocks in a trap are $1$-vanishing blocks, $\forall k \in \mathbb{N}^{\leq B}$, $\Tilde{\nu}_k = \hat{I}$. Therefore,
    \begin{align}
        \mathbb{P} \left( \vert 0 \rangle^{\otimes N} \right)
        &=
        \big \vert \langle 0 \vert^{\otimes N} \Bar{Z} H^h H^h \Bar{Z}\vert 0 \rangle^{\otimes N} \big \vert^2
        =
        1.
    \end{align}
\end{proof}
With the measurement outcomes in the error-free cases of traps established, we can consider what happens in a trap when error is present. 

\begin{lemma}
    \label{XYStochasticInAcc}
    Assuming $\mathbf{N1}$ and $\mathbf{N2}$, all error in traps generated by Algorithm~\ref{TrapGenAlgorithm} can be considered as stochastic Pauli error occurring outside of vanishing blocks.
\end{lemma}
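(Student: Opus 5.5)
The plan is to assemble three ingredients: the SPAM twirls, the twirl of the two-qubit gates inside each vanishing block, and the fact that single-qubit errors can be absorbed into adjacent two-qubit errors. After that, Pauli error is pushed out of each block using the ideal block identity.

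\textbf{Step 1: SPAM error.} Errors in state preparation and measurement are handled first. Algorithm~\ref{TrapGenAlgorithm} applies a Pauli $\hat{Z}$ with probability $0.5$ immediately after each preparation and immediately before each measurement. This is exactly the construction of Lemmas~\ref{statePrepTwirlLemma} and~\ref{measTwirlError} with $\hat{\tau}=\hat{I}$, possibly after merging with the optional Hadamard layers. By those lemmas, SPAM error is reduced to stochastic Pauli error ($\hat{X}$-type) located immediately after preparation and immediately before measurement, which is outside every vanishing block. The Hadamard layers are the same in every circuit execution up to a classical coin, so by $\mathbf{N2}$ they do not change the error.

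\textbf{Step 2: error inside each block.} Next, consider a single $1$-vanishing block. Using $\mathbf{N1}$ and $\mathbf{N2}$, the error of each single-qubit gate in the block is gate-independent. Following the approach of Lemma~\ref{StochErrorInVanishing}, I fold it into the CPTP error of the adjacent two-qubit gate $\exp(-i\hat{H}t/2)$, giving maps $\mathcal{E}_1,\mathcal{E}_2$ as in Fig.~\ref{fig:jVanishwithCPTP}.

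The random Paulis $\hat{\mathcal{P}}_1,\hat{\mathcal{P}}_2$ commute through the two-qubit gates by Lemma~\ref{commutingLemmaRestate}, so they surround each $\mathcal{E}_i$ as in Fig.~\ref{surroundErroreXY}. Theorem~\ref{TwoFoldTwirlTheorem} then turns each $\mathcal{E}_i$ into stochastic Pauli error. For general CPTP error I would instead use Lemma~\ref{extraAssumptionXYTwirlLemma}, via the Note after Lemma~\ref{StochErrorInVanishing}. The outcome is the situation of Fig.~\ref{fig:PostTwirledError}: an ideal block interleaved with stochastic Paulis $\hat{\mathcal{Q}}_1,\dots,\hat{\mathcal{Q}}_4$.

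\textbf{Step 3: pushing error out of the block.} I expect this to be the main obstacle. A Pauli sitting between the two halves of the block, e.g.\ after the first $\exp(-i\hat{H}t/2)$, must be pushed through the second half. A general two-qubit Pauli is \emph{not} normalized by $\exp(-i\hat{H}t/2)$; only $\hat{X}^{\otimes2},\hat{Y}^{\otimes2},\hat{Z}^{\otimes2}$ and products with $\hat{Z}\otimes\hat{I}$-type elements behave well.

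To handle this, I will model each $\mathcal{E}_i$ as acting \emph{before} its gate for $i=1$ and \emph{after} its gate for $i=2$, which is permitted as noted in the proof of Lemma~\ref{StochErrorInVanishing}. Then the twirled Pauli errors sit at the block's input and output boundaries rather than in its middle. Whatever lies between the twirled errors is the ideal $1$-vanishing block, which equals $\hat{I}$ by Lemma~\ref{IdealJOneLemma}. Hence the errors act before and after an identity, i.e.\ outside the block. If the error of the central single-qubit $\hat{Z}$ gate must be accounted for separately, I will fold it into $\mathcal{E}_1$ on the other side so that the ordering argument still holds.

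\textbf{Step 4: combining.} Finally, iterate over all blocks. Each block's error now sits on wires between blocks, together with the SPAM error from Step 1. The twirling randomness of different blocks is independent, so the product of these stochastic channels is again stochastic Pauli error, all located outside vanishing blocks. This proves the lemma.
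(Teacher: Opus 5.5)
Your proposal is correct and is essentially the paper's argument. The paper's proof cites Lemma~\ref{StochErrorInVanishing} for the gate errors, and that lemma's proof is exactly your scheme: fold the single-qubit errors in, place $\mathcal{E}_1$ before the first and $\mathcal{E}_2$ after the second $\exp(-i\hat{H}t/2)$, then twirl. It cites Lemmas~\ref{statePrepTwirlLemma} and~\ref{measTwirlError} with $\hat{\tau}=\hat{I}$ for SPAM, so your Steps~2--3 simply unpack the cited lemma.

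As you acknowledge, the gate twirl only works for XY-twirlable error or under $\mathbf{N3}$. Both your argument and the paper's therefore rely on more than the $\mathbf{N1}$ and $\mathbf{N2}$ named in the statement.
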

\begin{proof}
    Due to Lemma~\ref{StochErrorInVanishing}, all error not occurring in state preparation or measurement can be considered as stochastic Pauli error occurring outside of vanishing blocks.
    
    Lemmas~\ref{statePrepTwirlLemma} and~\ref{measTwirlError} (in Appendix~\ref{TwirlingTauDeets}) show (by setting $\hat{\tau} = \hat{I}$) that the Pauli $\hat{Z}$ gates added in steps 5 and 6 of Algorithm~\ref{TrapGenAlgorithm} twirl all error occurring in state preparation or measurement to stochastic Pauli $\hat{X}$ error (which is outside the vanishing blocks).
\end{proof}

\begin{lemma}
    \label{XYStochasticDetectLemma}
    All stochastic Pauli error occurring outside the vanishing blocks in a trap generated by Algorithm~\ref{TrapGenAlgorithm} is detected by that traps with probability at least $0.5$.
\end{lemma}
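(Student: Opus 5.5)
The plan is to exploit the fact that, in the error-free case, every band of vanishing blocks in a trap is the identity (Lemma~\ref{IdealJOneLemma}). All stochastic Pauli error lying outside the vanishing blocks (Lemma~\ref{XYStochasticInAcc}) can therefore be pushed, unchanged, through the entire bulk of the trap to sit immediately before the final layer of single-qubit gates. Concretely, write the erroneous trap as
\begin{align}
\Bar{Z}_{\mathrm{out}} H^h \hat{E}_B \Tilde{\nu}_B \cdots \hat{E}_1 \Tilde{\nu}_1 \hat{E}_0 H^h \Bar{Z}_{\mathrm{in}},
\end{align}
where each $\hat{E}_k \in \mathbb{P}_1^{\otimes N}$ is a (stochastically chosen) Pauli string. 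Because every ideal $\Tilde{\nu}_k = \hat{I}$, the product collapses to a single Pauli string $\hat{E} = \hat{E}_B \cdots \hat{E}_0$, up to phase. Moreover, if any error occurred, then either some $\hat{E}_k \neq \hat{I}$, or the accumulated product may be $\hat{I}$. I would handle the possible cancellation case explicitly, exactly as in Ref.~\cite{Ferracin_2021}. There, ``error occurring'' in the sense of Def.~\ref{def:probOfError} is bounded via the effective net Pauli. For stochastic Pauli error that composes to the identity, the output is unaffected, and Lemma~\ref{stochAndVDLemma} only needs the probability of a nontrivial net effect. So I will bound detection for every nontrivial net $\hat{E}$.

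Next, I would absorb the random $\hat{Z}$ layers. Conjugating $\hat{E}$ by $\Bar{Z}$ only changes its phase, so the trap outcome is just $\vert \langle 0 \vert^{\otimes N} H^h \hat{E} H^h \vert 0 \rangle^{\otimes N} \vert^2$. Write $\hat{E} = \bigotimes_q \hat{\sigma}_q$.
\begin{itemize}
\item[$\bullet$] If $h = 0$, the trap outputs the wrong string whenever some $\hat{\sigma}_q \in \{\hat{X}, \hat{Y}\}$.
\item[$\bullet$] If $h = 1$, we have $H \hat{\sigma} H$ with $\hat{Z} \leftrightarrow \hat{X}$ and $\hat{Y} \mapsto -\hat{Y}$, so the trap fails whenever some $\hat{\sigma}_q \in \{\hat{Z}, \hat{Y}\}$.
\end{itemize}
Any nontrivial $\hat{E}$ has some nonidentity factor, and that factor lies in at least one of the two sets. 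Hence at least one of the two equiprobable values of $h$ flags it, giving detection probability at least $1/2$.

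The main obstacle is justifying that the stochastic error from Lemma~\ref{StochErrorInVanishing} commutes past the ideal blocks without being transformed. The blocks sit between error locations, and a block's ideal action is the identity only as a whole, not gate by gate. I would lean on Lemma~\ref{StochErrorInVanishing}, which places all error before and after whole blocks (with the $\hat{\mathcal{P}}_j$ removed), so pushing through an identity block is trivial. I would also need the independence of the $h$ and $\Bar{Z}$ choices from the error, which is supplied by $\mathbf{N2}$.
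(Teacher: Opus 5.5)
Your core detection argument matches the paper's. The $1$-vanishing blocks are the identity (Lemma~\ref{IdealJOneLemma}), so the Pauli errors collapse to one net Pauli, caught through its $\hat{X}$/$\hat{Y}$ components when $h=0$ and its $\hat{Z}$/$\hat{Y}$ components when $h=1$. The paper simply says it is ``neglecting error cancelling''. You go further and try to justify discarding the cancelling case, and that step fails.

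\textbf{Why the cancellation argument fails.} Lemma~\ref{stochAndVDLemma} is applied to the \emph{target}. It bounds the target's ideal-actual variation distance by the probability that \emph{any} error occurs there. Condition $\mathbf{3}$ transfers the error distribution from traps to target, not the net effect of a given error pattern. A pattern that cancels in the trap need not cancel in the target.

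For example, take an $\hat{X}$ error on qubit $q$ (the first qubit of a block) immediately before and immediately after that block. In the trap it composes to $\hat{I}$ for both values of $h$, so that trap never fails. In the target the same block is $\exp(-i\hat{H}t)$. Since $(\hat{X}\otimes\hat{I})\hat{H}(\hat{X}\otimes\hat{I})=\hat{X}_1\hat{X}_2-\hat{Y}_1\hat{Y}_2$, the block becomes $\exp(-it(\hat{X}_1\hat{X}_2-\hat{Y}_1\hat{Y}_2))$, which generally changes the output.

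So the probability of a non-identity net Pauli in the trap can be strictly smaller than the probability of error in the target. Hence $k=1/2$ in condition $\mathbf{2f}$ does not follow, and the lemma as literally stated is false for such patterns.

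\textbf{A second instance hidden in your expression.} The twirled preparation and measurement $\hat{X}$ errors of Lemma~\ref{XYStochasticInAcc} sit \emph{outside} the Hadamard layers, so the net Pauli between the layers depends on $h$. For example, a preparation $\hat{X}$ on $q$ together with a bulk $\hat{Y}$ on $q$ leaves the outcome on $q$ unflipped for both $h=0$ and $h=1$.

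What your argument, like the paper's, actually establishes is detection probability at least $1/2$ for error patterns with no such cancellation. Closing the gap needs something other than an appeal to Lemma~\ref{stochAndVDLemma}. One option is additional randomness in the trap construction that makes any fixed pattern cancel with probability bounded away from one.
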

\begin{proof}
    Traps generated by Algorithm~\ref{TrapGenAlgorithm} feature only $1$-vanishing blocks and Hadamard gates (if they are present).
    
    As $1$-vanishing blocks are equivalent to the identity (as shown in Lemma~\ref{IdealJOneLemma}), a trap generated by Algorithm~\ref{TrapGenAlgorithm} experiencing stochastic Pauli error occurring outside a vanishing block is equivalent to just the stochastic Pauli error and possibly the Hadamard gates added in step 4 of Algorithm~\ref{TrapGenAlgorithm}. 
    
    Hence, neglecting error cancelling and using Lemma~\ref{allDetectLemma} (by setting $\hat{\tau}_1 = \hat{I}$), all stochastic Pauli error in the trap can be moved to the end of the trap and pushed through the last layer of Hadamard gates (if they are present, which happens with probability $0.5$). Hence Pauli $\hat{X}$ error reaches the measurements and flips the outcome with probability $0.5$ and Pauli $\hat{Z}$ error is transformed to Pauli $\hat{X}$ before it reaches the measurements and flips the outcome with probability $0.5$ (Pauli $\hat{Y}$ error always flips the outcome regardless).
    
    Therefore, \emph{any} Pauli error occurring in a trap is detected by that trap with probability at least $0.5$. 
\end{proof}
\end{document}